\newcommand{\ii}{{x}}
\newcommand{\jj}{{y}}
\newcommand{\noder}{\mbox{Re}}
\newcommand{\nodey}{\mbox{Y}}
\newcommand{\nodea}{\mbox{Ag}}
\newcommand{\nodev}{\mbox{Va}}
\newcommand{\nodec}{\mbox{Co}}
\newcommand{\nodeh}{\mbox{H}}
\newcommand{\G}{\mathcal{G}}
\newcommand{\X}{\mathfrak{X}}
\newcommand{\PP}{\mathbb{P}}
\DeclareMathOperator{\ing}{ing}
\DeclareMathOperator{\tail}{tail}
\DeclareMathOperator{\pa}{pa}
\DeclareMathOperator{\an}{an}
\DeclareMathOperator{\dec}{de}
\DeclareMathOperator{\barren}{barren}
\DeclareMathOperator{\dis}{dis}
\DeclareMathOperator{\mbl}{mb}
\newcommand\indep{\protect\mathpalette{\protect\independenT}{\perp}}
\def\independenT#1#2{\mathrel{\rlap{$#1#2$}\mkern2mu{#1#2}}}
\newtheoremstyle{break}
  {18pt}{9pt}{\itshape}% Body font
  {}% Indent
  {\bfseries}% Thm head font
  {.}%        Punctuation after thm head
  {.5em}% Space after thm head: \newline = linebreak
  {}%         Thm head spec
\newtheoremstyle{breakdef}
  {12pt}{9pt}{}% Body font
  {}% Indent
  {\bfseries}% Thm head font
  {.}%        Punctuation after thm head
  {.5em}% Space after thm head: \newline = linebreak
  {}%         Thm head spec
\theoremstyle{break}
\newtheorem{lemma}{Lemma}[section]
\newtheorem{theorem}[lemma]{Theorem}
\newtheorem{proposition}[lemma]{Proposition}
\theoremstyle{breakdef}
\newtheorem{definition}[lemma]{Definition}
\newtheorem{example}[lemma]{Example}
\title{Marginal log-linear parameters for graphical Markov models}
\author{ {\bf Robin J.~Evans%\thanks{}
} \\  Department of Statistics \\  University of Washington\\ \tt{rje42@stat.washington.edu}\\
\and {\bf Thomas S.~Richardson}  \\ Department of Statistics \\ University of Washington\\
\tt{tsr@stat.washington.edu}\\
}
\begin{document}

\maketitle

\begin{abstract}
Marginal log-linear (MLL) models provide a flexible approach to multivariate discrete data. MLL parametrizations under linear constraints induce a wide variety of models, including models defined by conditional independences. 
We introduce a sub-class of MLL models which correspond to Acyclic Directed Mixed Graphs (ADMGs) under the usual global Markov property. We characterize for precisely which graphs the resulting parametrization is variation independent.
The MLL approach provides the first description of ADMG models in terms of a minimal list of constraints. The parametrization is also easily adapted to sparse modelling techniques, which we illustrate using several
examples of real data.
\end{abstract}

\begin{center}
\textbf{Keywords:} acyclic directed mixed graph; discrete graphical model; marginal log-linear parameter; parsimonious modelling; variation independence.
\end{center}

\section{Introduction}

Models defined by conditional independence constraints are central to many methods in multivariate statistics, and in particular to graphical models \citep{darroch:80, whittaker:90}.  In the case of discrete data, \emph{marginal log-linear} (MLL) parameters can be used to parametrize a broad range of models, including some graphical classes and models for conditional independence \citep{rudas:10, forcina:10}.  
These parameters are defined by considering a sequence, $M_1, M_2, \ldots, M_k$, of margins of the distribution which respects inclusion (i.e.\ $M_i$ precedes $M_j$ if $M_i \subset M_j$), with each such sequence giving rise to a smooth parametrization of the saturated model.  Useful sub-models can be induced by setting some of the parameters to zero, or more generally by restricting attention to a linear or affine subset of the parameter space.

%%%%
The flexibility present in this scheme presents a challenge both in terms of interpreting the resulting model
and performing model selection, for which a tractable search space is typically required. 
We describe a sub-class of marginal log-linear models corresponding to a class of graphs known as \emph{acyclic directed mixed graphs} (ADMGs), which contain directed ($\rightarrow$) and bidirected ($\leftrightarrow$) edges, subject to the constraint that there are no cycles of directed edges; an example is given in Figure \ref{fig:exm}.
The relationship between the MLL models and ADMGs is analogous to that between ordinary log-linear models and undirected graphs: log-linear models give a very rich class of models to choose from, since their number grows doubly-exponentially as the number of variables increases; undirected graphs provide a natural and more manageable subset of models with which to work \citep{darroch:80}.

\begin{figure}
\begin{center}
 \begin{tikzpicture}
 [rv/.style={circle, draw, very thick, minimum size=7mm}, node distance=25mm, >=stealth]
 \pgfsetarrows{latex-latex};
 \node[rv] (1) {1};
 \node[rv, right of=1] (2) {2};
 \node[rv, above of=2, xshift=12.5mm, yshift=-8mm] (3) {3};
 \node[rv, right of=2] (4) {4};
 \draw[->, very thick, color=blue] (1) -- (2);
 \draw[->, very thick, color=blue] (2) -- (4);
 \draw[<->, very thick, color=red] (2) -- (3);
 \draw[<->, very thick, color=red] (3) -- (4);
% \node[below of=2, yshift=15mm] {(a)};
%%
% \node[rv, right of=4, xshift=10mm, yshift=20mm] (1a) {1};
% \node[rv, right of=1a] (2a) {2};
% \node[rv, below of=1a] (3a) {3};
% \node[rv, right of=3a] (4a) {4};
% \draw[->, very thick, color=blue] (1a) -- (3a);
% \draw[->, very thick, color=blue] (2a) -- (4a);
%% \draw[<->, very thick, color=red] (3) -- (4);
% \draw[<->, very thick, color=red] (1a) -- (4a);
% \draw[<->, very thick, color=red] (2a) -- (3a);
% \node[below of=3a, xshift=12.5mm, yshift=15mm] {(b)};
 \end{tikzpicture}
 \end{center}
\caption{An acyclic directed mixed graph, $\G_1$.}
%\caption{Two acyclic directed mixed graphs, (a) $\G_1$ and (b) $\G_2$.}
\label{fig:exm}
\end{figure}
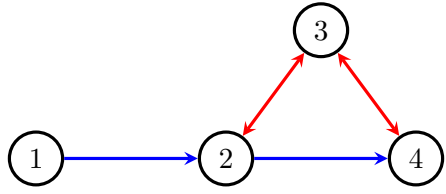

 The patterns of independence described by ADMGs arise naturally in the context of generating processes in which not all variables are observed.  To illustrate this, consider the randomized encouragement design carried out by
  \citet{mcdonald:92} to investigate the effect of computer reminders for doctors on take-up of influenza vaccinations, and consequent morbidity in patients. The study involved 2,861 patients; here we focus on the following fields:
{\it 
\begin{description}
  \item[($\noder$)] patient's doctor sent a card asking to {\textbf{Re}}mind them about flu vaccine (randomized);
  \item[($\nodev$)] patient \textbf{Va}ccinated against influenza;
  \item[($\nodey$)] the endpoint: patient was \emph{not} hospitalized with flu;
  \item[($\nodea$)] \textbf{Ag}e of patient: 0 = `65 and under',  1 = `over 65';
  \item[($\nodec$)] patient has \textbf{C}hronic \textbf{O}bstructive Pulmonary Disease (COPD), as measured at baseline.
\end{description}
}

The graphs in Figure \ref{fig:flu} represent two possible data generating processes. Under both structures, whether
or not a patient's doctor received a reminder note is independent of the baseline variables age ($\nodea$) and COPD status ($\nodec$), as would be expected under randomization. Further the absence of an edge $\noder \rightarrow \nodey$ encodes the assumption that whether or not a reminder ($\noder$) was received only influences the final outcome ($\nodey$) via whether or not a patient received a flu vaccination ($\nodev$). Both structures also assume that there are unobserved confounding factors between
vaccination and COPD, and between COPD and the final outcome. However, the graph in Figure \ref{fig:flu}(b) supposes that there is no additional confounding between $\nodev$ and $\nodey$. As a consequence the generating process given in (b) implies the additional restriction that ${\noder} \indep {\nodey} \mid {\nodev}, {\nodea}$. (We make no assumptions about the state spaces of the variables $\nodeh$, $\nodeh_1$ and $\nodeh_2$, since these factors are unobserved.)
%$\lambda_{{\noder}{\nodey}}^{{\noder}{\nodey}{\nodea}}$
%\begin{align*}
%\lambda_{{\noder}}^{{\noder}} \qquad \lambda_{{\nodea}}^{{\nodea}} \qquad \lambda_{{\nodev}}^{{\noder}{\nodea}{\nodev}} \qquad \lambda_{{\nodev}{\noder}}^{{\noder}{\nodea}{\nodev}} \qquad \lambda_{{\nodev}{\nodea}}^{{\noder}{\nodea}{\nodev}} \qquad \lambda_{{\nodev}{\nodea}{\noder}}^{{\noder}{\nodea}{\nodev}}
%\lambda_{{\noder}{\nodey}}^{{\noder}{\nodey}{\nodea}}
%\end{align*}

In Figure \ref{fig:flu2} we show the ADMGs corresponding to the generating processes in Figure \ref{fig:flu}. These graphs only contain observed variables, but by including bidirected edges ($\leftrightarrow$) they encode the same observable conditional independence relations; see \S \ref{subsec:globalmarkov} for details.

All the work herein can easily be extended to graphs which also contain an undirected component, provided no undirected edge is adjacent to an arrowhead.  This latter case is equivalent to the summary graphs of \citep{wermuth:11}, and
strictly includes all ancestral graphs \citep{richardson:02}.
Our approach may be seen as extending earlier work \citep{rudas:06,rudas:10,forcina:10} which described the conditional independence structure of certain marginal log-linear models.

\begin{figure}
\begin{center}
 \begin{tikzpicture}
 [rv/.style={ellipse, draw, very thick, minimum size=7mm, inner sep=1mm}, node distance=25mm, >=stealth]
 \pgfsetarrows{latex-latex};
  \begin{scope}
 \node[rv] (1) {$\noder$};
 \node[rv, right of=1] (2) {$\nodev$};
  \node[below of=2, yshift=15mm] {(a)};
 \node[rv, right of=2] (3) {$\nodey$};
 \node[rv, above of=1, yshift=0mm] (4) {$\nodea$};
 \node[rv, above of=2, xshift=12mm, yshift=0mm] (5) {$\nodec$};
  \node[rv, below of=5,yshift=11mm,color=red] (H) {$\nodeh$};
% \draw[<->, very thick, color=red] (5) -- (2);
 %\draw[<->, very thick, color=red] (5) -- (3);
 \draw[->, very thick, color=blue] (1) -- (2);
 \draw[->, very thick, color=red] (H) -- (2);
 \draw[->, very thick, color=red] (H) -- (5);
  \draw[->, very thick, color=red] (H) -- (3);
 \draw[->, very thick, color=blue] (2) -- (3);
 \draw[->, very thick, color=blue] (4) -- (2);
 \draw[->, very thick, color=blue] (4) -- (5);
 % \draw[->, very thick, color=blue] (4) to[bend right=45] (3);
 % \draw[->, very thick, color=blue] (4) .. controls (angle=90) and +(right:2cm) .. (3);
% \path (4)  edge[->, out=60, in=90,very thick, color=blue]   (3);
% \draw[->, very thick, color=red] (4) .. controls (1,1) and (1,1) .. (3);
%\node[above of=3,xshift=3mm,yshift=-12mm] (2b) {}; % introduced to get the  A -> H edge right
%\draw[->, very thick, color=blue] (4) .. controls (angle=90) and +(right:2cm) .. (3);
\draw[->, very thick, color=blue] (4.40) .. controls +(20:3.5) and +(90:3) .. (3.80);
%\draw[->, very thick, color=blue] (4) to[out=45,in=80] (2b) to (3);
\end{scope}
 \begin{scope}[xshift=8cm]
 \node[rv] (1) {$\noder$};
 \node[rv, right of=1] (2) {$\nodev$};
  \node[below of=2, yshift=15mm] {(b)};
 \node[rv, right of=2] (3) {$\nodey$};
 \node[rv, above of=1, yshift=0mm] (4) {$\nodea$};
 \node[rv, above of=2, xshift=12.5mm, yshift=0mm] (5) {$\nodec$};
  \node[rv, above right of=2,xshift=-12mm,yshift=-5mm,color=red] (H1) {$\nodeh_1$};
    \node[rv, above left of=3,xshift=12mm, yshift=-5mm,color=red] (H2) {$\nodeh_2$};
% \draw[<->, very thick, color=red] (5) -- (2);
 %\draw[<->, very thick, color=red] (5) -- (3);
 \draw[->, very thick, color=blue] (1) -- (2);
 \draw[->, very thick, color=red] (H1) -- (2);
 \draw[->, very thick, color=red] (H1) -- (5);
  \draw[->, very thick, color=red] (H2) -- (3);
   \draw[->, very thick, color=red] (H2) -- (5);
 \draw[->, very thick, color=blue] (2) -- (3);
 \draw[->, very thick, color=blue] (4) -- (2);
 \draw[->, very thick, color=blue] (4) -- (5);
 % \draw[->, very thick, color=blue] (4) to[bend right=45] (3);
 % \draw[->, very thick, color=blue] (4) .. controls (angle=90) and +(right:2cm) .. (3);
% \path (4)  edge[->, out=60, in=90,very thick, color=blue]   (3);
% \draw[->, very thick, color=red] (4) .. controls (1,1) and (1,1) .. (3);
%\node[above of=3,xshift=3mm,yshift=-12mm] (2b) {}; % introduced to get the  A -> H edge right
%\draw[->, very thick, color=blue] (4) to[out=45,in=80] (2b) to (3);
\draw[->, very thick, color=blue] (4.40) .. controls +(20:3.5) and +(90:3) .. (3.80);
\end{scope}
  \end{tikzpicture}
  \end{center}
    \caption{Two different generating processes for the flu vaccine encouragement design (red vertices are unobserved): both graphs imply ${\noder} \indep {\nodea}, {\nodec}$; however (b) also implies ${\noder} \indep {\nodey} \mid {\nodev}, {\nodea} $.}
    \label{fig:flu}
\end{figure}

\begin{figure}
\begin{center}
 \begin{tikzpicture}
 [rv/.style={ellipse, draw, very thick, minimum size=7mm}, node distance=25mm, >=stealth]
 \pgfsetarrows{latex-latex};
  \begin{scope}
 \node[rv] (1) {$\noder$};
 \node[rv, right of=1] (2) {$\nodev$};
  \node[below of=2, yshift=15mm] {(a)};
 \node[rv, right of=2] (3) {$\nodey$};
 \node[rv, above of=1, yshift=-6mm] (4) {$\nodea$};
 \node[rv, above of=2, xshift=12mm, yshift=-6mm] (5) {$\nodec$};
 % \node[rv, below of=5,yshift=10mm,color=red] (H) {$\nodeh$};
% \draw[<->, very thick, color=red] (5) -- (2);
 %\draw[<->, very thick, color=red] (5) -- (3);
 \draw[->, very thick, color=blue] (1) -- (2);
 %\draw[->, very thick, color=red] (H) -- (2);
 %\draw[->, very thick, color=red] (H) -- (5);
  %\draw[->, very thick, color=red] (H) -- (3);
  \draw[<->, very thick, color=red] (5) -- (2);
   \draw[<->, very thick, color=red] (5) -- (3);
    \draw[<->, very thick, color=red] (2) to[out=30,in=150] (3);
 \draw[->, very thick, color=blue] (2) -- (3);
 \draw[->, very thick, color=blue] (4) -- (2);
 \draw[->, very thick, color=blue] (4) -- (5);
 % \draw[->, very thick, color=blue] (4) to[bend right=45] (3);
 % \draw[->, very thick, color=blue] (4) .. controls (angle=90) and +(right:2cm) .. (3);
% \path (4)  edge[->, out=60, in=90,very thick, color=blue]   (3);
% \draw[->, very thick, color=red] (4) .. controls (1,1) and (1,1) .. (3);
%\node[above of=3,xshift=3mm,yshift=-12mm] (2b) {}; % introduced to get the  A -> H edge right
%\draw[->, very thick, color=blue] (4) to[out=45,in=80] (2b) to (3);
\draw[->, very thick, color=blue] (4.40) .. controls +(20:3.5) and +(90:3) .. (3.80);
\end{scope}
 \begin{scope}[xshift=8cm]
 \node[rv] (1) {$\noder$};
 \node[rv, right of=1] (2) {$\nodev$};
  \node[below of=2, yshift=15mm] {(b)};
 \node[rv, right of=2] (3) {$\nodey$};
 \node[rv, above of=1, yshift=-6mm] (4) {$\nodea$};
 \node[rv, above of=2, xshift=12.5mm, yshift=-6mm] (5) {$\nodec$};
 % \node[rv, above right of=2,xshift=-12mm,yshift=-5mm,color=red] (H1) {${\nodeh}_1$};
  %  \node[rv, above left of=3,xshift=12mm, yshift=-5mm,color=red] (H2) {${\nodeh}_2$};
% \draw[<->, very thick, color=red] (5) -- (2);
 %\draw[<->, very thick, color=red] (5) -- (3);
 \draw[->, very thick, color=blue] (1) -- (2);
% \draw[->, very thick, color=red] (H1) -- (2);
 \draw[<->, very thick, color=red] (2) -- (5);
%  \draw[->, very thick, color=red] (H2) -- (3);
   \draw[<->, very thick, color=red] (3) -- (5);
 \draw[->, very thick, color=blue] (2) -- (3);
 \draw[->, very thick, color=blue] (4) -- (2);
 \draw[->, very thick, color=blue] (4) -- (5);
 % \draw[->, very thick, color=blue] (4) to[bend right=45] (3);
 % \draw[->, very thick, color=blue] (4) .. controls (angle=90) and +(right:2cm) .. (3);
% \path (4)  edge[->, out=60, in=90,very thick, color=blue]   (3);
% \draw[->, very thick, color=red] (4) .. controls (1,1) and (1,1) .. (3);
%\node[above of=3,xshift=3mm,yshift=-12mm] (2b) {}; % introduced to get the  A -> H edge right
%\draw[->, very thick, color=blue] (4) to[out=45,in=80] (2b) to (3);
\draw[->, very thick, color=blue] (4.40) .. controls +(20:3.5) and +(90:3) .. (3.80);
\end{scope}
  \end{tikzpicture}
  \end{center}
    \caption{Two ADMGs representing the conditional independence restrictions on the observed margin implied by the corresponding graphs in Figure \protect{\ref{fig:flu}.}}
    \label{fig:flu2}
\end{figure}
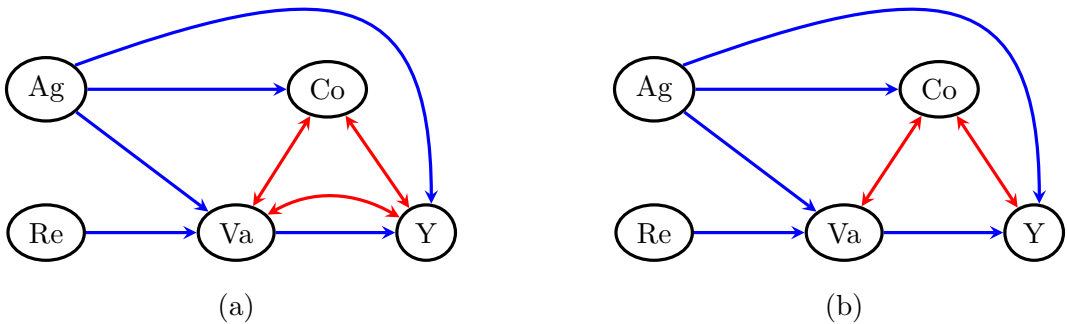

\subsection{ADMG Models}
\citet{richardson:03} described local and global Markov properties for ADMGs, while \citet{richardson:09} described
a parametrization for  discrete random variables via a collection of conditional probabilities of the form
$P(X_H=0\,|\, X_T=x_T)$.  However, although Richardson's parametrization is simple, it does not naturally lead to parsimonious sub-models. In addition, the parameters are subject to variation dependence constraints, in the sense that setting some parameters to particular values may restrict the valid range of other parameters; this makes maximum likelihood fitting, for example, more challenging \citep{evans:10}.  To illustrate this point, consider
 the graph $\G_1$ in Figure \ref{fig:exm} as an example; it encodes the model under which $X_1 \indep X_3$ and $X_4 \indep X_1 \,|\, X_2$.  Richardson's parametrization consists in this case (for binary random variables) of the probabilities
% the graph $\G_2$ in Figure \ref{fig:exm}(b) as an example; it encodes the model under which $X_1 \indep X_2$ and $X_3 \indep X_4 \,|\, X_1, X_2$.  Richardson's parametrization consists in this case (for binary random variables) of the probabilities
\begin{align*}
%&P(X_1 = 0) \qquad \qquad P(X_3 = 0 \;|\; X_1 = x_1) \qquad \qquad P(X_2 = 0, X_3 = 0 \,|\, X_1 = x_1)\\
%&P(X_2 = 0) \qquad \qquad 
%P(X_4 = 0 \;|\; X_2 = x_2) \qquad \qquad P(X_1 = 0, X_4 = 0 \,|\, X_2 = x_2)
&P(X_1 = 0) \qquad \qquad P(X_2 = 0 \;|\; X_1 = x_1) \qquad \qquad P(X_2 = 0, X_3 = 0 \,|\, X_1 = x_1)\\
&P(X_3 = 0) \qquad \qquad 
P(X_4 = 0 \;|\; X_2 = x_2) \qquad \qquad P(X_3 = 0, X_4 = 0 \,|\, X_1=x_1, \, X_2 = x_2)
\end{align*}
where $x_1, x_2 \in \{0,1\}$.  A disadvantage of this parametrization is that, for instance, the joint probabilities $P(X_2 = 0, X_3 = 0 \,|\, X_1 = x_1)$
are bounded above by the marginal probabilities  $P(X_2 = 0 \,|\, X_1 = x_1)$.
Consequently, from the point of view of parameter interpretation, it makes little sense to consider the joint probabilities in isolation.  For example, strong (conditional) correlation between $X_2$ and $X_3$ is present when the joint probability is large relative to the marginals.

However, replacing the joint probabilities $P(X_2 = 0, X_3 = 0 \,|\, X_1 = x_1)$ with the conditional odds ratios
\begin{align*}
& \frac{P(X_2 = 0, X_3 = 0 \;|\; X_1 = x_1) \cdot P(X_2 = 1, X_3 = 1 \;|\; X_1 = x_1)}{P(X_2 = 1, X_3 = 0 \;|\; X_1 = x_1) \cdot P(X_2 = 0, X_3 = 1 \;|\; X_1 = x_1)}, && x_1 \in \{0,1\}
%\frac{P(X_3 = 0, X_4 = 0 \;|\; X_1 = x_1, X_2 = x_2) \cdot P(X_3 = 1, X_4 = 1 \;|\; X_1 = x_1, X_2 = x_2)}{P(X_3 = 1, X_4 = 0 \;|\; X_1 = x_1, X_2 = x_2) \cdot P(X_3 = 0, X_4 = 1 \;|\; X_1 = x_1, X_2 = x_2)}
\end{align*}
(and similarly for $P(X_3 = 0, X_4 = 0 \,|\, X_1 = x_1, \, X_2 = x_2)$) yields a variation independent parametrization, the odds ratio measuring dependence without reference to marginal distributions.  
This means that if we wish to define a prior distribution over the univariate probabilities and the odds ratios, we may, if appropriate, simply use a product of univariate distributions; similarly, to fit a generalized linear model with these parameters as joint responses, we need only use simple univariate link functions. 
We will see that this approach to discrete parametrizations can be generalized using marginal log-linear parameters.

In Section \ref{sec:MLLs} we introduce marginal log-linear (MLL) parameters and some of their properties, while Section \ref{sec:ADMGs} gives background theory about ADMGs and the parametrization of \citet{richardson:09}.
The development of MLL parameters for ADMG models is presented in Section \ref{sec:ing}, resulting in a parametrization we refer to as \emph{ingenuous} (since it arises naturally, but `natural parametrization' already has a particular meaning).
We also show that this parametrization can always be embedded in a larger one corresponding to a complete graph and the saturated model, where some of the parameters in the bigger model are linearly constrained.  
In Section \ref{sec:vi} we classify for which models the ingenuous parametrization is variation independent,
since this can facilitate interpretation of the resulting coefficients.
%Section \ref{sec:kappa} gives a slightly different formulation of the parametrization, and demonstrates its advantages for parameter interpretation.  
In Section \ref{sec:sparse} we discuss approaches to sparse modelling using MLLs in the context of several additional
datasets and a simulation.
Longer proofs are in Section \ref{sec:proofs}.

\section{Marginal Log-Linear Parameters} \label{sec:MLLs}

We consider collections of random variables $(X_{v})_{ v\in V}$ with finite index set $V$, taking values in finite discrete probability spaces $({\X}_{v})_{v\in V}$ under a strictly positive probability measure $P$; without loss of generality, $\X_v = \{0, 1, \ldots, |\X_v|-1\}$.  For $A\subseteq V$ we let ${\X}_{A} \equiv \times_{v\in A} ({\X}_{v})$, ${\X}\equiv {\X}_{V}$ and similarly $X_{A}\equiv(X_{v})_{v\in A}$, $X\equiv X_{V}$  and $\ii_A \equiv (x_v)_{v \in A}$,
$\ii \equiv x_V$.
%
%For $\ii \in \X$, we denote by $\ii_A$ the sub-vector of $\ii$ consisting of those indices which belong to variables in $A$.  
In addition $\tilde \X$ is the subset of $\X$ which does not contain the last possible element in any co-ordinate; that is $\tilde\X_v = \{0, 1, \ldots, |\X_v|-2\}$, and $\tilde\X = \times_{v \in V} (\tilde\X_v)$.
We use $p_A(\ii_A) \equiv P(X_A = \ii_A)$ and $p_{A|B}(\ii_A \,|\, \ii_B) \equiv P(X_A = \ii_A \,|\, X_B = \ii_B)$, for particular instantiations of $\ii$. 
%Thus we write, for example,
%\begin{align*}
%p_{011} &\equiv P(X_1 = 0, X_2 = 1, X_3 = 1),\\
%p_{0\cdot 1} &\equiv \sum_{j \in \X_2} p_{0j1}\\
%	&= P(X_1 = 0, X_3 = 1).
%\end{align*}

Following \citet{br:02}, we define a general class of parameters on discrete distributions.  The definition relies upon abstract collections of subsets, so it may be helpful to the reader to keep in mind that the sets $M_i \in \mathbb{M}$ are margins, or subsets, of the distribution over $V$, and each set $\mathbb{L}_i$ is a collection of effects in the margin $M_i$.  A pair $(L, M_i)$ corresponds to a log-linear interaction over the set $L$, within the margin $M_i$.

\begin{definition} \label{definition:BRpre}
For $L \subseteq M \subseteq V$, the pair $(L, M)$ is an ordered pair of subsets of $V$.  Let $\PP$ be a collection of such pairs, and define
\begin{align*}
\mathbb{M} \equiv \{M \;|\; (L, M) \in \PP \mbox{ for some }L\},
\end{align*}
to be the collection of margins in $\PP$.
If $\mathbb{M} = \{M_1, \ldots, M_k\}$, write
\begin{align*}
\mathbb{L}_i \equiv \{L \;|\; (L, M_i) \in \PP\},
\end{align*}
for the set of effects present in the margin $M_i$.
We say that the collection $\PP$ is \emph{hierarchical} if the ordering on $\mathbb{M}$ may be chosen so that if $i < j$, then $M_j \nsubseteq M_i$ and also
$L \in \mathbb{L}_j \Rightarrow L \nsubseteq M_i$;
the second condition is equivalent to saying that each $L$ is associated only with the first margin $M$ of which it is a subset.  We say the collection is \emph{complete} if every non-empty subset of $V$ is an element of precisely one set $\mathbb{L}_i$.
\end{definition}

The term `hierarchical' is used because each log-linear interaction is defined in the first possible margin in an ascending class, and `complete' because all interactions are present.  Some authors \citep{rudas:10, lupparelli:09} consider only collections which are complete. 

\begin{definition}
For each $M \subseteq V$ and $\ii_M \in \X_M$, define the functions $\lambda_L^M(\ii_L)$ by the identity
\begin{align*}
\log p_M(\ii_M) \equiv \sum_{L \subseteq M} \lambda_L^M(\ii_L),
\end{align*}
subject to the identifiability constraint that for every $\emptyset \neq L \subseteq M$, $\ii_L \in \X_L$ and $v \in L$,
\begin{align*}
\sum_{x_v \in \X_v} \lambda_L^M(\ii_{L\setminus\{v\}}, x_v) = 0;
\end{align*}
that is, the sum over the support of each variable is zero.  We call $\lambda_L^M(\ii_L)$ a \emph{marginal log-linear parameter}.

Note that the constant $\lambda_{\emptyset}^M$ is determined by the values of the other parameters and the fact that the probabilities $p_M(\ii_M)$ sum to one.  In the sequel we will always assume that $L$ is non-empty.
\end{definition}

The term `marginal log-linear parameter' is coined by analogy with ordinary log-linear parameters, which correspond to  the special case $M=V$.  The following result provides an explicit expression for $\lambda_L^M(\ii_L)$.

\begin{lemma} \label{lem:mllp}
For $L \subseteq M \subseteq V$ and $\ii_L \in \X_L$ we have
\begin{align}
\lambda_L^M(\ii_L) &= \frac{1}{|\X_{M}|} \sum_{\jj_M \in \X_M}  \log p_M(\jj_M) \prod_{v \in L} \left( |\X_{v}| \mathbb{I}_{\{x_v = y_v\}} - 1 \right). \label{eqn:mllp}
\end{align}
\end{lemma}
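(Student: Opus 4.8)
The plan is to exploit the fact that the two defining requirements on $\lambda_L^M$ determine it uniquely, and then to exhibit a projection operator that extracts it from the identity $\log p_M(\ii_M) = \sum_{L\subseteq M}\lambda_L^M(\ii_L)$. For each $v \in M$ let $A_v$ denote the operator that averages a function of $\ii_M$ over the $v$th coordinate, $(A_v f)(\ii_M) = |\X_v|^{-1}\sum_{y_v\in\X_v} f(\ii_{M\setminus\{v\}}, y_v)$, and write $C_v = I - A_v$ for the associated contrast. The identifiability constraint says precisely that $A_v\lambda_L^M = 0$ for every $v\in L$, while the fact that $\lambda_L^M$ depends only on $\ii_L$ gives $A_v\lambda_L^M = \lambda_L^M$ for $v\in M\setminus L$. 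Consequently $C_v$ acts as the identity on $\lambda_L^M$ when $v\in L$ and annihilates it when $v\notin L$.

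First I would define $P_L \equiv \prod_{v\in L} C_v \prod_{v\in M\setminus L} A_v$ and check that it acts as a projection onto the $L$th summand: applying the remarks above coordinate by coordinate shows $P_L\lambda_{L'}^M = \lambda_L^M$ if $L' = L$ and $=0$ otherwise, since a nonzero result forces both $L\subseteq L'$ (from the contrasts) and $L'\subseteq L$ (from the averages). Applying $P_L$ to both sides of $\log p_M = \sum_{L'\subseteq M}\lambda_{L'}^M$ therefore yields $\lambda_L^M = P_L\log p_M$; in particular this already establishes existence and uniqueness of the decomposition, which is consistent with the dimension count $\sum_{L\subseteq M}\prod_{v\in L}(|\X_v|-1) = |\X_M|$.

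It then remains to compute $P_L\log p_M$ explicitly. Expanding $\prod_{v\in L} C_v = \sum_{T\subseteq L}(-1)^{|T|}\prod_{v\in T}A_v$ and collecting the averaging operators gives
\begin{align*}
\lambda_L^M(\ii_L) = \sum_{T\subseteq L}(-1)^{|T|}\,|\X_{T\cup(M\setminus L)}|^{-1}\sum_{\jj_{T\cup(M\setminus L)}}\log p_M(\ii_{L\setminus T}, \jj_{T\cup(M\setminus L)}).
\end{align*}
To recover (\ref{eqn:mllp}) I would expand the kernel on its right-hand side by the binomial identity $\prod_{v\in L}(|\X_v|\mathbb{I}_{\{x_v=y_v\}} - 1) = \sum_{U\subseteq L}(-1)^{|L\setminus U|}\big(\prod_{v\in U}|\X_v|\big)\prod_{v\in U}\mathbb{I}_{\{x_v=y_v\}}$, use the indicators to fix $y_v = x_v$ for $v\in U$ and collapse the corresponding sums, and observe that $\big(\prod_{v\in U}|\X_v|\big)/|\X_M| = |\X_{M\setminus U}|^{-1}$. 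Writing $T = L\setminus U$ then matches the two expressions term by term.

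The routine but error-prone part is this last bookkeeping step: keeping the index sets straight as the indicators collapse sums, and ensuring the signs and normalizing constants align. An entirely equivalent and perhaps shorter route avoids the operators altogether: one verifies directly that the right-hand side of (\ref{eqn:mllp}) satisfies $\sum_{L\subseteq M}\lambda_L^M(\ii_L) = \log p_M(\ii_M)$, using $\sum_{L\subseteq M}\prod_{v\in L}(|\X_v|\mathbb{I}_{\{x_v=y_v\}}-1) = |\X_M|\,\mathbb{I}_{\{\ii_M=\jj_M\}}$, and that it obeys the identifiability constraint, using $\sum_{x_v\in\X_v}(|\X_v|\mathbb{I}_{\{x_v=y_v\}}-1) = 0$; uniqueness of the parametrization then finishes the argument. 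The main thing to be careful about in either approach is the same algebra of products of contrasts.
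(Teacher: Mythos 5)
Your proposal is correct; the paper omits this proof as elementary, and your second, direct route --- checking that the right-hand side of (\ref{eqn:mllp}) satisfies the defining identity $\log p_M(\ii_M)=\sum_{L\subseteq M}\lambda_L^M(\ii_L)$ via $\sum_{L\subseteq M}\prod_{v\in L}\left(|\X_v|\mathbb{I}_{\{x_v=y_v\}}-1\right)=|\X_M|\,\mathbb{I}_{\{\ii_M=\jj_M\}}$ and the zero-sum constraints via $\sum_{x_v\in\X_v}\left(|\X_v|\mathbb{I}_{\{x_v=y_v\}}-1\right)=0$, then invoking uniqueness of the decomposition --- is precisely the elementary argument intended, and uses the same kernel manipulations that reappear in the paper's proof of Lemma \ref{lem:kappa}. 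The projection-operator route is also sound (it is what supplies the required uniqueness, since $P_L$ extracts the $L$th summand from any such decomposition), and the substitution $T=L\setminus U$ does reconcile the two explicit expressions term by term.
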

\noindent This result is elementary, and its proof is omitted. 
\bigskip

For a collection of ordered pairs of subsets $\PP$ (see Definition \ref{definition:BRpre}), we let
\begin{align*}
\tilde{\Lambda}(\PP) &= \{\lambda_L^M(\ii_L) \;|\; (L, M) \in \PP, \ii_L \in \tilde{\X}_L \}
\end{align*}
%\begin{align*}
%\Lambda(\PP) &\equiv \{\lambda_L^M(\ii_L) \;|\; (L, M) \in \PP, \; \ii_L \in \X_L \},
%\end{align*}
be the collection of marginal log-linear parameters associated with $\PP$. 
Note that we avoid the redundancy created by the identifiability constraint by only considering $\ii_L \in \tilde\X_L$.

%\begin{definition}
%For $L \subseteq M \subseteq V$ and $\ii_L \in \X_L$, let
%\begin{align*}
%\lambda_L^M(\ii_L) &\equiv \frac{1}{|\X_{M}|} \sum_{\jj_M \in \X_M}  \log p_M(\jj_M) \prod_{v \in L} \left( |\X_{v}| \mathbb{I}_{\{x_v = y_v\}} - 1 \right).
%\end{align*}
%This is a \emph{marginal log-linear parameter}.  For a collection of ordered pairs of subsets $\PP$ (see Definition \ref{definition:BRpre}), we let
%\begin{align*}
%\Lambda(\PP) &\equiv \{\lambda_L^M(\ii_L) \;|\; (L, M) \in \PP, \; \ii_L \in \X_L \},
%\end{align*}
%be the collection of parameters associated with $\PP$.
%\end{definition}

The definition of a marginal log-linear parameter we give is equivalent to the recursive one given in \citet{br:02}; since both expositions are somewhat abstract, we invite the reader to consult the examples below for additional intuition.  In particular note that for binary random variables, the product in (\ref{eqn:mllp}) is always $\pm 1$.  \citet[Theorem 2]{br:02} show that any collection $\tilde\Lambda(\PP)$ where $\PP$ is hierarchical and complete smoothly parametrizes the saturated model, that is, it parametrizes the set of all positive distributions on $\X$.

The restriction that the parameters must sum to zero is required for identifiability, but different constraints can be used in its place.
We might instead require that $\lambda_L^M(\ii_L)$ be zero whenever any entry of $\ii_L$ is zero (or some other selected value); this is seen in \citet{marchetti:11}, for example, and its use would not substantially affect any of the results in this paper.
% in the parameters (see Proposition \ref{prop:sumzero}) which leads to some flexibility in the definition of a marginal log-linear parameter; the one presented here corresponds to `effect coding'.  
%We could instead use `dummy coding', a corner point constraint which sets $\lambda_L^M(\ii_L)$ to be zero whenever any entry of $\ii_L$ is zero (or some other selected value); this is seen in \citet{marchetti:11}, for example, and its use would not substantially affect any of the results in this paper.
%
%The case where $L = \emptyset$ will not interest us here: 
%the value of $\lambda_\emptyset^M$ is redundant given other parameters if the measure used is a probability distribution, which we assume.
%given the parameters $\\lambda_L^M$ for , the value of this marginal log-linear parameter is controlled by other MLL parameters and the sum over all cells in a contingency table, which we take to be 1.  

\subsection{Examples of Marginal Log-Linear Models}

We will write $\lambda_L^M$ to mean the collection $\{\lambda_L^M(\ii_L) \;|\; \ii_L \in \X_L\}$; the expression $\lambda_L^M = 0$ denotes that we are setting all the parameters in this collection to 0.

\begin{example} 
The classical log-linear parameters for a discrete distribution over a set of variables $V$ are $\{\lambda_L^V \;|\; L \subseteq V \}$.  
\end{example}

\begin{example} 
Up to trivial transformations, the multivariate logistic parameters of \citet{glonek:95} are $\{\lambda_L^L \;|\; L \subseteq V \}$.
\end{example}

\begin{example}
Let $V = \{1,2,3\}$ and assume all random variables are binary.  Write $P_{001} \equiv P(X_1 = 0, X_2 = 0, X_3 = 1)$, and $P_{1++} \equiv P(X_1 = 1)$, etc.   Then
\begin{align*}
\lambda_1^1(0) = \frac{1}{2} \log \frac{P_{0 ++}}{P_{1 ++}},
%\lambda_1^1(0) = \frac{1}{2} \log \frac{p_{0 \cdot\cdot}}{p_{1 \cdot\cdot}},
\end{align*}
which, up to a multiplicative constant, is the logit of the probability of the event $\{X_1 = 0\}$.  Also,
\begin{align*}
\lambda_{1}^{12}(0) = \frac{1}{4} \log \frac{P_{00+} \; P_{01+}}{P_{10+} \; P_{11+}}\qquad \mbox{and}\qquad\lambda_{12}^{12}(0,0) = \frac{1}{4} \log \frac{P_{00+} \; P_{11+}}{P_{10+} \; P_{01+}},
%\lambda_{1}^{12}(0) = \frac{1}{4} \log \frac{p_{00\cdot} \; p_{01\cdot}}{p_{10\cdot} \; p_{11\cdot}}\qquad \mbox{and}\qquad\lambda_{12}^{12}(0,0) = \frac{1}{4} \log \frac{p_{00\cdot} \; p_{11\cdot}}{p_{10\cdot} \; p_{01\cdot}},
\end{align*}
the log odds product and log odds ratio between $X_1$ and $X_2$ respectively.  

If instead $X_1$ is ternary, we obtain
\begin{align*}
\lambda_1^1(0) = \frac{1}{3} \log \frac{P_{0 ++}^2}{P_{1 ++} \; P_{2 ++}},
%\lambda_1^1(0) = \frac{1}{3} \log \frac{p_{0 \cdot\cdot}^2}{p_{1 \cdot\cdot} \; p_{2 \cdot\cdot}},
\end{align*}
%and 
\begin{align*}
\lambda_{1}^{12}(0) = \frac{1}{6} \log \frac{P_{00+}^2 \; P_{01+}^2}{P_{10+} \; P_{11+} \; P_{20+} \; P_{21+}}
\qquad \mbox{and}\qquad\lambda_{12}^{12}(0,0) = \frac{1}{6} \log \frac{P_{00+}^2 \; P_{11+}  \; P_{21+}}{P_{10+} \; P_{20+} \; P_{01+}^2}.
%\lambda_{1}^{12}(0) = \frac{1}{6} \log \frac{p_{00\cdot}^2 \; p_{01\cdot}^2}{p_{10\cdot} \; p_{11\cdot} \; p_{20\cdot} \; p_{21\cdot}}
%\qquad \mbox{and}\qquad\lambda_{12}^{12}(0,0) = \frac{1}{6} \log \frac{p_{00\cdot}^2 \; p_{11\cdot}  \; p_{21\cdot}}{p_{10\cdot} \; p_{20\cdot} \; p_{01\cdot}^2}.
\end{align*}
Here $\lambda_1^1(0)$ contrasts the probability $P(X_1 = 0)$ with the geometric mean of the probabilities $P(X_1 = 1)$ and $P(X_1 = 2)$.  On the other hand, up to constants, $\lambda_{12}^{12}(0,0)$ is an average of the two log odds ratios
\begin{align*}
& \log \frac{P_{00+} \; P_{21+}}{P_{20+} \; P_{01+}} && \log \frac{P_{00+} \; P_{11+}}{P_{10+} \; P_{01+}},
\end{align*}
and so gives a contrast between $P(X_1 = X_2 = 0)$ and other joint probabilities in a way which generalizes the binary log odds ratio and provides a measure of dependence; in particular note that $\lambda_{12}^{12}(0,0) = 0$ if $X_1  \indep X_2$.
%which are somewhat harder to interpret.

Here we have written, for example, $12$ instead of $\{1,2\}$; similarly, for sets $A$ and $B$ we sometimes write $AB$ for $A \cup B$, and $aB$ for $\{a\} \cup B$.
\end{example}

%\begin{proposition} \label{prop:sumzero}
%For any $v \in L$, and fixed $\ii_{L\setminus\{v\}}$
%\begin{align*}
%\sum_{x_v \in \X_v} \lambda_L^M(\ii_{L\setminus\{v\}}, x_v) = 0.
%\end{align*}
%That is, the sum of the parameters across the support of any variable is 0.
%\end{proposition}
%
%The proof of this result is elementary.

%\begin{remark}
%A collection of parameters which avoids the redundancy in the definition of marginal log-linear parameters is
%% Proposition \ref{prop:sumzero} is
%\begin{align*}
%\tilde{\Lambda}(\PP) &= \{\lambda_L^M(\ii_L) \;|\; (L, M) \in \PP, \ii_L \in \tilde{\X}_L \}.
%\end{align*}
%\end{remark}

\subsection{Properties of Marginal Log-Linear Models}

The next result relates marginal log-linear parameters to conditional independences; it is found as Lemma 1 in \citet{rudas:10} and Equation (6) of \citet{forcina:10}.

\begin{lemma} \label{lem:brindep}
For any disjoint sets $A$, $B$ and $C$, where $C$ may be empty, $A \indep B \;|\; C$ if and only if
\begin{align*}
\lambda_{A'B'C'}^{ABC} = 0 \qquad \mbox{for every} \quad\emptyset \neq A' \subseteq A, \quad \emptyset \neq B' \subseteq B, \quad C' \subseteq C.
\end{align*}
\end{lemma}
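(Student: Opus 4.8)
The plan is to handle both directions at once through a single structural fact: the marginal log-linear expansion taken in the fixed margin $M = ABC$ detects exactly which coordinates a function genuinely depends on, and this meshes with the elementary factorization criterion for conditional independence. Throughout I work in the margin $M = ABC$ and, slightly abusing notation, write $\lambda_L[h]$ for the quantity obtained by substituting an arbitrary function $h\colon \X_{ABC} \to \mathbb{R}$ for $\log p_M$ in the explicit formula~(\ref{eqn:mllp}). By Lemma~\ref{lem:mllp} this is a \emph{linear} functional of $h$, the reconstruction identity $h(\ii) = \sum_{L \subseteq ABC} \lambda_L[h](\ii_L)$ holds for every such $h$, and in particular $\lambda_L^{ABC} = \lambda_L[\log p_{ABC}]$. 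The first step is the \emph{annihilation lemma}: if $h$ does not depend on the coordinate $x_v$, then $\lambda_L[h] = 0$ for every $L \ni v$. This follows immediately from~(\ref{eqn:mllp}) by performing the sum over $y_v$ first; the factor involving $h$ is constant in $y_v$, while $\sum_{x_v \in \X_v}\bigl(|\X_v|\,\mathbb{I}_{\{x_v = y_v\}} - 1\bigr) = 0$, so the whole term vanishes. Iterating, a function of $\ii_{AC}$ alone has $\lambda_L[h] = 0$ whenever $L \cap B \neq \emptyset$, and symmetrically for functions of $\ii_{BC}$.

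For the forward direction, assume $A \indep B \mid C$. Since $P$ is strictly positive, $p_{ABC} = p_{AC}\,p_{BC}/p_C$, so I may set $h_1(\ii_{AC}) = \log p_{AC}$ and $h_2(\ii_{BC}) = \log p_{BC} - \log p_C$, regarded as functions on all of $\X_{ABC}$ constant in the complementary coordinates, and write $\log p_{ABC} = h_1 + h_2$. By linearity $\lambda_L^{ABC} = \lambda_L[h_1] + \lambda_L[h_2]$, and the annihilation lemma kills $\lambda_L[h_1]$ whenever $L \cap B \neq \emptyset$ and $\lambda_L[h_2]$ whenever $L \cap A \neq \emptyset$. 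Hence for any $L = A'B'C'$ with $A'$ and $B'$ both non-empty, both summands vanish, giving $\lambda_{A'B'C'}^{ABC} = 0$.

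For the converse, partition the subsets $L \subseteq ABC$ into three blocks: those with $L \cap A = \emptyset$ (so $L \subseteq BC$), those with $L \cap A \neq \emptyset = L \cap B$ (so $L \subseteq AC$ with non-empty $A$-part), and those with $L \cap A \neq \emptyset \neq L \cap B$. The hypothesis forces the third block to contribute nothing, so the reconstruction identity gives $\log p_{ABC}(\ii) = f(\ii_{AC}) + g(\ii_{BC})$, where $f$ collects the second block and $g$ the first. Exponentiating yields a factorization $p_{ABC}(\ii) = \phi(\ii_{AC})\,\psi(\ii_{BC})$; summing out $\ii_A$ and $\ii_B$ then shows $p_{AB \mid C} = p_{A \mid C}\,p_{B \mid C}$, i.e.\ $A \indep B \mid C$.

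I expect the only genuine content to lie in the annihilation lemma, which is where the specific contrast weights $|\X_v|\,\mathbb{I}_{\{x_v = y_v\}} - 1$ in~(\ref{eqn:mllp}) actually do their work; the rest is linearity together with the standard factorization characterization of conditional independence. The one point requiring care is bookkeeping: $h_1, h_2$ (and likewise $f, g$) must consistently be viewed as functions on the full product $\X_{ABC}$ that are constant in the complementary coordinates, so that all the $\lambda_L$ are computed in one and the same margin $M = ABC$.
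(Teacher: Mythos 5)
Your argument is correct. Note first that the paper does not actually prove Lemma \ref{lem:brindep}; it cites the result to Lemma 1 of Rudas et al.\ and Equation (6) of Forcina et al., so there is no in-paper proof to compare against line by line. That said, your ``annihilation lemma'' is precisely the mechanism the authors themselves deploy in their proof of Lemma \ref{lem:rbnnat} (Section \ref{subsec:lem:rbnnat}): there they split $\log p_{ABC}$ as $\log p_{AC} + \log p_{B|C}$ and kill the second term by observing that $\sum_{y_w}\bigl(|\X_w|\,\mathbb{I}_{\{x_w=y_w\}}-1\bigr)=0$, which is exactly your vanishing-contrast computation. Your forward direction is therefore the natural companion to the paper's own technique, and your converse --- using the reconstruction identity $h=\sum_L\lambda_L[h]$ to extract the factorization $p_{ABC}=\phi(\ii_{AC})\,\psi(\ii_{BC})$ and then invoking the factorization criterion for conditional independence of positive distributions --- is also sound; I verified that the contrast weights in (\ref{eqn:mllp}) do sum over $L\subseteq M$ to $|\X_M|\,\mathbb{I}_{\{\ii_M=\jj_M\}}$, so the reconstruction identity holds for arbitrary $h$ as you claim. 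Two cosmetic points only: in the annihilation computation the dummy variable being summed is $y_v$ (a coordinate of $\jj_M$), not $x_v$, though the sum vanishes either way; and in the converse you should mention that the $L\subseteq C$ terms and the constant $\lambda_\emptyset$ can be absorbed into either $f$ or $g$, which you implicitly do by putting them in the block $L\cap A=\emptyset$.
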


The special case of $C = \emptyset$ (giving marginal independence) is proved in the context of multivariate logistic parameters by \citet{kauermann:97}.

\begin{example}
Take a complete and hierarchical parametrization of 3 variables,
\begin{align*}
\lambda_1^1 \qquad \lambda_2^2 \qquad \lambda_3^3 \qquad \lambda_{12}^{12} \qquad \lambda_{13}^{13} \qquad \lambda_{23}^{123} \qquad \lambda_{123}^{123}.
\end{align*}
Then we can force $X_1 \indep X_3$ by setting $\lambda_{13}^{13} = 0$.  Similarly $X_2 \indep X_3 \;|\; X_1$ corresponds to setting $\lambda_{23}^{123} = \lambda_{123}^{123} = 0$.
\end{example}

The following lemma shows that under conditional independence constraints, certain MLL parameters defined within different margins are equal.

\begin{lemma} \label{lem:rbnnat}
Suppose that $A \indep B \; | \; C$, and $A$ is non-empty.  Then for any $D \subseteq C$,
\begin{align*}
\lambda_{AD}^{ABC}(\ii_{AD}) = \lambda_{AD}^{AC}(\ii_{AD}), \qquad \mbox{for each } \ii_{AD} \in \X_{AD}.
\end{align*}
\end{lemma}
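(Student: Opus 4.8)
The plan is to prove Lemma~\ref{lem:rbnnat} by combining the explicit formula~(\ref{eqn:mllp}) for MLL parameters with the factorization of the joint distribution under the conditional independence $A \indep B \mid C$. The key observation is that $\lambda_{AD}^{ABC}$ is a log-linear parameter indexed by a set $AD$ that involves no coordinate in $B$, computed in the margin $ABC$; I want to show that the presence or absence of $B$ in the margin is irrelevant once $A \indep B \mid C$ holds.

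First I would write out $p_{ABC}(\ii_{ABC})$ under the assumption $A \indep B \mid C$. Since the distribution is strictly positive, the conditional independence gives the multiplicative factorization
\begin{align*}
p_{ABC}(\ii_{ABC}) = \frac{p_{AC}(\ii_{AC}) \, p_{BC}(\ii_{BC})}{p_C(\ii_C)},
\end{align*}
so that $\log p_{ABC}(\ii_{ABC}) = \log p_{AC}(\ii_{AC}) + \log p_{BC}(\ii_{BC}) - \log p_C(\ii_C)$. Next I would substitute this additive decomposition into formula~(\ref{eqn:mllp}) for $\lambda_{AD}^{ABC}(\ii_{AD})$, where $D \subseteq C$, so that the index set $L = AD$ satisfies $L \cap B = \emptyset$. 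The sum in~(\ref{eqn:mllp}) runs over $\jj_{ABC} \in \X_{ABC}$ with the product $\prod_{v \in AD}(|\X_v|\mathbb{I}_{\{x_v = y_v\}} - 1)$, which depends only on the $AD$-coordinates of $\jj$.

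The crucial step is to handle the three terms separately and show that the $B$-coordinates wash out. For the term $\log p_{BC}(\ii_{BC})$, summing over $\jj_A$ and over the $\jj_B$ coordinates: the factor $\prod_{v \in AD}(\cdots)$ contains at least one factor indexed by a vertex in $A$ (since $A$ is non-empty), and summing that factor over $\X_{y_v}$ yields zero by the identity $\sum_{y_v \in \X_v}(|\X_v|\mathbb{I}_{\{x_v=y_v\}}-1) = |\X_v| - |\X_v| = 0$; hence the entire $\log p_{BC}$ contribution vanishes, and similarly for $\log p_C$. For the remaining term $\log p_{AC}(\ii_{AC})$, the summand is constant in $\jj_B$, so summing over $\X_B$ produces a factor $|\X_B|$ which cancels against the corresponding part of the normalizing $1/|\X_{ABC}|$, reducing the expression to precisely formula~(\ref{eqn:mllp}) for $\lambda_{AD}^{AC}(\ii_{AD})$ computed in the margin $AC$.

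The main obstacle will be the bookkeeping of the normalizing constants and verifying that the identifiability sum-to-zero identity applies cleanly to kill the $B$- and $C$-only terms. The point I must be careful about is that the vanishing of the $\log p_{BC}$ and $\log p_C$ terms relies essentially on $A$ being non-empty, which guarantees at least one $A$-indexed factor in the product; without it the argument fails, which is exactly why the hypothesis requires $A \neq \emptyset$. Once these cancellations are established, the equality $\lambda_{AD}^{ABC}(\ii_{AD}) = \lambda_{AD}^{AC}(\ii_{AD})$ follows directly.
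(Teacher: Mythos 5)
Your proposal is correct and follows essentially the same route as the paper's proof: substitute the additive log-decomposition implied by $A \indep B \mid C$ into the explicit formula (\ref{eqn:mllp}), kill the terms not depending on $\jj_A$ via the sum-to-zero identity applied to an $A$-indexed factor (which is where non-emptiness of $A$ enters), and collapse the $\jj_B$-sum against the normalizing constant for the $\log p_{AC}$ term. The only cosmetic difference is that you write the factorization as $p_{AC}\,p_{BC}/p_C$ (three log terms) where the paper uses $p_{AC}\,p_{B|C}$ (two), but the two extra terms vanish by the identical argument.
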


The proof of this result is found in Section \ref{subsec:lem:rbnnat}.

\section{Acyclic Directed Mixed Graphs} \label{sec:ADMGs}

We introduce basic graphical concepts used to describe the global Markov property and parametrization schemes.

\begin{definition}
A \emph{directed mixed graph} $\G$ consists of a set of vertices $V$, and both directed ($\rightarrow$) and bidirected ($\leftrightarrow$) edges.  Edges of the same type and orientation may not be repeated, but there may be multiple edges of different types between a pair of vertices.

A \emph{path} in $\G$ is a sequence of adjacent edges, without repetition of a vertex; a path may be empty, or equivalently consist of only one vertex.  The first and last vertices on a path are the \emph{endpoints} (these are not distinct if the path is empty); other vertices on the path are \emph{non-endpoints}.
The graph $\G_1$ in Figure \ref{fig:exm}, for example, contains the path $1 \rightarrow 2 \rightarrow 4 \leftrightarrow 3$, with endpoints 1 and 3, and non-endpoints 2 and 4.  
A \emph{directed} path is one in which all the edges are directed ($\rightarrow$) and are oriented in the same direction, whereas a \emph{bidirected path} consists entirely of bidirected edges.  %A \emph{semi-directed} path is one in which all directed edges are oriented in the same direction, but which may contain bidirected edges as well; thus every bidirected or directed path is also semi-directed.

%A \emph{cycle} is a non-empty sequence of adjacent edges, also without repetition of a vertex, except that the last vertex is the same as the first.  
A directed cycle is a non-empty sequence of edges of the form $v \rightarrow \cdots \rightarrow v$.  An \emph{acyclic} directed mixed graph (ADMG) is one which contains no directed cycles.
\end{definition}

\begin{definition}
For a graph $\G$ and a subset of its vertices $A \subseteq V$, we denote by $\G_A$ the \emph{induced subgraph} formed by $A$; that is, the graph containing the vertices $A$, and the edges in $\G$ whose endpoints are both in $A$.
\end{definition}

\begin{definition}
Let $a$ and $d$ be vertices in a mixed graph $\G$.  If $a=d$, or there is a directed path from $a$ to $d$, we say that $a$ is an \emph{ancestor} of $d$, and that $d$ is a \emph{descendant} of $a$.  The sets of ancestors of $d$ and descendants of $a$ are denoted $\an_{\G}(d)$ and $\dec_{\G}(a)$ respectively.
If there is a directed path from $a$ to $d$ containing precisely one edge ($a \rightarrow d$) then $a$ is called a \emph{parent} of $d$; the set of vertices which are parents of $d$ is written $\pa_{\G} (d)$.

The \emph{district} of $a$, denoted $\dis_{\G}(a)$, is the set containing $a$ and all vertices which are connected to $a$ by a bidirected path.  
These definitions are applied disjunctively to sets of vertices, so that, for example,
\[
\pa_{\cal G}(W)\equiv \bigcup_{w\in W}\pa_{\cal G}(w),\quad\quad
\dis_{\cal G}(W)\equiv \bigcup_{w\in W}\dis_{\cal G}(w).
\]
A set of vertices $A$ is \emph{ancestral} if $A = \an_{\G} (A)$; that is, $A$ contains all its own ancestors.
\end{definition}

\begin{example}
Consider the graph $\G_1$ in Figure \ref{fig:exm}.  We have
\begin{align*}
\an_{\G_1}(4) = \{1,2,4\} \qquad \an_{\G_1}(\{2,3\}) = \{1,2,3\}.
\end{align*}
The district of 3 is the set $\{2,3,4\}$, and since 3 has no parents, $\pa_{\G_1}(3) = \emptyset$.
\end{example}

Note that by the definitions of some authors, vertices are not their own ancestors \citep{lau:96}.  The above notations may be shortened on induced subgraphs so that $\pa_A \equiv \pa_{\G_A}$, and similarly for other definitions.  In some cases where the meaning is clear, we will dispense with the subscript altogether.

We use the now standard notation of \citet{dawid:condind}, and represent the statement `$X$ is independent of $Y$ given $Z$ under a probability measure $P$', for random variables $X$, $Y$ and $Z$, by $X \indep Y \;|\; Z \; [P]$.  If $P$ is unambiguous, this part is dropped, and if $Z$ is empty we write simply $X \indep Y$.  Finally, we abuse notation in the usual way: $v$ and $X_v$ are used interchangeably as both a vertex and a random variable; likewise $A$ denotes both a vertex set and $X_{A}$.

\subsection{Global Markov Property for ADMGs}\label{subsec:globalmarkov}

A Markov property associates 
%the probability distributions it represents.
a particular set of independence relations with a graph.

A non-endpoint vertex $c$ on a path is a \emph{collider} on the path if the edges preceding and succeeding $c$ on the path have an arrowhead at $c$, for example $\rightarrow  c \leftarrow$ or $\leftrightarrow  c\leftarrow$; otherwise $c$ is a \emph{non-collider}.  A path between vertices $a$ and $b$ in a mixed graph is said to be \emph{blocked given a set} $C$ if either
\begin{itemize}
\item[(i)] there is a non-collider on the path in $C$,  or
\item[(ii)] there is a collider on the path which is \emph{not} in $\an_\G(C)$.
\end{itemize}
If all paths from $a$ to $b$ are blocked by $C$, then $a$  and $b$ are said to be {\it m-separated given} $C$.
Sets $A$ and $B$ are said to be m-separated given $C$ if every $a \in A$ and every $b \in B$ are  m-separated given $C$.  This naturally extends the d-separation criterion of \citet{pearl:88} to graphs with bidirected edges.

A probability measure $P$ on $\mathfrak X$ is said to satisfy the {\it global Markov property} for $\cal G$ if for every triple of disjoint sets of vertices $A$, $B$ and $C$,
\begin{center}
$A$ is m-separated from $B$ given $C$ in $\G \qquad \Longrightarrow \qquad X_A \indep X_B  \mid X_C \;[P]$.
\end{center}
The {\em model} associated with an ADMG $\G$ is simply the set of distributions that obey the global Markov property for
$\G$.

\begin{proposition}\label{prop:m-connecting-path}
 If a path m-connects $x$ and $y$ given $Z$ in $\G$ then every vertex on the path is in $\an_{\G}(\{x,y\}\cup Z)$.
\end{proposition}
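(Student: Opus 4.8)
The plan is to classify each vertex on the m-connecting path according to its role and reduce everything to a statement about ancestry. Write the path as $x = v_0, v_1, \ldots, v_n = y$, and recall that, since this path m-connects $x$ and $y$ given $Z$, it is \emph{not} blocked; hence every non-collider $v_i$ (with $0<i<n$) satisfies $v_i \notin Z$, while every collider $v_i$ satisfies $v_i \in \an_{\G}(Z)$. The two endpoints lie in $\{x,y\} \subseteq \an_{\G}(\{x,y\}\cup Z)$ by reflexivity of the ancestor relation, and every collider already lies in $\an_{\G}(Z) \subseteq \an_{\G}(\{x,y\}\cup Z)$. Thus the only vertices requiring work are the non-colliders.

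The key step is to show that each non-collider is an ancestor either of an endpoint or of a collider. Fix a non-collider $c = v_i$. By definition at least one of its two incident path-edges carries a tail at $c$; since a tail can occur only on a directed edge, this edge is $c \rightarrow v_{i+1}$ or $c \rightarrow v_{i-1}$. Suppose without loss of generality it is $c = v_i \rightarrow v_{i+1}$ (the other case is symmetric, following the path toward $x$ instead of $y$). I would then march forward along the path, building a directed path out of $c$: at the current vertex $v_j$ the incoming path-edge $v_{j-1}\rightarrow v_j$ has an arrowhead at $v_j$, so if $v_j$ is an interior non-collider its outgoing path-edge must carry a tail at $v_j$, forcing it to be the directed edge $v_j \rightarrow v_{j+1}$; we then advance to $v_{j+1}$.

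This construction extends the directed path $v_i \rightarrow v_{i+1} \rightarrow \cdots$ one vertex at a time and, since it moves along the original path without repeating vertices, it terminates after finitely many steps. It can only stop for one of two reasons: either we reach the endpoint $v_n = y$, yielding a directed path $c \rightarrow \cdots \rightarrow y$ and hence $c \in \an_{\G}(y)$; or we reach an interior vertex $v_j$ whose outgoing path-edge has an arrowhead at $v_j$, which together with the incoming arrowhead makes $v_j$ a collider, yielding $c \rightarrow \cdots \rightarrow v_j$ and hence $c \in \an_{\G}(v_j) \subseteq \an_{\G}(Z)$. In either case $c \in \an_{\G}(\{x,y\}\cup Z)$, as required; acyclicity of $\G$ guarantees the object we build is genuinely a directed path.

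I expect the only real subtlety to lie in the propagation/termination step: one must verify that at every interior non-collider met during the forward march the incoming arrowhead forces the outgoing edge to be directed away from that vertex, so that the march never stalls and never revisits a vertex. The collider and endpoint cases follow immediately from the definitions, so this non-collider propagation argument is the heart of the proof.
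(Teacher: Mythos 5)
Your proof is correct and is exactly the standard argument that the paper's one-line proof (``this follows from the definition of m-connection'') is invoking: endpoints and colliders are immediate, and each non-collider is shown to be an ancestor of a collider or an endpoint by following its outgoing directed edge along the path, with the arrowhead-forces-tail propagation step guaranteeing the march continues until it hits a collider or reaches $y$ (or $x$). The only cosmetic remark is that acyclicity is not actually needed to ensure you build a genuine directed path, since the paper's definition of a path already forbids repeated vertices.
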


\begin{proof} This follows from the definition of m-connection.\end{proof}

\begin{example}
Consider the graph $\G_1$ in Figure \ref{fig:exm}.  There are two paths between the vertices 1 and 4,
\begin{align*}
\pi_1 \,:\, 1 \rightarrow 2 \rightarrow 4 \qquad \mbox{and} \qquad \pi_2 \,:\, 1 \rightarrow 2 \leftrightarrow 3 \leftrightarrow 4;
\end{align*}
both are blocked by $C=\{2\}$.  $\pi_1$ is blocked because 2 is a non-collider on the path and is in $C$, while $\pi_2$ is blocked because 3 is a collider on the path and is not in $\an_{\G_1}(C) = \{1,2\}$.  Hence $\{1\}$ and $\{4\}$ are m-separated given $\{2\}$ in $\G_1$.

One can similarly see that $\{1\}$ and $\{3\}$ are m-separated given $C=\emptyset$, and that no other m-separations hold for this graph.  Thus a joint distribution $P$ obeys the global Markov property for $\G_1$ if and only if $X_1 \indep X_4 \,|\, X_2 \, [P]$ and $X_1 \indep X_3 \, [P]$. 
\end{example}
By similar arguments the independences associated with the ADMGs in Figure \ref{fig:flu} may also be read off.

\subsection{Existing Parametrization of ADMG models} \label{subsec:exists}

This subsection defines the parameters of \citet{richardson:09} for multivariate discrete distributions satisfying the global Markov property for an ADMG.

\begin{definition}
Let $\G$ be an ADMG with vertex set $V$.  We say that a collection of vertices $W \subseteq V$ is \emph{barren} if for each $v \in W$, we have $W \cap \dec_{\G} (v) = \{v\}$; in other words $v$ has no non-trivial descendants in $W$.  For an arbitrary set of vertices $U$, the maximal subset with no non-trivial descendants in $U$ is denoted $\barren_{\G}(U)$.

A \emph{head} is a collection of vertices $H$ which is connected by bidirected paths in $\G_{\an (H)}$ and is barren in $\G$.  We write $\mathcal{H}(\G)$ for the collection of heads in $\G$.
The \emph{tail} of a head $H$ is the set
\begin{align*}
\tail_{\G} (H) \equiv \pa_{\G}(\dis_{\an(H)} (H)) \cup (\dis_{\an(H)} (H) \setminus H).
\end{align*}
Thus the tail of $H$ is the set of vertices in $V\setminus H$ connected to a vertex in $H$
by a path on which every vertex is a collider and an ancestor of a vertex in $H$.
We typically write $T$ for a tail, provided it is clear which head it belongs to.
\end{definition}

\begin{proposition}\label{prop:head-and-tail}
Let $H$ be a head.  Then {\rm (i)} $H = \barren_{\G}(H \cup \tail_{\G} (H))$; {\rm (ii)} $\tail_{\G}(H) \subseteq \an_{\G}(H)$.
\end{proposition}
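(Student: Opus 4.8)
The plan is to reduce both parts to a single structural observation about the set $U \equiv H \cup \tail_{\G}(H)$. Writing $D \equiv \dis_{\an(H)}(H)$ for the district of $H$ inside the induced subgraph on $\an_{\G}(H)$, I first record the bookkeeping identity $U = D \cup \pa_{\G}(D)$. Indeed, since every vertex is its own ancestor and lies in its own district, we have $H \subseteq D \subseteq \an_{\G}(H)$, so $H \cup (D \setminus H) = D$ and hence $U = H \cup \pa_{\G}(D) \cup (D \setminus H) = D \cup \pa_{\G}(D)$.

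The crux is then the claim that every vertex of $U$ is an ancestor of some vertex of $H$; that is, for each $w \in U$ there is $h' \in H$ with a directed path $w \to \cdots \to h'$ (possibly empty). For $w \in D$ this is immediate from $D \subseteq \an_{\G}(H)$. For $w \in \pa_{\G}(D)$, I would pick $d \in D$ with $w \to d$; since $d \in \an_{\G}(H)$ there is a directed path from $d$ to some $h' \in H$, and prepending the edge $w \to d$ yields a directed path from $w$ to $h'$. In particular $\tail_{\G}(H) \subseteq U \subseteq \an_{\G}(H)$, which is exactly part (ii).

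For part (i) I will use the characterization $\barren_{\G}(U) = \{v \in U : \dec_{\G}(v) \cap U = \{v\}\}$, the vertices of $U$ having no proper descendant in $U$, which follows directly from the definition of $\barren$. I prove two inclusions. First, if $w \in U \setminus H$, the observation above supplies $h' \in H \subseteq U$ with $w \in \an_{\G}(h')$ and $w \neq h'$, so $w$ has a proper descendant in $U$ and thus $w \notin \barren_{\G}(U)$; hence $\barren_{\G}(U) \subseteq H$. Conversely, suppose some $h \in H$ had a proper descendant $w \in U$. Combining the proper directed path $h \to \cdots \to w$ with a directed path $w \to \cdots \to h'$ to some $h' \in H$ yields a directed walk from $h$ to $h'$ of positive length. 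If $h = h'$ this is a directed cycle, contradicting acyclicity; if $h \neq h'$ then $h'$ is a proper descendant of $h$ lying in $H$, contradicting that $H$ is barren. So no $h \in H$ has a proper descendant in $U$, giving $H \subseteq \barren_{\G}(U)$, and together these prove $\barren_{\G}(U) = H$.

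The only genuinely delicate point is the second inclusion of part (i): I must argue with directed walks rather than paths and invoke acyclicity to rule out the closed-walk case $h = h'$, while separately using barrenness of $H$ when $h \neq h'$. Everything else is routine manipulation of ancestor and parent sets. I therefore expect the main obstacle to be stating the walk-to-cycle-or-descendant dichotomy cleanly, rather than any substantive difficulty.
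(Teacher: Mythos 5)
Your proof is correct; the paper itself dismisses this proposition as ``immediate from the respective definitions,'' and your argument is precisely the careful unpacking of those definitions (writing $H \cup \tail_{\G}(H)$ as the district of $H$ in $\G_{\an(H)}$ together with its parents, noting everything there is an ancestor of $H$, and then using acyclicity and barrenness of $H$ to identify the barren subset). There is no divergence from the paper's (implicit) route.
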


\begin{proof} Immediate from the respective definitions.
\end{proof}

\citet{richardson:09} shows that discrete distributions obeying the global Markov property for an ADMG $\G$ are
parametrized by the conditional probabilities:
\begin{align*}
\left\{P(X_H = \ii_H \;|\; X_T = \ii_T) \;\middle|\;H \in \mathcal{H}, \; T = \tail_{\G} (H), \; \ii_H \in \tilde{\X}_H, \; \ii_T \in \X_T \right\}.
\end{align*}
This is achieved via factorizations based on head-tail pairs; let $\prec$ be the partial ordering on heads such that $H_i \prec H_j$ if $H_i \subset \an_\G(H_j)$ and $H_i \neq H_j$.  This is well defined, since otherwise $\G$ would contain a directed cycle.  Then let
$[\cdot]_{\G}$ be a function which partitions sets of vertices into heads by repeatedly removing heads which are maximal under $\prec$.

Then $P$ satisfies the global Markov property for $\G$ if and only if it obeys the factorizations
\begin{align}
P(X_A = \ii_A) &= \prod_{H\in [A]_{\cal G}} P(X_H = \ii_H \;|\; X_T = \ii_T) \label{eqn:pform}
\end{align}
for ancestral sets of vertices $A$;  see \citet{richardson:09} for details.
In the case of a directed acyclic graph (DAG), this corresponds to the probability distribution of each vertex conditional on its parents.

\begin{example}
Consider again the ADMG $\G_1$ in Figure \ref{fig:exm}; its head-tail pairs $(H,T)$ are $(1, \emptyset)$, $(2, 1)$, $(3, \emptyset)$, $(23, 1)$, $(4, 2)$ and $(34, 12)$.
Multivariate binary distributions obeying the global Markov property with respect to $\G_1$ are therefore parametrized by
\begin{align*}
p_1(0) \qquad p_{2|1}(0 \,|\, x_1) \qquad p_{3}(0) \qquad p_{23|1}(0,0,\,|\, x_1)\\
p_{4|2}(0 \,|\, x_2) \qquad p_{34|12}(0,0 \,|\, x_1, x_2),
\end{align*}
for $x_1, x_2 \in \{0,1\}$, as mentioned in the Introduction.
\end{example}

\subsection{Graphical Completions}

Given a discrete model defined by a set of conditional independence constraints, it is natural to consider it as a sub-model of the saturated model, which contains all positive probability distributions.  In a setting where the model is graphical, it becomes equally natural to think of the graph as a subgraph of a complete graph, by which we mean a graph containing at least one edge between every pair of vertices.  
We can obtain a complete graph from an incomplete one by inserting edges between each pair of vertices which lack one, but this leaves a choice of edge type and orientation.  
These choices may affect how much of the structure and spirit of the original graph is retained; we will require that a completion preserves the heads of the original graph, which helps to preserve the structure of the parametrization.

\begin{definition}
Given an ADMG $\G$ and a supergraph $\bar{\G}$, we call $\bar{\G}$ a \emph{head-preserving completion} of $\G$ if $\bar{\G}$ is complete, and $\mathcal{H}(\G) \subseteq \mathcal{H}(\bar\G)$.
\end{definition}

It is easy to see that a head-preserving completion always exists; for example, if we add in a bidirected edge between every pair of vertices which are not joined by an edge, then it is clear that barren sets in $\G$ will remain barren in $\bar\G$, and bidirected connected sets in $\G$ will remain bidirected connected in $\bar\G$.

Note that it is not necessary for every pair of vertices to be joined by an edge in order for a graph to represent the saturated model, however we will require this for our completions.

\begin{example}
Figure \ref{fig:comp} shows a head-preserving completion of the ADMG in Figure \ref{fig:exm}.

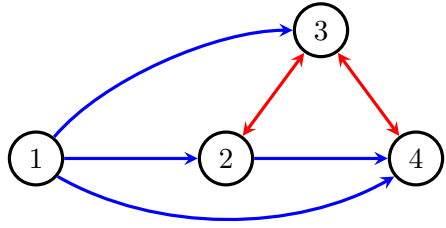
\begin{figure}
\begin{center}
 \begin{tikzpicture}
 [rv/.style={circle, draw, very thick, minimum size=7mm}, node distance=25mm, >=stealth]
 \pgfsetarrows{latex-latex};
 \node[rv] (1) {1};
 \node[rv, right of=1] (2) {2};
 \node[rv, above of=2, xshift=12.5mm, yshift=-8mm] (3) {3};
 \node[rv, right of=2] (4) {4};
 \draw[->, very thick, color=blue] (1) -- (2);
 \draw[->, very thick, color=blue] (2) -- (4);
 \draw[<->, very thick, color=red] (2) -- (3);
 \draw[<->, very thick, color=red] (3) -- (4);
 \draw[<-, very thick, color=blue] (3.180) .. controls +(180:1) and +(50:1) .. (1.50);
 \draw[->, very thick, color=blue] (1.320) .. controls +(330:1.5) and +(210:1.5) .. (4.220);
% \node[below of=2, yshift=10mm] {(a)};
%%
% \node[rv, right of=4, xshift=10mm, yshift=20mm] (1a) {1};
% \node[rv, right of=1a] (2a) {2};
% \node[rv, below of=1a] (3a) {3};
% \node[rv, right of=3a] (4a) {4};
% \draw[->, very thick, color=blue] (1a) -- (3a);
% \draw[->, very thick, color=blue] (2a) -- (4a);
% \draw[<->, very thick, color=red] (1a) -- (2a);
% \draw[<->, very thick, color=red] (3a) -- (4a);
% \draw[<->, very thick, color=red] (1a) -- (4a);
% \draw[<->, very thick, color=red] (2a) -- (3a);
% \node[below of=3a, xshift=12.5mm, yshift=15mm] {(b)};
  \end{tikzpicture}
  \end{center}
\caption{A head-preserving completion, $\bar{\G}_1$, of the ADMG in Figure \ref{fig:exm}.}%, (a) $\bar{\G}_1$ and (b) $\bar{\G}_2$.}
\label{fig:comp}
\end{figure}
\end{example}

\begin{proposition}\label{prop:ancestors-in-completion}
If $\bar{\G}$ is a head-preserving completion of $\G$ then $\an_{\G}(v) \subseteq \an_{\bar{\G}}(v)$. In particular,
if a set $A$ is ancestral in $\bar{\G}$ then $A$ is also ancestral in $\G$.
\end{proposition}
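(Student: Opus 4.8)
The plan is to reduce everything to a single observation: a head-preserving completion is, by definition, a \emph{supergraph}, so every edge of $\G$ survives in $\bar{\G}$. In particular directed edges, and hence directed paths, are preserved when passing from $\G$ to $\bar{\G}$. Notably the head-preserving hypothesis itself plays no role in this proposition; only the supergraph structure is used.

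First I would prove the inclusion $\an_{\G}(v) \subseteq \an_{\bar{\G}}(v)$. Since $\bar{\G}$ contains every edge of $\G$, in particular every directed edge of $\G$ is present in $\bar{\G}$. Hence any directed path $a \to \cdots \to v$ in $\G$ witnessing $a \in \an_{\G}(v)$ is the very same sequence of directed edges in $\bar{\G}$, so it witnesses $a \in \an_{\bar{\G}}(v)$; the case $a = v$ is immediate, and the no-repeated-vertex condition on paths is inherited automatically. Thus $\an_{\G}(v) \subseteq \an_{\bar{\G}}(v)$ for every vertex $v$.

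For the second statement I would simply unfold the definition of ancestral. Suppose $A$ is ancestral in $\bar{\G}$, i.e.\ $\an_{\bar{\G}}(A) = A$. Using the disjunctive extension of $\an$ to sets together with the inclusion just established,
\[
\an_{\G}(A) = \bigcup_{v \in A} \an_{\G}(v) \subseteq \bigcup_{v \in A} \an_{\bar{\G}}(v) = \an_{\bar{\G}}(A) = A.
\]
Since always $A \subseteq \an_{\G}(A)$, this forces $\an_{\G}(A) = A$, that is, $A$ is ancestral in $\G$.

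There is no serious obstacle here: the only point requiring care is the \emph{direction} of the inclusions. Ancestor sets can only grow when edges are added, so $\an$ is monotone with respect to the supergraph relation; consequently the property of being ancestral behaves contravariantly, and an ancestral set of the larger graph $\bar{\G}$ is automatically ancestral in the smaller graph $\G$ (though not conversely). Keeping straight the two inclusions $\an_{\G}(A) \subseteq A$ and $A \subseteq \an_{\G}(A)$ is essentially the entire content of the argument.
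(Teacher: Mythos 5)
Your proof is correct and is just a careful unpacking of the paper's own one-line argument, which likewise rests solely on the fact that $\G$ contains a subset of the edges of $\bar{\G}$ (so directed paths, and hence ancestor sets, are preserved under the inclusion). Your observation that the head-preserving hypothesis is not needed is also accurate.
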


\begin{proof} This follows because $\G$ contains a subset of the edges in $\bar{\G}$.\end{proof}

%\begin{proposition} \label{prop:heads}
%$\mathcal{H}(\G) \subseteq \mathcal{H}(\bar{\G})$; in other words, heads are preserved by balanced completion.
%\end{proposition}

%This result is easy to verify, so its proof is omitted; in particular, barren sets which are connected by bidirected paths remain so in the completed graph.  Of course tails will, in general, be greatly expanded under completion.

\section{Ingenuous Parametrization of an ADMG model} \label{sec:ing}

We now use the marginal log-linear parameters defined in Section \ref{sec:MLLs} to parametrize the ADMG models discussed in Section \ref{sec:ADMGs}.

\begin{definition}
Consider an ADMG $\G$ with head-tail pairs $(H_i, T_i)$ over some index $i$, and let $M_i = H_i \cup T_i$.  Further, let $\mathbb{L}_i = \{A \; | \; H_i \subseteq A \subseteq H_i \cup T_i\}$.  This collection of margins and associated effects is the \emph{ingenuous} parametrization of $\G$, denoted $\PP^{\ing}(\G)$.
\end{definition}

\begin{example}
We return again to the ADMG $\G_1$ in Figure \ref{fig:exm}; the head-tail pairs are $(1, \emptyset)$, $(2,1)$, $(3, \emptyset)$, $(23, 1)$, $(4, 2)$ and $(34, 12)$, meaning that the ingenuous parametrization is given by the following margins and effects:

\begin{center}
\begin{tabular}{r|l}
$M$&$\mathbb{L}$\\
\hline
1&1\\
12&2, 12\\
3&3\\
123&23, 123\\
24&4, 24\\
1234&34, 134, 234, 1234.\\
\end{tabular}
\end{center}
\end{example}

Note that the ordering of the margins given here is hierarchical; in order to use most of the results of \citet{br:02}, we need to confirm that the definition above always leads to a hierarchical parametrization, which is shown by the following result.

\begin{lemma} \label{lem:hier}
For any ADMG $\G$, there is an ordering on the margins $M_i$ of the ingenuous parametrization $\PP^{\ing}(\G)$ which is hierarchical.
\end{lemma}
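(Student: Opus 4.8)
The plan is to reduce the two requirements in the definition of \emph{hierarchical} to a single containment statement about heads and ancestor sets, and then to exhibit the required ordering as a linear extension of the partial order $\prec$ on heads introduced in Section~\ref{subsec:exists}. Recall that for the ingenuous parametrization each margin is $M_i = H_i \cup T_i$ with effect set $\mathbb{L}_i = \{A \mid H_i \subseteq A \subseteq M_i\}$, so everything is controlled by the head--tail pairs.

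First I would record two facts about each margin. Since $T_i = \tail_{\G}(H_i) \subseteq \an_{\G}(H_i)$ by Proposition~\ref{prop:head-and-tail}(ii), and trivially $H_i \subseteq \an_{\G}(H_i)$, we obtain $M_i \subseteq \an_{\G}(H_i)$. Moreover, by Proposition~\ref{prop:head-and-tail}(i) we have $H_i = \barren_{\G}(M_i)$, so the head is recovered from its margin; in particular distinct heads yield distinct margins, and the index set of margins is genuinely that of the heads.

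Next I would simplify the two hierarchical conditions to a single one. The smallest element of $\mathbb{L}_j$ is $H_j$ itself, and every $L \in \mathbb{L}_j$ satisfies $L \supseteq H_j$; hence $L \subseteq M_i$ forces $H_j \subseteq M_i$, so the effect condition ``$L \in \mathbb{L}_j \Rightarrow L \not\subseteq M_i$'' is equivalent to the single requirement $H_j \not\subseteq M_i$. Likewise, if $M_j \subseteq M_i$ then $H_j \subseteq M_j \subseteq M_i$, so the margin condition $M_j \not\subseteq M_i$ is also implied by $H_j \not\subseteq M_i$. Thus it suffices to produce an ordering for which $i < j$ implies $H_j \not\subseteq M_i$.

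Finally I would build the ordering. Because $\prec$ is a strict partial order on the finitely many heads, it admits a linear extension; order the margins accordingly, so that $H_a \prec H_b$ implies $a < b$. To verify the reduced condition, suppose $i < j$ yet $H_j \subseteq M_i$. Combining this with $M_i \subseteq \an_{\G}(H_i)$ from the first step gives $H_j \subseteq \an_{\G}(H_i)$, and since $H_i \neq H_j$ this yields $H_j \prec H_i$, whence $j < i$, a contradiction. Hence the chosen ordering is hierarchical. The argument is almost entirely bookkeeping; the one step needing care is the translation $H_j \subseteq M_i \Rightarrow H_j \prec H_i$, which relies essentially on the containment $M_i \subseteq \an_{\G}(H_i)$ and on reading $\prec$ as ``$H_j \subseteq \an_{\G}(H_i)$ together with $H_j \neq H_i$''; establishing that containment via Proposition~\ref{prop:head-and-tail} is the only substantive point, the existence of a linear extension of a finite partial order being standard.
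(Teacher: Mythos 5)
Your proof is correct and follows essentially the same route as the paper's: both order the margins by a linear extension of the partial order $\prec$ on heads and derive the hierarchical property from the containment $M_i = H_i \cup T_i \subseteq \an_{\G}(H_i)$ supplied by Proposition~\ref{prop:head-and-tail}. The only cosmetic difference is that the paper separately proves the effect sets $\mathbb{L}_i$ are pairwise disjoint via a directed-path argument, whereas your reduction of both hierarchical conditions to the single statement $H_j \nsubseteq M_i$ subsumes that step automatically.
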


\begin{proof}
Firstly we show that for distinct heads $H_i$ and $H_j$, the collections $\mathbb{L}_i$ and $\mathbb{L}_j$ are disjoint.  To see this, assume for a contradiction that there exists $A$ such that $H_i \subseteq A \subseteq H_i \cup T_i$ and $H_j \subseteq A \subseteq H_j \cup T_j$.  Since $H_i \neq H_j$, assume without loss of generality that there exists $v \in H_i \cap H_j^c \subseteq A$.  

Then $v \in H_j \cup T_j$ implies that $v \in T_j$, and thus there is a directed path from $v$ to some $w \in H_j$.  Now, $w \notin H_i$, since $v,w \in H_i$ would imply that $H_i$ is not barren.  But if $w \in H_j \cap H_i^c$, then by the same argument as above we can find a directed path from $w$ to some $x \in H_i$.  Then $v \rightarrow \cdots \rightarrow w \rightarrow \cdots \rightarrow x$ is a directed path between elements of $H_i$, which is a contradiction.  Thus $\mathbb{L}_i$ and $\mathbb{L}_j$ are disjoint.

Now, %choose the labelling 
consider the partial ordering $\prec$ of heads defined in Section \ref{subsec:exists}: $H_i \prec H_j$ whenever $H_i \subset \an_{\G}(H_j)$ and $H_i \neq H_j$.  
Any total ordering which respects this partial ordering is hierarchical, because each set $A \in \mathbb{L}_i$ is a subset of the ancestors of $H_i$.
\end{proof}

We proceed to show that the ingenuous parameters for an ADMG $\G$ characterize the set of distributions which obey the global Markov property with respect to $\G$.

\begin{lemma} \label{lem:param}
For any sets $M$ and $L \subseteq M$, the collection of MLL parameters
\begin{align*}
\{\lambda_A^M(\ii_A) \;|\; L \subseteq A \subseteq M, \ii_M \in \tilde\X_M \},
\end{align*}
together with the $(|L|-1)$-dimensional marginal distributions of $X_L$ conditional on $X_{M \setminus L}$, smoothly parametrizes the distribution of $X_L$ conditional on $X_{M \setminus L}$.
\end{lemma}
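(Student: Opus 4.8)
The plan is to reduce the conditional distribution to an ordinary (conditional) multivariate-logistic parametrization, and then show that the top-order piece of that parametrization carries exactly the same information as the parameters $\lambda_A^M$ with $L\subseteq A$, while the remaining pieces are exactly the proper-subset conditional marginals. First I would fix a value $\ii_{M\setminus L}\in\X_{M\setminus L}$ and regard $p_{L\mid M\setminus L}(\cdot\mid\ii_{M\setminus L})$ as a strictly positive distribution on $\X_L$. By Theorem 2 of \citet{br:02}, applied to the complete hierarchical collection of pairs $\{(L',L') : \emptyset\neq L'\subseteq L\}$ (the multivariate logistic parameters of \citet{glonek:95}), this conditional distribution is smoothly parametrized by its own log-linear parameters, which I write $\lambda_{L'}^{L'}(\cdot\,;\,\ii_{M\setminus L})$ for $\emptyset\neq L'\subseteq L$; since each is a smooth function of $\ii_{M\setminus L}$, the whole conditional law of $X_L$ given $X_{M\setminus L}$ is smoothly parametrized by these as $L'$ and $\ii_{M\setminus L}$ vary. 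I would then split this collection into the top effect $L'=L$ and the proper effects $L'\subsetneq L$.

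For the proper effects, $\lambda_{L'}^{L'}(\cdot\,;\,\ii_{M\setminus L})$ is by definition computed from $p_{L'\mid M\setminus L}(\cdot\mid\ii_{M\setminus L})$ alone; conversely each conditional marginal $p_{L'\mid M\setminus L}$ with $L'\subsetneq L$ is recovered from $\{\lambda_B^B(\cdot\,;\,\ii_{M\setminus L}) : B\subseteq L'\}$ by a further application of \citet{br:02}. Hence $\{\lambda_{L'}^{L'} : L'\subsetneq L\}$ and the proper-subset (equivalently, the $(|L|-1)$-dimensional) conditional marginals carry the same information, smoothly and invertibly.

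The substantive step is the top effect, where I claim
\begin{align*}
\lambda_L^L(\ii_L\,;\,\ii_{M\setminus L}) = \sum_{S\subseteq M\setminus L}\lambda_{L\cup S}^M(\ii_L,\ii_S).
\end{align*}
To prove this I would write the conditional log-probability as $\log p_M(\ii_M)-\log p_{M\setminus L}(\ii_{M\setminus L})$, apply the explicit contrast formula (\ref{eqn:mllp}) over the coordinates of $L$ to extract $\lambda_L^L(\cdot\,;\,\ii_{M\setminus L})$, and observe that the normalizing term is constant in $\ii_L$ and so is annihilated by the contrast. Substituting $\log p_M=\sum_{A\subseteq M}\lambda_A^M$, every term with $L\nsubseteq A$ is killed, since some coordinate $v\in L\setminus A$ on which $\lambda_A^M$ does not depend forces $\sum_{\jj_v}(|\X_v|\mathbb{I}_{\{x_v=y_v\}}-1)=0$; while for $A=L\cup S$ the $L$-contrast acts as the identity on the zero-sum function $\jj_L\mapsto\lambda_{L\cup S}^M(\jj_L,\ii_S)$, returning its value at $\ii_L$ and yielding the displayed identity.

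It then remains to see that this identity is an invertible, smooth (indeed linear, on the log scale) change of coordinates between $\{\lambda_A^M : L\subseteq A\subseteq M\}$ and $\{\lambda_L^L(\cdot\,;\,\ii_{M\setminus L}) : \ii_{M\setminus L}\in\X_{M\setminus L}\}$. For this I would fix $\ii_L$ and read the right-hand side as a decomposition of the function $\ii_{M\setminus L}\mapsto\lambda_L^L(\ii_L\,;\,\ii_{M\setminus L})$ into a sum of terms $\lambda_{L\cup S}^M(\ii_L,\cdot)$, each depending only on the coordinates in $S$ and summing to zero over each of them; this is precisely the unique discrete interaction (ANOVA) decomposition in the variables $M\setminus L$, so the summands are recovered linearly, giving the inverse. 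Composing the three smooth equivalences then proves the lemma. The main obstacle is the top-effect identity together with its inversion: getting the bookkeeping of the contrasts right so that exactly the effects containing $L$ survive, and recognising the inverse as the uniqueness of the interaction decomposition over $M\setminus L$.
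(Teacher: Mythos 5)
Your proposal is correct, and its backbone coincides with the paper's: the identity you prove for the ``top effect'' is precisely Lemma \ref{lem:kappa} of the paper, which shows that $\kappa_{L|N}(\ii_L\,|\,\ii_N)\equiv\sum_{L\subseteq A\subseteq M}\lambda_A^M(\ii_A)$ equals the $L$-contrast of $\log p_M$ restricted to the slice $\jj_N=\ii_N$, i.e.\ the top-order zero-sum interaction of the conditional table. Where you diverge is in how this is converted into a parametrization statement. The paper takes the further step of forming the alternating sum $\sum_{A\subseteq L}(-1)^{|L\setminus A|}\kappa_{L|N}(\ii_L+\boldsymbol 1_A\,|\,\ii_N)$ to recover the \emph{local} (adjacent-cell) $|L|$-way interactions, and then invokes the classical result of \citet{csiszar:75} and \citet{rudas:98} that local top-order interactions together with the $(|L|-1)$-dimensional marginals smoothly parametrize the table. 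You instead stay entirely in the effect-coded world: you read $\kappa_{L|N}(\cdot\,|\,\ii_N)$ as the top multivariate-logistic parameter of the conditional distribution, apply \citet[Theorem 2]{br:02} to the collection $\{(L',L'):\emptyset\neq L'\subseteq L\}$ for each fixed $\ii_N$, identify the proper effects with the lower-dimensional conditional marginals, and recover the individual $\lambda_{L\cup S}^M$ from the family $\{\kappa_{L|N}(\cdot\,|\,\ii_N)\}_{\ii_N}$ via the uniqueness of the interaction (ANOVA) decomposition over $M\setminus L$. Your route avoids the somewhat delicate $\boldsymbol 1_A$-shift bookkeeping and makes the linear invertibility between $\{\lambda_A^M: L\subseteq A\subseteq M\}$ and the top conditional effects explicit, at the cost of leaning on Bergsma--Rudas rather than Csisz\'ar--Rudas for the completeness fact; the paper's route is more self-contained at the level of the key combinatorial cancellation but outsources the final step to a different classical citation. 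Both are valid proofs of the lemma.
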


A proof is given in Section \ref{subsec:lem:param}.

We now come to the main result of this section.

\begin{theorem} \label{thm:param}
The ingenuous parametrization $\tilde\Lambda(\PP^{\ing}(\G))$ of an ADMG $\G$ parametrizes precisely those distributions $P$ obeying the global Markov property with respect to $\G$.
\end{theorem}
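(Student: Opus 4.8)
\section*{Proof proposal}

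The plan is to realize $\tilde\Lambda(\PP^{\ing}(\G))$ as the surviving coordinates of a saturated, smooth parametrization once the constraints defining the global Markov property have been imposed. First I would fix a head-preserving completion $\bar\G$ of $\G$, which exists by the remark following the definition of such completions. Since $\bar\G$ is complete, I would argue that its ingenuous collection $\PP^{\ing}(\bar\G)$ is both hierarchical (Lemma \ref{lem:hier}) and complete, so that by \citet[Theorem 2]{br:02} the parameters $\tilde\Lambda(\PP^{\ing}(\bar\G))$ smoothly parametrize the set of all positive distributions on $\X$. Establishing completeness of $\PP^{\ing}(\bar\G)$ --- that every non-empty $A \subseteq V$ lies between a unique head of $\bar\G$ and its tail --- is the first point to verify, and rests on the head/tail combinatorics of Proposition \ref{prop:head-and-tail}.

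Next I would identify which parameters of $\tilde\Lambda(\PP^{\ing}(\bar\G))$ are \emph{not} retained in $\tilde\Lambda(\PP^{\ing}(\G))$, and show that forcing exactly these to vanish is equivalent to imposing the global Markov property for $\G$. Because $\mathcal{H}(\G) \subseteq \mathcal{H}(\bar\G)$, each head $H$ of $\G$ is also a head of $\bar\G$; however its tail in $\bar\G$ is generally larger, so the effect $A$ with $H \subseteq A \subseteq H \cup \tail_{\G}(H)$ sits in a strictly smaller margin in $\G$ than in $\bar\G$. The key claim is that the discarded effect/margin pairs are precisely the sub-effects appearing in the defining conditional independences of $\G$; translating each m-separation via Lemma \ref{lem:brindep} into a family of vanishing $\lambda_L^M$, I would show this family coincides with the complement of the ingenuous effects, and, using the factorization (\ref{eqn:pform}) of \citet{richardson:09}, that their simultaneous vanishing is equivalent to (not merely implied by) the global Markov property.

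It then remains to reconcile the margins, which is where Lemma \ref{lem:rbnnat} does the essential work. On the submodel cut out above the conditional independences of $\G$ hold, so for each retained effect $A$ with $H \subseteq A \subseteq H \cup \tail_{\G}(H)$ the parameter computed in the large margin $H \cup \tail_{\bar\G}(H)$ equals the one computed in the small margin $H \cup \tail_{\G}(H)$; applied head by head this shows that on the model the retained coordinates of $\tilde\Lambda(\PP^{\ing}(\bar\G))$ are exactly the ingenuous parameters $\tilde\Lambda(\PP^{\ing}(\G))$. Restricting the saturated diffeomorphism to the zero set of the discarded coordinates then exhibits $\tilde\Lambda(\PP^{\ing}(\G))$ as a smooth parametrization whose image is precisely the global Markov model (their joint range need not be a product, the subject of Section \ref{sec:vi}). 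As a cross-check I would verify the mechanism on $\G_1$, where discarding $\lambda_{13}^{13}$ and $\lambda_{14}^{124}, \lambda_{124}^{124}$ recovers $X_1 \indep X_3$ and $X_4 \indep X_1 \mid X_2$, while Lemma \ref{lem:rbnnat} identifies $\lambda_3^{13}, \lambda_4^{124}, \lambda_{24}^{124}$ with $\lambda_3^{3}, \lambda_4^{24}, \lambda_{24}^{24}$.

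The main obstacle I anticipate is the combinatorial heart of the second paragraph: proving the \emph{exact} correspondence between the discarded effect/margin pairs of $\bar\G$ and a set of conditional independence constraints equivalent to the full global Markov property --- showing simultaneously that nothing is over-constrained (every discarded $\lambda$ is forced to zero by an m-separation that genuinely holds in $\G$) and nothing is under-constrained (the vanishing of the discarded $\lambda$'s recovers every m-separation, via the factorization). This is delicate precisely because the tails, and hence the margins, differ between $\G$ and $\bar\G$, so the correspondence must be argued head by head using Proposition \ref{prop:head-and-tail} and marshalled by the hierarchical ordering $\prec$ rather than read off directly.
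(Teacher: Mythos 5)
Your proposal takes a genuinely different route from the paper, and it contains a genuine gap at its core. What you describe --- embedding $\tilde\Lambda(\PP^{\ing}(\G))$ in the complete parametrization of a head-preserving completion $\bar\G$, showing the discarded parameters vanish under the model, and reconciling margins via Lemma \ref{lem:rbnnat} --- is essentially the content and proof strategy of Theorem \ref{thm:subsp}, not of Theorem \ref{thm:param}. Crucially, the paper's proof of Theorem \ref{thm:subsp} only establishes the easy direction (under the global Markov property the discarded parameters are zero and the retained ones coincide with the ingenuous ones) and then \emph{invokes Theorem \ref{thm:param}} to conclude that the zero-set of the discarded parameters is not strictly larger than the model. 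That is precisely the step you isolate as your ``main obstacle'': showing that the vanishing of the discarded $\lambda$'s \emph{implies} every m-separation. Lemma \ref{lem:brindep} will not deliver this for free, because it requires the full family $\lambda_{A'B'C'}^{ABC}=0$ within the specific margin $ABC$, whereas your discarded effects live in the margins $H\cup\tail_{\bar\G}(H)$ dictated by $\bar\G$; establishing that these families line up, and that the resulting independences jointly generate the global Markov property, is not argued in your proposal and is exactly where an independent argument is needed. Two smaller but real omissions: you need the graphical fact that $H\indep(\bar T\setminus T)\mid T$ under the model (the paper's Lemma \ref{lem:mb2}, which has a non-trivial proof) before Lemma \ref{lem:rbnnat} can be applied to reconcile margins; and the completeness of $\PP^{\ing}(\bar\G)$, which you correctly flag, must actually be verified.

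The paper's own proof avoids all of this by a direct induction over the partial order $\prec$ on heads. For each head $H$ with tail $T$, it uses Richardson's ancestral-set factorization (\ref{eqn:pform}) applied to $\an_\G(H)\setminus\{v\}$ to recover the $(|H|-1)$-dimensional margins of $X_H$ conditional on $X_T$ from the conditional distributions of earlier heads, and then Lemma \ref{lem:param} shows that these margins together with $\{\lambda_A^{H\cup T}: H\subseteq A\subseteq H\cup T\}$ smoothly determine $p_{H|T}$. Piecing the heads together via (\ref{eqn:pform}) gives a bijection between parameter values and distributions obeying the global Markov property, with no need to pass through a completion. If you want to salvage your route, you would have to supply the converse combinatorial argument yourself --- at which point you would effectively be reproving Theorem \ref{thm:param} by other means --- so the inductive construction is the more economical path.
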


\begin{proof}  We proceed by induction.
Again we use the partial ordering $\prec$ on heads from Section \ref{subsec:exists}. 
For the base case, we know that singleton heads $\{h\}$ with empty tails are parametrized by the logits $\lambda_h^h$.

Now, suppose that we wish to find the distribution of a head $H$ conditional on its tail $T$.  Assume that we have the distribution of all heads $H'$ which precede $H$, conditional on their respective tails; we claim this is sufficient to give the $(|H|-1)$-dimensional marginal distributions of $H$ conditional on $T$.  

Let $v \in H$, and let $C = H \setminus \{v\}$ be a $(|H|-1)$-dimensional marginal of interest.  The set $A = \an_{\G}(H) \setminus \{v\}$ is ancestral, since $v$ cannot have (non-trivial) descendants in $\an_{\G}(H)$; in particular $C \cup T \subseteq A$.  Theorem 4 of \citet{richardson:09} states that the factorization in equation (\ref{eqn:pform}) holds for every ancestral set, so
\begin{align*}
p_A(\ii_A) &= \prod_{\substack{H' \in [A]_{\G} \\ T' = \tail (H)}} p_{H'|T'}(\ii_{H'} \,|\, \ii_{T'}).
\end{align*}
But all the probabilities in the product are known by our induction hypothesis, and the marginal distribution of $C$ conditional on $T$ is given by the distribution of $A$.

The ingenuous parametrization, by definition, contains $\lambda_A^{H\cup T}$ for $H \subseteq A \subseteq H\cup T$, and thus the result follows from Lemma \ref{lem:param}.
\end{proof}

\begin{example} \label{exm:small}
Returning to our running example, the graph $\G_1$ in Figure \ref{fig:exm} corresponds to the model
\begin{align*}
\Big\{ P \,\Big|\, X_1 \indep X_4 \,|\, X_2 \, [P] \mbox{ and } X_1 \indep X_3 \, [P]\Big\}.
\end{align*}
Theorem \ref{thm:param} tells us that this collection of distributions is precisely characterized by the ingenuous parameters for $\G_1$,
\begin{align*}
&\lambda_1^1 \qquad\qquad \lambda_2^{12} \qquad \lambda_{12}^{12} \qquad\qquad \lambda_3^{3} \qquad\qquad \lambda_{23}^{123} \qquad \lambda_{123}^{123}\\
&\lambda_4^{24} \qquad \lambda_{24}^{24} \qquad\qquad \lambda_{34}^{1234} \qquad \lambda_{134}^{1234} \qquad \lambda_{234}^{1234} \qquad \lambda_{1234}^{1234}.
\end{align*}
\end{example}

\subsection{Constraint-Based Model Description}

The results above show that the ingenuous parameters for an ADMG $\G$, like Richardson's parameters, provide precisely the information required to reconstruct a distribution obeying the global Markov property for $\G$.
%The results above show that the ingenuous parametrization provides a map describing the collection of probability distributions which satisfy the global Markov property with respect to $\G$.  
However, it is difficult to use this parametrization in practice unless we can evaluate the likelihood, which requires us to make explicit the map which we have implicitly defined from the ingenuous parameters to the joint probability distribution under the model.
For example, for the parameters in \citet{richardson:09} there is an explicit map from the parameters back to the joint distribution using a generalization of M\"{o}bius inversion. This was used by \citet{evans:10} to fit these models via maximum likelihood.
In contrast, the map from ingenuous parameters to the joint distribution cannot be written in closed form.

An alternative approach is to consider the ingenuous parametrization as part of a larger, complete parametrization of the saturated model, such that the additional parameters are constrained to be zero under the sub-model defined by $\G$.  
This enables us to fit the model using Lagrange-type algorithms, as in \citet{evans:11}. 

\begin{theorem} \label{thm:subsp}
Let $\G$ be an ADMG, and $\bar{\G}$ a head-preserving completion of $\G$.
%Let $\mathbb{P}^{\ing}(\bar{\G})$ be the ingenuous parametrization of $\bar{\G}$; then 
The ingenuous parametrization of $\G$ corresponds to setting
\begin{align*}
\lambda_L^M = 0
\end{align*}
for $(L,M) \in \mathbb{P}^{\ing}(\bar{\G})$ whenever $L$ does not appear as an effect in $\mathbb{P}^{\ing}(\G)$.  In particular, these constraints define the set of distributions which satisfy the global Markov property with respect to $\G$.
%The ingenuous parametrization of an ADMG $\G$ is a linear subspace of the ingenuous parametrization of its balanced completion $\bar{\G}$, possibly after relabelling some margins.
\end{theorem}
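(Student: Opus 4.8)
The plan is to treat $\tilde\Lambda(\mathbb{P}^{\ing}(\bar{\G}))$ as a global coordinate system on the saturated model and to show that the stated constraints carve out exactly the image of $\tilde\Lambda(\mathbb{P}^{\ing}(\G))$. First I would verify that for a \emph{complete} graph $\bar{\G}$ the collection $\mathbb{P}^{\ing}(\bar{\G})$ is hierarchical and complete in the sense of Definition \ref{definition:BRpre}. Hierarchicality is Lemma \ref{lem:hier}, and uniqueness of each effect is the disjointness of the $\mathbb{L}_i$ already shown there; for existence I would take any non-empty $A$ and set $H = \barren_{\bar{\G}}(A)$, arguing that any two vertices of $H$ must be joined by a bidirected edge (a directed edge would, by acyclicity, make one a descendant of the other and violate barrenness), so $H$ is bidirected-connected and hence a head, while any $u \in A \setminus H$ is a proper ancestor of some $h \in H$, whence acyclicity and completeness force the edge to be $u \rightarrow h$ or $u \leftrightarrow h$ and place $u \in \tail_{\bar{\G}}(H)$. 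Thus $H \subseteq A \subseteq H \cup \tail_{\bar{\G}}(H)$, so $A$ is an effect, and \citet[Theorem 2]{br:02} then gives that $\tilde\Lambda(\mathbb{P}^{\ing}(\bar{\G}))$ smoothly parametrizes the saturated model.

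I would then split the $\bar{\G}$-coordinates into the \emph{retained} parameters, indexed by effects $L$ that also occur in $\mathbb{P}^{\ing}(\G)$, and the \emph{constrained} parameters, indexed by the rest. The unifying mechanism is that for a head $H$ of $\G$, the completion margin $\bar M = H \cup \tail_{\bar{\G}}(H)$ exceeds the natural margin $M = H \cup \tail_{\G}(H)$ only by $B = \tail_{\bar{\G}}(H) \setminus \tail_{\G}(H)$, and the global Markov property for $\G$ supplies the m-separation $H \indep B \mid \tail_{\G}(H)$. This single independence does two things at once. Through Lemma \ref{lem:rbnnat}, applied to the triple $(H, B, \tail_{\G}(H))$ with $D = L \setminus H$, it gives $\lambda_L^{\bar M} = \lambda_L^{M}$ for every retained effect $L$ of $H$, which justifies the claim that the constrained system ``corresponds to'' the ingenuous parametrization of $\G$. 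Through Lemma \ref{lem:brindep} it forces $\lambda_L^{\bar M} = 0$ for exactly those effects of $H$ that meet $B$, i.e.\ the constrained effects of $H$.

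Effects associated with heads of $\bar{\G}$ that are not heads of $\G$ (which a completion may introduce, for instance when two vertices non-adjacent in $\G$ become bidirected-joined) are not of this per-head form, so for the forward inclusion I would prove the general statement: for each constrained effect $L$ with completion margin $\bar M$ there is a partition $\bar M = A \sqcup B' \sqcup C$ with $A$ m-separated from $B'$ given $C$ in $\G$ and with $L$ meeting both $A$ and $B'$, so that Lemma \ref{lem:brindep} again yields $\lambda_L^{\bar M} = 0$ on the $\G$-model. Because Lemma \ref{lem:brindep} is an equivalence, running these arguments in reverse — checking that the constrained effects attached to each such independence exhaust precisely the parameters Lemma \ref{lem:brindep} equates with it over the margin $\bar M$ — recovers every defining independence from the constraints, giving the reverse inclusion. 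Together with Theorem \ref{thm:param}, which identifies the free image of the retained coordinates with the $\G$-model, this establishes equality of the constrained set and the $\G$-model.

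The main obstacle is the graph-theoretic content of the previous paragraph: showing that ``$L$ is not an effect of $\mathbb{P}^{\ing}(\G)$'' always produces such a splitting m-separation inside its completion margin, and that these splittings encode precisely the global Markov property rather than something stronger or weaker. This requires controlling how a head-preserving completion enlarges tails and districts — in particular tracking the new ancestors and bidirected connections created by the added edges — so that the margin $\bar M$ of each excluded effect coincides with the margin $A \cup B' \cup C$ of an independence genuinely implied by $\G$. By contrast, once the correct independences are in hand, the matching of parameters across margins via Lemma \ref{lem:rbnnat} is routine.
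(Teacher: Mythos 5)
Your overall architecture matches the paper's: embed the model in the complete, hierarchical parametrization $\mathbb{P}^{\ing}(\bar{\G})$ of the saturated model, use Lemma \ref{lem:brindep} to kill the excluded effects and Lemma \ref{lem:rbnnat} to identify the retained ones with the ingenuous parameters of $\G$, with the pivotal independence being $H \indep \bigl(\tail_{\bar{\G}}(H) \setminus \tail_{\G}(H)\bigr) \mid \tail_{\G}(H)$. But that independence is precisely where the proof lives, and you assert it rather than prove it --- indeed you explicitly defer it as ``the main obstacle.'' The paper isolates it as Lemma \ref{lem:mb2} and proves it by a path argument whose crux is head-preservation: an m-connecting path from $H$ to $\bar{T}\setminus T$ given $T$ would have to leave $T$ at some first vertex $y \in \dis_{\G}(H) \setminus \an_{\G}(H)$, whence $H \cup \{y\}$ is a head of $\G$, hence of $\bar{\G}$, forcing $y \in \an_{\bar{\G}}(H) = H \cup \bar{T}$ --- a contradiction. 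Without some such argument the forward inclusion is unproved, and this is exactly the point at which non-head-preserving completions (cf.\ $\tilde{\G}_1$ in Figure \ref{fig:nosub}) fail, so no soft argument can substitute. Similarly, for heads $H$ of $\bar{\G}$ that are not heads of $\G$ you promise a ``general splitting'' statement; the paper's concrete version is that such an $H$ is barren in $\G$ and decomposes as $K \cup L$ with no bidirected connection between the pieces, giving $K \indep L \mid \bar{T}$ directly.

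A secondary concern is your reverse inclusion. Trying to ``run Lemma \ref{lem:brindep} in reverse'' to recover each defining independence $A \indep B \mid C$ from the vanishing constraints is problematic, because that lemma requires the parameters $\lambda_{A'B'C'}^{ABC}$ to vanish \emph{in the margin $ABC$}, whereas your constrained parameters live in the margins of $\mathbb{P}^{\ing}(\bar{\G})$; transporting them back with Lemma \ref{lem:rbnnat} presupposes independences you have not yet established, which is circular. The paper avoids this entirely: once every $\bar{\G}$-parameter is shown to be either zero or equal to an ingenuous parameter of $\G$ on the model, the constrained set and the model are both in bijection with the same free coordinates (the latter by Theorem \ref{thm:param}), and injectivity of the saturated parametrization forces them to coincide. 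I would adopt that route rather than attempt the reverse application of Lemma \ref{lem:brindep}. Your verification that $\mathbb{P}^{\ing}(\bar{\G})$ is complete (via $H = \barren_{\bar{\G}}(A)$) is correct and is a useful point the paper leaves implicit.
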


The proof of this result is found in Section \ref{subsec:thm:subsp}

\begin{example}
Consider again the ADMG $\G_1$ in Figure \ref{fig:exm}; a possible head-preserving completion $\bar{\G}_1$ (shown in Figure \ref{fig:comp}) is obtained by adding the edges $1 \rightarrow 3$ and $1 \rightarrow 4$.
The ingenuous parametrization for $\bar{\G}_1$ is
\begin{center}
\begin{tabular}{r|l}
$M$&$\mathbb{L}$\\
\hline
1&1\\
2&2, 12\\
13&3, 13\\
123&23, 123\\
124&4, 14, 24, 124\\
1234&34, 134, 234, 1234.\\
\end{tabular}
\end{center}

The effects found in $\PP^{\ing}(\bar{\G}_1)$ but not in $\PP^{\ing}(\G_1)$ are 13, 14, and 124, and indeed the sub-model defined by $\G_1$ corresponds to setting 
\begin{align*}
\lambda_{13}^{13} = \lambda_{14}^{124} = \lambda_{124}^{124} = 0;
\end{align*}
under this model the following equalities hold by Lemma \ref{lem:rbnnat}:
\begin{align*}
&\lambda_{4}^{124} = \lambda_{4}^{24} && \lambda_{24}^{124} = \lambda_{24}^{24}.
\end{align*}
Removing the zero parameters in $\PP^{\ing}(\bar\G_1)$ and renaming two others according to the above equations returns us to the ingenuous parametrization of $\G_1$.

Theorem \ref{thm:subsp} shows that we can fit the model defined by $\G_1$ by maximum likelihood simply by maximizing the log-likelihood subject to $\lambda_{13}^{13} = \lambda_{14}^{124} = \lambda_{124}^{124} = 0$.
In particular, this approach always provides a list of independent constraints which characterize the model.
\end{example}

An obvious question which arises is whether \emph{any} completion of a graph will lead to a complete parametrization with the property of Theorem \ref{thm:subsp}.
We can obtain a counterexample by considering the complete graph $\tilde{\G}_1$ in Figure \ref{fig:nosub}, which has ingenuous parametrization
\begin{center}
\begin{tabular}{r|l}
$M$&$\mathbb{L}$\\
\hline
3&3\\
13&1, 13\\
123&2, 12, 23, 123\\
1234&4, 14, 24, 124, 34, 134, 234, 1234.\\
\end{tabular}
\end{center}
The graph $\G_1$ in Figure \ref{fig:exm} is a subgraph of $\tilde{\G}_1$, and corresponds to the model 
%$X_1 \indep X_3$ and $X_1 \indep X_4 \,|\, X_2$, which can be 
obtained by setting $\lambda_{13}^{13} = \lambda_{14}^{124} = \lambda_{124}^{124} = 0$; however, these last two parameters do not appear in the ingenuous parametrization of $\tilde{\G}_1$, and so there is no way to enforce the sub-model as a linear constraint.  

$\tilde{\G}_1$ is, of course, not head-preserving.  Such completions may still lead to parametrizations which satisfy the property of Theorem \ref{thm:subsp}: for example, if the edge $1 \rightarrow 3$ is added to the graph in Figure \ref{fig:chains}(a), this destroys the head $\{1,2,3\}$, but the sub-model corresponds to $\lambda_{13}^{13} = 0$, which is a parameter in the complete graph.
%
%The balanced completion of $\G_1$ is the graph $\bar{\G}_1$ in Figure \ref{fig:comp}(a), and by Theorem \ref{thm:subsp} we know that the model corresponding to $\G_1$ \emph{is} defined by linear restrictions on the ingenuous parametrization of $\bar{\G}_1$.

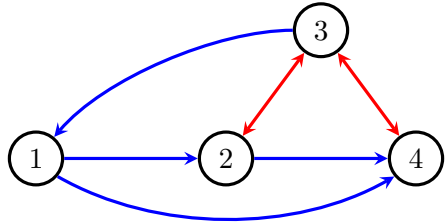
\begin{figure}
\begin{center}
 \begin{tikzpicture}
 [rv/.style={circle, draw, very thick, minimum size=7mm}, node distance=25mm, >=stealth]
 \pgfsetarrows{latex-latex};
 \node[rv] (1) {1};
 \node[rv, right of=1] (2) {2};
 \node[rv, above of=2, xshift=12.5mm, yshift=-8mm] (3) {3};
 \node[rv, right of=2] (4) {4};
 \draw[->, very thick, color=blue] (1) -- (2);
 \draw[->, very thick, color=blue] (2) -- (4);
 \draw[->, very thick, color=blue] (3.180) .. controls +(180:1) and +(50:1) .. (1.50);
% \draw[->, very thick, color=blue] (3) -- (1);
 \draw[<->, very thick, color=red] (2) -- (3);
 \draw[<->, very thick, color=red] (3) -- (4);
 \draw[->, very thick, color=blue] (1.320) .. controls +(330:1.5) and +(210:1.5) .. (4.220);
% \node[below of=2, yshift=10mm] {(a)};
% \node[rv, right of=4, xshift=10mm] (1a) {1};
% \node[rv, right of=1a] (2a) {2};
% \node[rv, above of=2a, xshift=12.5mm, yshift=-8mm] (3a) {3};
% \node[rv, right of=2a] (4a) {4};
% \draw[->, very thick, color=blue] (1a) -- (2a);
% \draw[<-, very thick, color=blue] (3a.180) .. controls +(180:1) and +(50:1) .. (1a.50);
% \draw[->, very thick, color=blue] (1a.320) .. controls +(330:1.5) and +(210:1.5) .. (4a.220);
% \draw[->, very thick, color=blue] (2a) -- (4a);
% \draw[<->, very thick, color=red] (2a) -- (3a);
% \draw[<->, very thick, color=red] (3a) -- (4a);
% \node[below of=2a, yshift=10mm] {(b)};
 \end{tikzpicture}
 \end{center}
\caption{A complete ADMG, $\tilde{\G}_1$, of which $\G_1$ is a subgraph, but whose ingenuous parametrization does not contain the model described by $\G_1$ as a linear sub-space because the associated completion is not head-preserving.}
%\caption{(a) A complete ADMG, $\tilde{\G}_2$, of which $\G_2$ is a subgraph, and (b) $\bar{\G}_2$, the balanced completion of $\G_2$.}
\label{fig:nosub}
\end{figure}

\subsection{Relationship To Prior Work}

\citet{rudas:10} parametrize chain graph models of multivariate regression type, also known as type IV chain graph models, using marginal log-linear parameters.  Type IV chain graph models are a special case of ADMG models, in the sense that by replacing the undirected edges in a type IV chain graph with bidirected edges, the global Markov property on the resulting ADMG is equivalent to the Markov property for the chain graph \citep[see][]{drton:09}.  
The graphs in Figure \ref{fig:chains} are examples of Type IV models.  
However, there are models in the class of ADMGs which do not correspond to any chain graph, such as the one described by $\G_1$ in Figure \ref{fig:exm}.

The parametrization of \citet{rudas:10} uses different choices of margins to the ingenuous parametrization, though their parameters can be shown to be equal to the parameters considered here under the global Markov property, using Lemma \ref{lem:rbnnat}.  Thus the variation dependence properties of that parametrization are identical to those of the ingenuous parametrization (see next section).  \citet{forcina:10} provide an algorithm which gives a range of `admissible' margins in which collections of conditional independence constraints may be defined.

\citet{marchetti:11} also parametrize type IV chain graph models in a similar manner to \citet{rudas:10}, in that case using multivariate logistic contrasts.

\section{Variation Independence} \label{sec:vi}

As discussed in the introduction, the interpretation of parameters and the construction of prior distributions
is simpler when parameters are variation independent.

\begin{definition}
Let $\theta_i$, for $i = 1,\ldots,k$ be a collection of parameters such that $\theta_i$ takes all values in the set $\Theta_i$.  We say that the vector $\boldsymbol \theta = (\theta_1, \ldots, \theta_k)$ is \emph{variation independent} if $\boldsymbol \theta$ can take every value in the set $\Theta_1 \times \cdots \times \Theta_k$.
\end{definition}

\citet{br:02} characterize precisely which hierarchical and complete para\-metrizations are variation independent, using a notion they call ordered decomposability.
%We will attempt to do the same for ingenuous parametrizations.  
We now do this for ingenuous parametrizations.  

\begin{definition}
A collection of sets $\mathbb{M} = \{M_1, \ldots, M_k\}$ is \emph{incomparable} if $M_i \nsubseteq M_j$ for every $i \neq j$.

A collection $\mathbb{M}$ of incomparable subsets of $V$ is \emph{decomposable} if it has at most two elements, or there is an ordering $M_1, \ldots, M_k$ on the elements of $\mathbb{M}$ wherein for each $i = 3, \ldots, k$, there exists $j_i < i$ such that
\begin{align*}
\left(\bigcup_{l=1}^{i-1} M_l\right) \cap M_i = M_{j_i} \cap M_i.
\end{align*}
This is also known as the \emph{running intersection property}.

A collection $\mathbb{M}$ of (possibly comparable) subsets is \emph{ordered decomposable} if it has at most two elements, or there is an ordering $M_1, \ldots, M_k$ such that $M_i \nsubseteq M_j$ for $i>j$,
and for each $i = 3,\ldots, k$, the inclusion maximal elements of $\{M_1, \ldots, M_i\}$ form a decomposable collection.  We say that a collection $\PP$ of parameters is ordered decomposable if there is an ordering on the margins $\mathbb{M}$ which is both hierarchical and ordered decomposable.
\end{definition}

The following example is found in \citet{br:02}.

\begin{example} \label{exm:nori}
Let %$\mathbb{M} = \{\{1,2\},\{1,3\},\{2,3\},\{1,2,3\}\}$
$\mathbb{M} = \{12,13,23,123\}$.  In order to have a hierarchical ordering of these margins it is clear that the set $123$ must come last, but there is no way to order the collection of inclusion maximal margins $\{12,13,23\}$ such that it has the running intersection property.  Thus $\mathbb{M}$ is not ordered decomposable.
\end{example}

The next result links variation independence to ordered decomposability.

\begin{theorem}[\citet{br:02}, Theorem 4]
Let $\mathbb{P}$ be a parametrization which is hierarchical and complete.  Then the parameters $\tilde\Lambda(\PP)$ are variation independent if and only if $\mathbb{P}$ is ordered decomposable.
\end{theorem}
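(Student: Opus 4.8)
The plan is to read variation independence through the smooth bijection onto the saturated model supplied by \citet[Theorem~2]{br:02}: fixing a hierarchical ordering $M_1,\dots,M_k$, the parameters $\tilde\Lambda(\PP)$ are variation independent exactly when, as $P$ ranges over all positive distributions on $\X$, the parameter vector attains every point of the product $\prod_{(L,M)\in\PP}\Theta_L^M$, where $\Theta_L^M$ denotes the (open, connected) range of the single parameter $\lambda_L^M$ within the saturated model on the margin $M$. Both implications are then statements about whether the image of this bijection is a full product.

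For the direction that ordered decomposability implies variation independence, I would induct on $k$, using an ordering that is simultaneously hierarchical and ordered decomposable. The base cases $k\le 2$ reduce to two facts: a single margin carries an ordinary log-linear model, which is variation independent; and for a comparable pair $S\subseteq M$ the two-margin parametrization is variation independent, which is precisely the statement that the marginal distribution of $X_S$ is variation independent from the interaction parameters $\{\lambda_L^{M}:L\not\subseteq S\}$ involving $M\setminus S$. For the inductive step, let $M_k$ be the last margin; being last in a hierarchical ordering it is inclusion maximal. Writing $U\equiv\bigcup_{l<k}M_l$ and $S\equiv M_k\cap U$, ordered decomposability makes the inclusion-maximal elements of $\{M_1,\dots,M_k\}$ satisfy the running intersection property, so $S=M_k\cap M_j$ for some $j<k$ and, since distinct maximal margins are incomparable, $S\subsetneq M_k$; thus the ``private'' part $M_k\setminus S$ is non-empty and disjoint from $U$. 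One checks that the effects first assigned to $M_k$ are exactly $\{L\subseteq M_k:L\not\subseteq S\}$, i.e.\ those meeting the private part, and that the prefix $\{M_1,\dots,M_{k-1}\}$ is again complete, hierarchical and ordered decomposable over $U$.

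Given these structural facts the construction is explicit. By the inductive hypothesis the first $k-1$ blocks freely parametrize a positive distribution $Q$ on $U$; given any admissible values for the new block, the comparable-pair base case produces a positive $R$ on $M_k$ whose $S$-margin equals that of $Q$ and which realizes those values. I then form the Markov combination $P(x)=Q(x_U)\,R(x_{M_k})/Q_S(x_S)$ over the separator $S$, the unique positive distribution under which $(M_k\setminus S)\indep (U\setminus S)\mid S$ with the prescribed margins. Because this combination preserves both the $U$-margin (equal to $Q$) and the $M_k$-margin (equal to $R$), every earlier effect, being a function of some $M_l$-margin with $M_l\subseteq U$, is unchanged, while each new effect, a function of the $M_k$-margin, equals its prescribed value; hence the whole product of ranges is attained. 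The main obstacle is exactly this step: it works only because the running intersection property forces a single separator $M_j$ and a non-empty private part $M_k\setminus S$, so that the new interactions are genuinely free rather than functionally determined by $Q$.

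For the converse I would argue by contraposition. If no hierarchical ordering is ordered decomposable, then for every hierarchical ordering the inclusion-maximal elements of some prefix fail the running intersection property, and a minimal such failure reduces, after marginalizing to the vertices involved, to the pattern of Example~\ref{exm:nori}: three pairwise-overlapping margins behaving like $\{12,13,23\}$, the third contained in the union of the first two. The associated parameters $\lambda_{12}^{12},\lambda_{13}^{13},\lambda_{23}^{23}$ are then the three bivariate interactions of a single trivariate distribution, and these cannot be prescribed independently: the three two-dimensional margins of a trivariate distribution are subject to compatibility (Fréchet-type) constraints, so suitable extreme choices of the three interactions admit no common distribution. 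Consequently the image of the parameter map is a proper subset of $\prod_{(L,M)}\Theta_L^M$, variation independence fails, and the equivalence is established.
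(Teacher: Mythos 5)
First, a point of reference: the paper does not prove this statement at all --- it is quoted directly from \citet{br:02} (their Theorem 4) and used as a black box --- so there is no internal proof to compare against, and your sketch has to stand on its own. On its own terms it has the right general shape (induction over margins for sufficiency, Fr\'echet-type incompatibility for necessity), but both directions contain genuine gaps. In the forward direction, your inductive step assumes that $S \equiv M_k \cap \bigcup_{l<k} M_l$ equals $M_k \cap M_j$ for a single earlier margin, that the ``private'' part $M_k \setminus S$ is non-empty, and that the effects first assigned to $M_k$ are exactly those not contained in $S$. All three claims fail as soon as earlier margins are nested inside $M_k$, which hierarchicality permits: take $\mathbb{M} = \{12, 23, 123\}$ with $\mathbb{L}_3 = \{13, 123\}$, which is hierarchical, complete and ordered decomposable. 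Here $S = M_3$, the private part is empty, both new effects are subsets of $S$, and there is no separator over which to form a Markov combination. What is actually needed at this step is that, given two (consistently chosen) prescribed margins $12$ and $23$, the canonical parameters $\lambda^{123}_{13}$ and $\lambda^{123}_{123}$ still range freely over $\mathbb{R}$ --- i.e.\ the variation independence of mixed mean/canonical parametrizations of the multinomial exponential family, together with the fact that the prescribed margins themselves form a decomposable (hence compatible) collection. That exponential-family input is the engine of the Bergsma--Rudas proof, and your construction never invokes it; the Markov-combination argument only covers the special case of incomparable margins meeting in a single separator.

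In the converse direction, the claimed reduction of an arbitrary failure of ordered decomposability to the three-margin pattern of Example~\ref{exm:nori} is false. The collection $\{12, 23, 34, 14\}$ (completed by a final margin $1234$, which any hierarchical ordering must place last) fails the running intersection property in every ordering, yet contains no triple of pairwise-overlapping margins with one contained in the union of the other two; the incompatibility there comes from a Fr\'echet-type argument around a four-cycle of margins, not from the trivariate pattern you describe. A correct proof of necessity must produce a witness of variation dependence for \emph{every} hierarchical ordering simultaneously and for every combinatorial mode of failure of decomposability, which is substantially more work than the single reduction you propose. As written, both halves are plausibility arguments rather than proofs, and the missing ingredient in each case is precisely the part of \citet{br:02} that does the heavy lifting.
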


As previously noted, the ingenuous parametrization is not complete in general, and so we cannot apply the above result directly to characterize its variation dependence.  
However, by constructing complete parametrizations of which the ingenuous parametrizations are linear sub-models, we can obtain the following.

%Suppose that $\G$ contains no heads of size $\geq 3$.  We will construct a complete, hierarchical and variation independent parametrization, and then show that it is equivalent to the ingenuous parametrization.

\begin{theorem} \label{thm:ingVI}
The ingenuous parametrization for an ADMG $\G$ is variation independent if and only if $\G$ contains no heads of size greater than or equal to 3.
\end{theorem}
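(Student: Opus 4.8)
My plan is to organise the whole argument head by head along the partial order $\prec$, using Lemma~\ref{lem:param} as the main tool. For a head $H_i$ with tail $T_i$ and margin $M_i = H_i\cup T_i$, the ingenuous effects attached to $H_i$ form the block $\{\lambda_A^{M_i} : H_i\subseteq A\subseteq M_i\}$, and Lemma~\ref{lem:param} (with $L=H_i$) says precisely that this block, together with the $(|H_i|-1)$-dimensional marginals of $X_{H_i}\mid X_{T_i}$, smoothly parametrises $X_{H_i}\mid X_{T_i}$. The induction in the proof of Theorem~\ref{thm:param} shows those lower marginals are functions of the parameters attached to heads preceding $H_i$. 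So the parametrisation is variation independent exactly when, for every head, its block sweeps out a fixed product set as the preceding blocks vary, and the size of $H_i$ is what controls whether this happens.

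For the direction that small heads give variation independence I would argue that the block for each head sweeps out a fixed full set irrespective of the past, and package this cleanly through a completion. Concretely, I would exhibit a head-preserving completion $\bar\G$ whose collection of margins is ordered decomposable; the point is that with only singleton and doubleton heads one can never be forced into the non-orderable pattern $\{12,13,23,123\}$ of Example~\ref{exm:nori}, since producing three pairwise margins together with their union as ingenuous margins would require either a head of size three or a single head carrying two different tails. Given such a $\bar\G$, Theorem~4 of \citet{br:02} makes $\tilde\Lambda(\PP^{\ing}(\bar\G))$ variation independent, Theorem~\ref{thm:subsp} realises $\PP^{\ing}(\G)$ as the restriction obtained by setting the extra effects to zero, and restricting a product range by fixing some coordinates to zero again yields a product range; Lemma~\ref{lem:rbnnat} identifies the relabelled survivors, so $\tilde\Lambda(\PP^{\ing}(\G))$ is variation independent.

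For the converse, that a head of size at least three destroys variation independence, I would take a $\prec$-minimal head $H$ with $|H|\ge 3$. All heads below it then have size at most two, so by the case just treated the preceding parameters are genuinely variation independent and can be steered freely. Conditioning on a single stratum $X_T=t$ turns the block for $H$ into the top-order interaction of an $|H|$-way table whose every $(|H|-1)$-way margin is already fixed by the preceding parameters, and for a table on at least three binary coordinates prescribing all $(|H|-1)$-way margins confines the top interaction to a bounded interval whose endpoints depend on those margins. Working directly in the $\lambda$-coordinates I would fix every coordinate except $\lambda_H^{M}$ at values $r_1$ realising one set of lower margins, choose a target value $v$ of $\lambda_H^{M}$ attainable only for a different admissible set $r_2$, and conclude that $(v,r_1)$ lies in the coordinatewise product of the ranges but not in the image; hence the parametrisation is not variation independent.

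The step I expect to be the main obstacle is the completion in the forward direction: proving that small heads always admit a head-preserving completion with ordered decomposable margins. Such a completion must add only directed edges, since adding a bidirected edge could fuse existing districts into a new head of size three; yet added directed edges enlarge ancestor sets and could \emph{a priori} render some barren triple bidirected-connected in its new ancestral subgraph, again creating a fresh large head. Ruling this out while simultaneously arranging the running-intersection property on the resulting margins is the combinatorial heart of the matter. The converse carries only a milder technical point — expressing the single-stratum positivity bound as a constraint on the tail-crossing contrasts $\lambda_A^{M}$ — which I would dispatch by passing to three vertices of $H$ and the induced $2\times2\times2$ conditional table, where the bound is the explicit Fr\'echet constraint on the conditional odds ratio.
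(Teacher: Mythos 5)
Your proposal leaves the decisive step of each direction unproven. For the forward direction, you reduce everything to the existence of a head-preserving completion $\bar\G$ whose ingenuous margins are ordered decomposable, and you yourself flag this as ``the combinatorial heart of the matter'' without supplying it; as it stands the argument is therefore incomplete, and it is not obvious that such a completion always exists. The paper avoids this question entirely: rather than completing the graph, it fixes a topological ordering and, for each vertex $i$, introduces the nested chain of ancestral margins $A_0 \subset \cdots \subset A_k = [i]$ obtained by successively stripping off $\dec_\G(j_l)$ for the vertices $j_1 < \cdots < j_k$ forming two-element heads with $i$. Because at each stage there are at most two inclusion-maximal margins, ordered decomposability is automatic, so Bergsma--Rudas gives variation independence of this complete parametrization; the ordered local Markov property together with Lemmas \ref{lem:rbnnat} and \ref{lem:brindep} then shows each of its parameters is either identically zero under $\G$ or equal to an ingenuous parameter. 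These margins are not of the form $H\cup T$ for any completed graph, so the paper's route genuinely sidesteps the completion problem rather than solving it; if you want to keep your route you must actually construct the completion and verify the running intersection property.

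For the converse, the claim that the preceding parameters ``can be steered freely'' to realize an arbitrary admissible set of $(|H|-1)$-way margins of $X_H\mid X_T$ is exactly where the difficulty sits, and your plan to ``pass to three vertices of $H$'' does not secure it: for a head $H$ with $|H|\ge 3$, a pair of its vertices need not itself be a head (e.g.\ $\{1,3\}$ in the bidirected $3$-chain), in which case there is no ingenuous parameter controlling that two-way margin --- the pair is instead forced to be conditionally independent. The paper supplies the missing combinatorial ingredient as Lemma \ref{lem:heads}: any ADMG with a head of size $\ge 3$ contains two heads $\{v_1,v_2\}$ and $\{v_2,v_3\}$ with $\{v_1,v_2,v_3\}$ barren. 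With that triple in hand one sets the univariate parameters to make each $v_i$ conditionally uniform, drives the two available pairwise parameters up so that $P(v_i\ne v_{i+1}\mid A)\le 2\epsilon$, and uses the fact that $v_1\indep v_3\mid A$ is either implied by the graph or enforceable under the assumed variation independence to obtain the Fr\'echet contradiction $\tfrac14 < 2\epsilon$. Your instinct to invoke the Fr\'echet constraint on a $2\times2\times2$ conditional table is the right one, but without a lemma guaranteeing that the two overlapping pairs are heads (so that their interaction parameters exist and are yours to set) while the third pair is constrained, the construction cannot be carried out.
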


The proof of this result is found in Section \ref{subsec:thm:ingVI}.

\begin{example}
The graph $\G_1$ in Figure \ref{fig:exm} has maximum head size 2, and therefore the associated ingenuous parametrization is variation independent.

Likewise the graphs in Figure \ref{fig:flu2}(a) and (b) contain no heads of size greater than $2$, so that the resulting ingenuous parameters are variation independent.  Note that this was not true of the parameters given by \citet{richardson:09}.
\end{example}

%****Might be nice to report the parameters for the model (b) in Section 6 (or in a web-appendix)?  since they are variation independent****

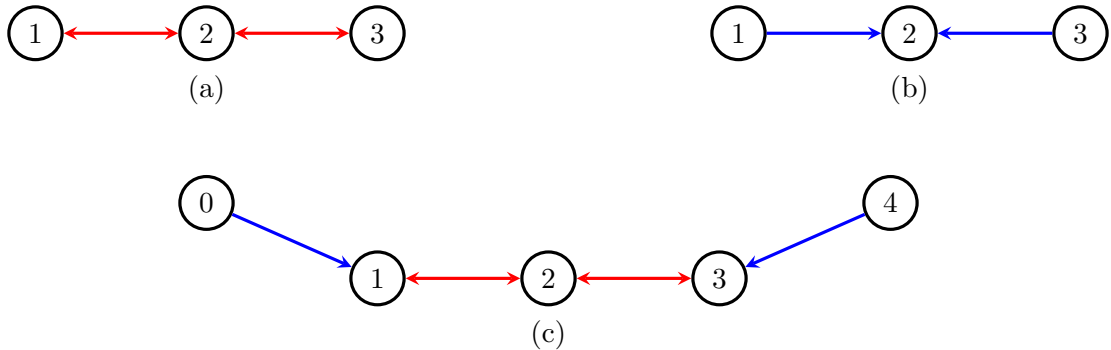
\begin{figure}
\begin{center}
 \begin{tikzpicture}
 [rv/.style={circle, draw, very thick, minimum size=7mm}, node distance=22.5mm, >=stealth]
 \pgfsetarrows{latex-latex};
 \node[rv] (1) {1};
 \node[rv, right of=1] (2) {2};
 \node[rv, right of=2] (3) {3};
 \draw[<->, very thick, color=red] (1) -- (2);
 \draw[<->, very thick, color=red] (2) -- (3);
 \node[below of=2, yshift=15mm] {(a)};
 \node[rv, right of=3, xshift=25mm] (1a) {1};
 \node[rv, right of=1a] (2a) {2};
 \node[rv, right of=2a] (3a) {3};
 \draw[->, very thick, color=blue] (1a) -- (2a);
 \draw[<-, very thick, color=blue] (2a) -- (3a);
 \node[below of=2a, yshift=15mm] {(b)};
 \node[rv, below of=2, yshift=0mm] (0b) {0};
 \node[rv, right of=0b, yshift=-10mm] (1b) {1};
 \node[rv, right of=1b] (2b) {2};
 \node[rv, right of=2b] (3b) {3};
 \node[rv, right of=3b, yshift=10mm] (4b) {4};
 \draw[->, very thick, color=blue] (0b) -- (1b);
 \draw[->, very thick, color=blue] (4b) -- (3b);
 \draw[<->, very thick, color=red] (1b) -- (2b);
 \draw[<->, very thick, color=red] (2b) -- (3b);
 \node[below of=2b, yshift=15mm] {(c)};
  \end{tikzpicture}
  \end{center}
\caption{(a) a graph with a variation dependent ingenuous parametrization; (b) a Markov equivalent graph to (a) with a variation independent ingenuous parametrization; (c) a graph with no variation independent MLL parametrization.}
\label{fig:chains} 
\end{figure}

\begin{example}
The bidirected 3-chain shown in Figure \ref{fig:chains}(a) has the head $123$, and therefore its ingenuous parametrization is variation dependent.  This can easily be seen directly: in the binary case, for example, if the parameters $\lambda_{12}^{12}(0)$ and $\lambda_{23}^{23}(0)$ are chosen to be very large, this induces very strong dependence between the variables $X_1$ and $X_2$, and between $X_2$ and $X_3$ respectively.  If these correlations are chosen to be too large, then it is impossible for $X_1$ and $X_3$ to be marginally independent, which is implied by the graph.

Observe that we could use the Markov equivalent graph in Figure \ref{fig:chains}(b), which has no heads of size 3, and thus obtain a variation independent parametrization of the same model.  However, if we add incident arrows as shown in Figure \ref{fig:chains}(c), we obtain a graph where such a trick is not possible.  
In fact this third graph has no variation independent parametrization in the Bergsma and Rudas framework, since it requires $\lambda_{0124}^{0124} = \lambda_{0134}^{0134} = \lambda_{0234}^{0234} = 0$, and these margins cannot be ordered in a way which satisfies the running intersection property (see Example \ref{exm:nori}).
\end{example}

In general, it would be sensible for a statistician concerned about variation dependence to choose a graph from the Markov equivalence class created by their model which has the smallest possible maximum head size.  This could be achieved by reducing the number of bidirected edges in the graph, where possible; see, for example, \citet{ali:05} and \citet{drton:richardson:08a} for algorithms for finding the graph with the minimal number of arrowheads
in a given Markov equivalence class.

\begin{example} \label{exm:lupparelli}

The bidirected 4-cycle, shown in Figure \ref{fig:4cyc}, contains a head of size 4, and so its ingenuous parametrization is variation dependent.  
\begin{figure}%[tbp]
\begin{center}
 \begin{tikzpicture}
 [rv/.style={circle, draw, very thick, minimum size=7mm}, node distance=25mm, >=stealth]
 \pgfsetarrows{latex-latex};
 \node[rv] (1) {1};
 \node[rv, right of=1] (2) {2};
 \node[rv, below of=2] (3) {3};
 \node[rv, below of=1] (4) {4};
 \draw[<->, very thick, color=red] (1) -- (2);
 \draw[<->, very thick, color=red] (2) -- (3);
 \draw[<->, very thick, color=red] (3) -- (4);
 \draw[<->, very thick, color=red] (4) -- (1);
  \end{tikzpicture}
  \end{center}
\caption{A bidirected 4-cycle.}
\label{fig:4cyc} 
\end{figure}
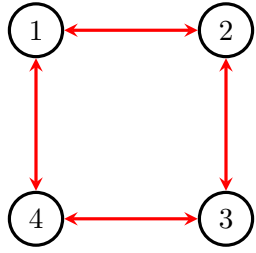
However, there is a marginal log-linear parametrization of this model which is ordered decomposable, and therefore variation independent.
%the model can be given an ordered decomposable and thus variation independent parametrization in the framework of marginal log-linear parameters.  
The 4-cycle is precisely the model with $X_1 \indep X_3$ and $X_2 \indep X_4$.  Set $\mathbb{M} = \{13, 24, 1234\}$, with 
\begin{align*}
  \mathbb{L}_1 &= \{1, 3, 13\}\\
  \mathbb{L}_2 &= \{2, 4, 24\}\\
  \mathbb{L}_3 &= \mathscr{P}(\{1,2,3,4\}) \setminus (\mathbb{L}_1 \cup \mathbb{L}_2);
\end{align*}
here $\mathscr{P}(A)$ denotes the power set of $A$.  This gives a hierarchical, complete and ordered decomposable parametrization, so the parameters are variation independent.  The 4-cycle corresponds exactly to setting $\lambda_{13}^{13} = \lambda_{24}^{24} = 0$, and it follows that the remaining parameters are still variation independent under this constraint.
\end{example}

This approach to parametrization, which considers disconnected sets, is discussed in detail by \citet{lupparelli:09}.  It produces a variation independent parametrization for graphs where the disconnected sets do not overlap, and may well be preferable to the ingenuous parametrization in these cases.  In sparser graphs however, it does not seem as useful; as mentioned above, some graphs have no variation independent MLL parametrization.

%\section{Data Examples} \label{sec:sparse}
\section{Parsimonious Modelling with Marginal Log-Linear Parameters} \label{sec:sparse}

The number of parameters in a model associated with a sparse graph containing bidirected edges can, in certain cases, be relatively large. 
In a purely bidirected graph, the parameter count depends upon the number of connected sets of vertices; in the case of a chain of bidirected edges such as that shown in Figure \ref{fig:chain}(a), this means that the number of parameters grows quadratically in the length of the chain.

The parametrization of \citet{richardson:09}, and its special case for purely bidirected graphs \citep[see][]{drton:richardson:08} does not present us with any obvious method of reducing the parameter count whilst preserving the conditional independence structure.  
In contrast, there are well established methods for sparse modelling with other classes of graphical models.  In the case of an undirected graph with binary random variables, restricting to one parameter for each vertex and each edge leads to a Boltzmann Machine \citep{ackley:85}.  \citet{rudas:06} use marginal log-linear parameters to provide a sparse parametrization of a DAG model, again restricting to one parameter for each vertex and edge.

As we will see from the following examples, the ingenuous parametrization allows us to fit graphical models with a large number of parameters, and then remove higher-order interactions to obtain a more parsimonious model whilst preserving the conditional independence structure of the original graph.

\subsection{Flu Vaccination Data Revisited}

We first return to the 
\citet{mcdonald:92} study considered in the Introduction. All variables are binary, and (excepting Age) are coded as 0 = false, $1$ = true; we add constraints to our model sequentially, recording the results in the analysis of deviance Table \ref{table:1}.  The ADMG in Figure \ref{fig:flu2}(a) represents the constraint $\mbox{Ag}, \mbox{Co} \indep \mbox{Re}$; it fits well, having a deviance of 2.54 on 3 degrees of freedom.  The smaller model for \ref{fig:flu2}(b) encodes
\begin{align*}
&\mbox{Ag}, \mbox{Co} \indep \mbox{Re} && \mbox{Y} \indep \mbox{Re} \,|\, \mbox{Va}, \mbox{Ag};
\end{align*}
note that these precise independences cannot be represented by a DAG or chain graph (of any of the types considered by \citet{drton:09}).
It also fits well (deviance 7.66 on 7 d.f.), so we may prefer it on the grounds of simplicity.  

%The ADMG in Figure \ref{fig:flu2}(b) provides a possible model for these data: 
%\begin{align*}
%&\mbox{Ag}, \mbox{Co} \indep \mbox{Re} && \mbox{Y} \indep \mbox{Re} \,|\, \mbox{Va}, \mbox{Ag}.
%\end{align*}
%Note that these precise independences cannot be represented by a DAG or chain graph (of any of the types considered by \citet{drton:09}).
%The model has a deviance of 7.66 on 7 degrees of freedom, suggesting that it adequately explains the variation in the data.  The ingenuous parametrization in this case contains some higher order effects, including the 5-way interaction between all variables.

The ingenuous parametrization in this case contains some higher order effects, including the 5-way interaction between all variables.
Setting $\lambda_{L}^M=0$ for $|L| \geq 4$
removes five parameters whilst increasing the deviance by only 2.22; removing the effects of size 3 adds a further 8.39 to the deviance whilst removing seven more parameters.  
The resulting model has a total deviance of 18.28 on 19 degrees of freedom, representing a good fit compared to the saturated model (likelihood ratio test $p=0.49$).

\begin{table}
\begin{center}
\begin{tabular}{|l|c|r|r|r|}
\hline
\textbf{Constraint}&\textbf{Figure}&\textbf{Add.\ Dev.}&\textbf{d.f.}&\textbf{Total Dev.}\\
\hline
$\mbox{Ag}, \mbox{Co} \indep \mbox{Re}$&\ref{fig:flu2}(a)&2.54&3&2.54\\
$\mbox{Y} \indep \mbox{Re} \,|\, \mbox{Va}, \mbox{Ag}$&\ref{fig:flu2}(b)&5.11&7&7.66\\
no 4- and 5-way params&&2.22&12&9.88\\
no 3-way params&&8.39&19&18.28\\
\hline
\end{tabular}
\caption{Analysis of deviance table of models considered for influenza data.  Constraints are added sequentially from top to bottom; the last three columns give the additional deviance for the constraint, the total degrees of freedom and the total deviance of the models respectively.}
\label{table:1}
\end{center}
\end{table}

\subsection{Incorporating Symmetry: Twins Data}

\citet{hakim:03} investigate genetic effects on the presence or absence of two soft tissue disorders, frozen shoulder and tennis elbow, based on a study in pairs of monozygotic and dizygotic twins; the data are reproduced in \citet{ekholm:12}.
We have count data for a 5-way contingency table over the variables
$S_i$ and $E_i$, indicators of whether twin $i$ in the pair suffers from frozen shoulder and tennis elbow respectively, $i \in \{1,2\}$, and 
$T$, an indicator of whether the pair are monozygotic or dizygotic twins.
There are a total of 866 observations for monozygotic pairs, and 963 for dizygotic pairs; twin 1 corresponds to the twin who was born first.

We first fitted the model $T \indep (S_1, S_2, E_1, E_2)$ to test whether the zygosity of the twins has any effect on the other variables; we obtained a deviance of 16.4 on 15 degrees of freedom, suggesting that there is no evidence that $T$ is related to the other variables.  Note that this contradicts the conclusions of \citet{ekholm:12}, but they use additional assumptions to obtain more powerful tests.

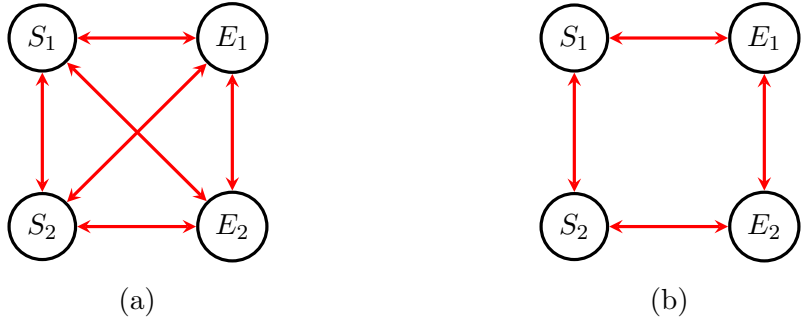
\begin{figure}
\begin{center}
 \begin{tikzpicture}
 [rv/.style={circle, draw, very thick, minimum size=7mm}, node distance=25mm, >=stealth]
 \pgfsetarrows{latex-latex};
 \node[rv] (A1) {$S_1$};
 \node[rv, right of=A1] (P1) {$E_1$};
 \node[rv, below of=A1] (A2) {$S_2$};
 \node[rv, right of=A2] (P2) {$E_2$};
 \draw[<->, very thick, color=red] (A1) -- (P1);
 \draw[<->, very thick, color=red] (A2) -- (P2);
 \draw[<->, very thick, color=red] (A1) -- (A2);
 \draw[<->, very thick, color=red] (A1) -- (P2);
 \draw[<->, very thick, color=red] (P1) -- (P2);
 \draw[<->, very thick, color=red] (P1) -- (A2);
\node[below of=A2, xshift=12.5mm, yshift=15mm] {(a)};
 \node[rv, right of=P1, xshift=20mm] (A1a) {$S_1$};
 \node[rv, right of=A1a] (P1a) {$E_1$};
 \node[rv, below of=A1a] (A2a) {$S_2$};
 \node[rv, right of=A2a] (P2a) {$E_2$};
 \draw[<->, very thick, color=red] (A1a) -- (P1a);
 \draw[<->, very thick, color=red] (A2a) -- (P2a);
 \draw[<->, very thick, color=red] (A1a) -- (A2a);
% \draw[->, very thick, color=blue] (A1a) -- (P2a);
 \draw[<->, very thick, color=red] (P1a) -- (P2a);
% \draw[<->, very thick, color=red] (P1a) -- (A2a);
\node[below of=A2a, xshift=12.5mm, yshift=15mm] {(b)};
  \end{tikzpicture}
  \end{center}
    \caption{Graphs for the twins data for  models corresponding to (a) a common gene and (b) separate genes affecting the prevalence of frozen shoulder and tennis elbow.}
    \label{fig:twins}
\end{figure}

Collapsing to a 4-way table over $(S_1, S_2, E_1, E_2)$, we consider the complete bidirected model in Figure \ref{fig:twins}(a).  A further simplifying assumption is to impose symmetry between the twins in each pair, on the basis that we do not expect any association between the prevalence of the disorders and which twin was born first.  Using the ingenuous parametrization for the graph in Figure \ref{fig:twins}(a), which is itself symmetric with respect to the individual twins, this amounts to six independent linear constraints,
%\begin{align*}
%\lambda_{S_1}^{S_1} &= \lambda_{S_2}^{S_2} & \lambda_{E_1}^{E_1} &= \lambda_{E_2}^{E_2} & \lambda_{E_1 S_1}^{E_1 S_1} &= \lambda_{E_2 S_2}^{E_2 S_2}\\
%\lambda_{E_1 S_2}^{E_1 S_2} &= \lambda_{E_2 S_1}^{E_2 S_1} & \lambda_{E_1 E_2 S_1}^{E_1 E_2 S_1} &= \lambda_{E_1 E_2 S_2}^{E_1 E_2 S_2} & \lambda_{E_1 S_1 S_2}^{E_1 S_1 S_2} &= \lambda_{E_2 S_1 S_2}^{E_2 S_1 S_2}.
%\end{align*}
and gives a deviance of 0.59 compared to the saturated model on four variables; there is therefore no evidence to reject symmetry.

Now, a hypothesis of interest is whether a common gene is responsible for the increased risk of the two disorders, or the genetic effects are separate and independent.  In the latter case we would expect the data to be explained by the model encoded by the graph in Figure \ref{fig:twins}(b), and therefore to observe the marginal independences $E_1  \indep S_2$ and $E_2 \indep S_1$ \citep[see][for more details]{drton:richardson:08}.  This amounts to the constraint $\lambda_{E_1 S_2}^{E_1 S_2} = \lambda_{E_2 S_1}^{E_2 S_1} = 0$; the first equality already holds by symmetry, so only one additional constraint is imposed.

This model has a deviance of 8.41 on 7 degrees of freedom, which is not rejected in a likelihood ratio test with the saturated model ($p = 0.30$), and so there is no evidence to reject the separate genes hypothesis.  We remark however, that the model with symmetry but no marginal independences has a slightly lower BIC score, and so might be preferred.

The elimination of the 4-way and 3-way interaction parameters for the model from Figure \ref{fig:twins}(b) with symmetry results in deviances of 11.63 on 8 d.f.\ and 16.69 on 10 d.f.\ respectively, both of which also represent reasonable fits; the latter of these has just 5 free parameters.

\subsection{Netherlands Kinship Data}

The Netherlands Kinship Panel Survey (NKPS) is an ongoing study which collects longitudinal information on several thousand Dutch individuals and their families \citep{NKPS:05, NKPS:07}.  
One question asked of both the primary respondents (\emph{anchors}) and their partners is ``How is your health in general?'', with possible responses of `excellent', `good', `good nor poor', `poor' and `very poor'.  We combined `good nor poor', `poor' and `very poor'  into one category to avoid small counts.

Two waves of data are currently available, from 2002--04 and 2006--07. 
We only considered anchors who had the same partner in both waves, and such that both the individual and the partner answered the health question in both waves.
 Let $A_i$ and $P_i$ denote the response of the anchor and partner respectively for wave $i \in \{1,2\}$.
In total there are $n=2,318$ data points, classified into a $3 \times 3 \times 3 \times 3$ table.  

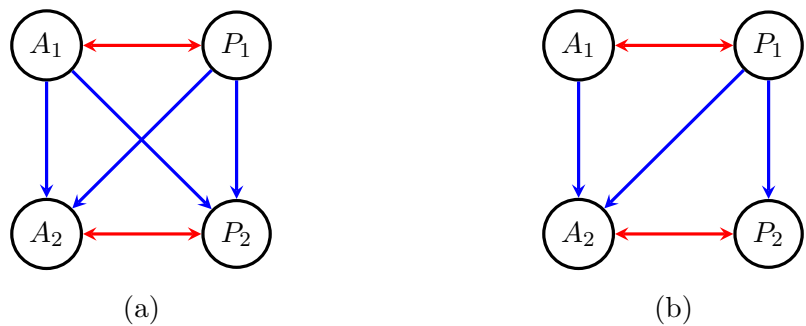
\begin{figure}
\begin{center}
 \begin{tikzpicture}
 [rv/.style={circle, draw, very thick, minimum size=7mm}, node distance=25mm, >=stealth]
 \pgfsetarrows{latex-latex};
 \node[rv] (A1) {$A_1$};
 \node[rv, right of=A1] (P1) {$P_1$};
 \node[rv, below of=A1] (A2) {$A_2$};
 \node[rv, right of=A2] (P2) {$P_2$};
 \draw[<->, very thick, color=red] (A1) -- (P1);
 \draw[<->, very thick, color=red] (A2) -- (P2);
 \draw[->, very thick, color=blue] (A1) -- (A2);
 \draw[->, very thick, color=blue] (A1) -- (P2);
 \draw[->, very thick, color=blue] (P1) -- (P2);
 \draw[->, very thick, color=blue] (P1) -- (A2);
\node[below of=A2, xshift=12.5mm, yshift=15mm] {(a)};
 \node[rv, right of=P1, xshift=20mm] (A1a) {$A_1$};
 \node[rv, right of=A1a] (P1a) {$P_1$};
 \node[rv, below of=A1a] (A2a) {$A_2$};
 \node[rv, right of=A2a] (P2a) {$P_2$};
 \draw[<->, very thick, color=red] (A1a) -- (P1a);
 \draw[<->, very thick, color=red] (A2a) -- (P2a);
 \draw[->, very thick, color=blue] (A1a) -- (A2a);
% \draw[->, very thick, color=blue] (A1a) -- (P2a);
 \draw[->, very thick, color=blue] (P1a) -- (P2a);
 \draw[->, very thick, color=blue] (P1a) -- (A2a);
\node[below of=A2a, xshift=12.5mm, yshift=15mm] {(b)};
  \end{tikzpicture}
  \end{center}
    \caption{Graphs for the NKPS data; responses of \textbf{A}nchor and \textbf{P}artner regarding their assessment of health; subscripts indicate time. (a) a complete graph; (b) a subgraph which implies $P_2 \indep A_1 \,|\, P_1$.}
    \label{fig:nkps}
\end{figure}

We begin with the complete graph in Figure \ref{fig:nkps}.  One plausible model would be that anchors and their partners are exchangeable.  Since the graph is symmetrical in this respect, so is the ingenuous parametrization, and enforcing symmetry amounts merely to a set of 36 linear constraints; for example:
\begin{align*}
%\lambda_{P_1}^{P_1} &= \lambda_{A_1}^{A_1}\\
%\lambda_{A_2}^{P_1 A_1 A_2} &= \lambda_{P_2}^{P_1 A_1 P_2}\\
\lambda_{A_2 P_2}^{A_1 P_1 A_2 P_2}(1, 0) &= \lambda_{A_2 P_2}^{A_1 P_1 A_2 P_2}(0, 1).% && \mbox{for each } a_2, p_2.
\end{align*}
This model has a deviance of 89.98, which when compared to the tail of a $\chi^2_{36}$ distribution gives $p = 1.6 \times 10^{-6}$; thus the symmetry model is a poor fit to the data, and is rejected.
The lack of exchangeability is probably due to selection bias in the sampling of the anchors, as well as the different ways in which the anchors and their partners were asked the question: anchors were asked about their health as part of a face-to-face interview, whereas the partners were only asked to complete a survey.  See \citet{siemiatycki:79} for an analysis of differences resulting from survey mode.

%The lack of exchangeability is due to selection bias in the primary respondents, who may or may not choose to participate in the first place; one symptom of this is that there are far more female anchors than male.  
%Note however that the asymmetry is still present even if we account for the gender imbalance.

%*** The last remark seemed a bit vague without reporting some statistics for models  that include gender ***

If instead we remove the edge $A_1 \rightarrow P_2$ and fit the graph in Figure \ref{fig:nkps}(b), we obtain an explanation of the data which is not rejected at the 5\% level (deviance 19.09 on 12 degrees of freedom, $p = 0.086$); this model corresponds to the conditional independence $P_2 \indep A_1 \,|\, P_1$.  
This graph is the only subgraph of the complete graph in Figure \ref{fig:nkps}(a) which leads to a good fit; in particular the model created by removing the edge $P_1 \rightarrow A_2$ is strongly rejected, which is one manifestation of the asymmetry between individuals and their partners.

Note that we could also have obtained the independence $P_2 \indep A_1 \,|\, P_1$, for instance, by using a DAG with topological ordering $P_1$, $A_1$, $P_2$, $A_2$, but the resulting parametrization would have made it much more difficult to enforce the symmetry constraint tested above. 

\subsection{Example: Trust Data}

\citet{drton:richardson:08} examine responses to seven questions relating to trust and social institutions, taken from the US General Social Survey between 1975 and 1994.  Briefly, the seven questions were:
\begin{description}
	\item[Trust.] Can most people be trusted?
	\item[Helpful.] Do you think most people are usually helpful?
	\item[MemUn, MemCh.] Are you a member of a labour union / church?
	\item[ConLegis, ConClerg, ConBus.] Do you have confidence in congress / organized religion / business?
\end{description}
In that paper, the model given by the graph in Figure \ref{fig:trust} is shown to adequately explain the data, having a deviance of 32.67 on 26 degrees of freedom, when compared with the saturated model.  The authors also provide an undirected graphical model which has one more edge than the graph in Figure \ref{fig:trust}, and yet has 62 fewer parameters.  It too gives a good fit to the data, having a deviance of 87.62 on 88 degrees of freedom.  Both graphs were chosen by backwards stepwise selection methods; see \citet{drton:richardson:08} for details.

\begin{figure}
\begin{center}
 \begin{tikzpicture}
 [rv/.style={ellipse, draw, very thick, inner sep=1mm, minimum width=15mm}, node distance=35mm, >=stealth]
 \pgfsetarrows{latex-latex};
 \node[rv] (1) {Trust};
 \node[rv, above of=1, yshift=-15mm] (2) {Helpful};
 \node[rv, left of=1] (4) {MemCh};
 \node[rv, below of=4, yshift=15mm] (3) {MemUn};
 \node[rv, right of=1] (6) {ConClerg};
 \node[rv, right of=3] (7) {ConBus};
 \node[rv, right of=7] (5) {ConLegis};
 \draw[<->, very thick, color=red] (1) -- (2);
 \draw[<->, very thick, color=red] (1) -- (4);
 \draw[<->, very thick, color=red] (2) -- (4);
 \draw[<->, very thick, color=red] (4.10) .. controls +(20:2) and +(160:2) .. (6.170);
 \draw[<->, very thick, color=red] (2.310) .. controls +(310:2) and +(50:2) .. (7.50);
 \draw[<->, very thick, color=red] (1) -- (6);
 \draw[<->, very thick, color=red] (1) -- (7);
 \draw[<->, very thick, color=red] (2) -- (6);
 \draw[<->, very thick, color=red] (6) -- (7);
 \draw[<->, very thick, color=red] (4) -- (7);
 \draw[<->, very thick, color=red] (3) -- (4);
 \draw[<->, very thick, color=red] (3) -- (7);
 \draw[<->, very thick, color=red] (5) -- (6);
 \draw[<->, very thick, color=red] (5) -- (7);
  \end{tikzpicture}
  \end{center}
\caption{Markov model for trust data given in \citet{drton:richardson:08}.}
    \label{fig:trust}
\end{figure}

For practical and theoretical reasons, the bidirected model may be preferred to the undirected one, even though the latter appears to be much more parsimonious.  
One may consider the dependence between the responses given to a questionnaire to be manifestations of unmeasured characteristics of the respondent, such as their political beliefs.  
Such a system can be well represented by a bidirected graph, through its marginal independence structure and connection to latent variable models, but not necessarily by an undirected one, which induces conditional independences.  
Note that, since models defined by undirected and bidirected graphs are not nested, there is no \emph{a priori} reason to expect the two methods to give a similar graphical structure.

The greater parsimony of the undirected model (when defined purely by conditional indep\-endences) is due to its hierarchical nature: if we remove an edge between two vertices $a$ and $b$, then this corresponds to requiring that $\lambda_A^V = 0$ for every effect $A$ containing both $a$ and $b$.  
Removing that edge in a bidirected model may correspond merely to setting $\lambda_{ab}^{ab} = 0$ and nothing else, depending upon the other edges present.  Using the ingenuous parametrization, it is easy to constrain additional higher order terms to be zero to obtain sub-models of the set of distributions obeying the global Markov property.

Starting with the model in Figure \ref{fig:trust} and fixing the 4-, 5-, 6- and 7-way interaction terms to be zero increases the deviance to 84.18 on 81 degrees of freedom; none of the 4-way interaction parameters was found to be significant on its own.  Furthermore, removing 21 of the remaining 25 three-way interaction terms increases the deviance to 111.48 on 102 degrees of freedom; using an asymptotic $\chi^2$ approximation gives a p-value of 0.245, so this model is not contradicted by the data.  The only parameters retained are the one-dimensional marginal probabilities, the two-way interactions corresponding to edges in Figure \ref{fig:trust}, and the following three-way interactions:
\begin{align*}
&\mbox{MemUn}, \mbox{ConClerg}, \mbox{ConBus} && \mbox{Helpful}, \mbox{MemUn}, \mbox{MemCh}\\[4pt]
&\mbox{Trust}, \mbox{ConLegis}, \mbox{ConBus} && \mbox{MemCh}, \mbox{ConClerg}, \mbox{ConBus}.
\end{align*}
This model retains the marginal independence structure of Drton and Richardson's model, but provides a good fit with only 25 parameters, rather than the original 101.

A similar analysis, for different data, is performed by \citet[][page 573]{lupparelli:09}; again they find an undirected graphical model to be much more parsimonious than any bidirected one, but obtain comparable fits by removing statistically insignificant higher-order parameters.

%Note that if we start with a different but Markov equivalent graph, such as the one in Figure \ref{fig:trust2}, we get an alternative ingenuous parametrization and a new set of possible sub-models.  Applying a similar procedure to this graph, we obtain a deviance of 115.29 on 102 degrees of freedom.

\subsection{Simulated Data}

\begin{figure}
\begin{center}
 \begin{tikzpicture}
 [rv/.style={circle, draw, very thick, minimum size=7mm}, node distance=25mm, >=stealth]
 \pgfsetarrows{latex-latex};
 \node[rv] (1) {1};
 \node[rv, right of=1] (2) {2};
 \node[rv, right of=2] (3) {3};
 \node[right of=3] (4) {$\cdots$};
 \node[rv, right of=4] (5) {$k$};
 \draw[<->, very thick, color=red] (1) -- (2);
 \draw[<->, very thick, color=red] (2) -- (3);
 \draw[<->, very thick, color=red] (3) -- (4);
 \draw[<->, very thick, color=red] (4) -- (5);
 \node[below of=3, yshift=15mm] {(a)};
 \node[rv, below of=1, yshift=-10mm] (1a) {1};
 \node[rv, right of=1a] (2a) {2};
 \node[rv, right of=2a] (3a) {3};
 \node[right of=3a] (4a) {$\cdots$};
 \node[rv, right of=4a] (5a) {$k$};
 \node[rv, color=gray, xshift=12.5mm, yshift=10mm] (h1) at (1a) {$h_1$};
 \node[rv, color=gray, xshift=12.5mm, yshift=10mm] (h2) at (2a) {$h_2$};
 \node[rv, color=gray, xshift=12.5mm, yshift=10mm] (h3) at (3a) {$h_3$};
 \node[rv, inner sep=0.1mm, color=gray, xshift=12.5mm, yshift=10mm] (h4) at (4a) {$h_{k-1}$};
 \draw[->, very thick, color=blue] (h1) -- (1a);
 \draw[->, very thick, color=blue] (h1) -- (2a);
 \draw[->, very thick, color=blue] (h2) -- (2a);
 \draw[->, very thick, color=blue] (h2) -- (3a);
 \draw[->, very thick, color=blue] (h3) -- (3a);
 \draw[->, very thick, color=blue] (h3) -- (4a);
 \draw[->, very thick, color=blue] (h4) -- (4a);
 \draw[->, very thick, color=blue] (h4) -- (5a);
 \node[below of=3a, yshift=15mm] {(b)};
  \end{tikzpicture}
  \end{center}
    \caption{(a) A bidirected $k$-chain and (b) a DAG with latent variables ($h_1, \ldots, h_{k-1}$) generating the same observable conditional independence structure.}
    \label{fig:chain}
\end{figure}
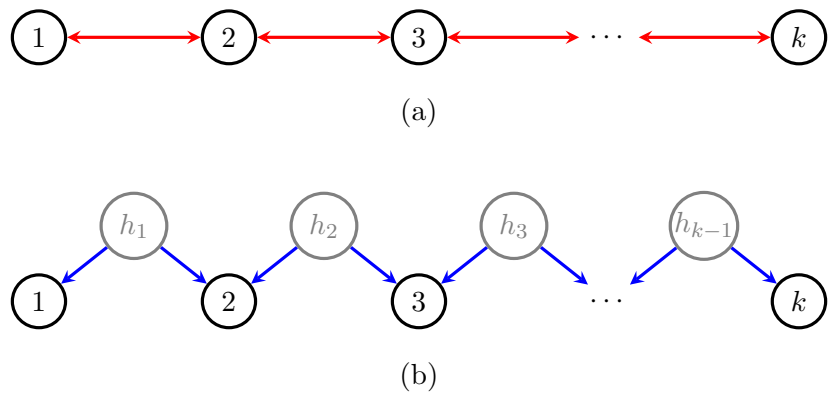

We saw in the earlier examples that we were often able to remove higher order interaction parameters without  compromising
the goodness of fit. Here we explore this phenomena further via simulations.

Consider the DAG with latent variables shown in Figure \ref{fig:chain}(b); over the observed variables, the  conditional independences which hold are exactly those given by the bidirected chain in Figure \ref{fig:chain}(a).

We randomly generated 1,000 distributions from this DAG model with $k=6$, where each latent variable was given three states, and each observed variable two.  The probability of each observed variable being zero, conditional on each state of its parents, was an independent uniform random draw on $(0,1)$; latent states were fixed to occur with equal probability.  For each distribution, a sample size of 10,000 was drawn, and the bidirected chain model was fitted to it by maximum likelihood estimation.  For each of the 1,000 data sets, we then measured the increase in deviance associated with removing higher order parameters

\begin{figure}
\begin{center}
\includegraphics[width=\columnwidth]{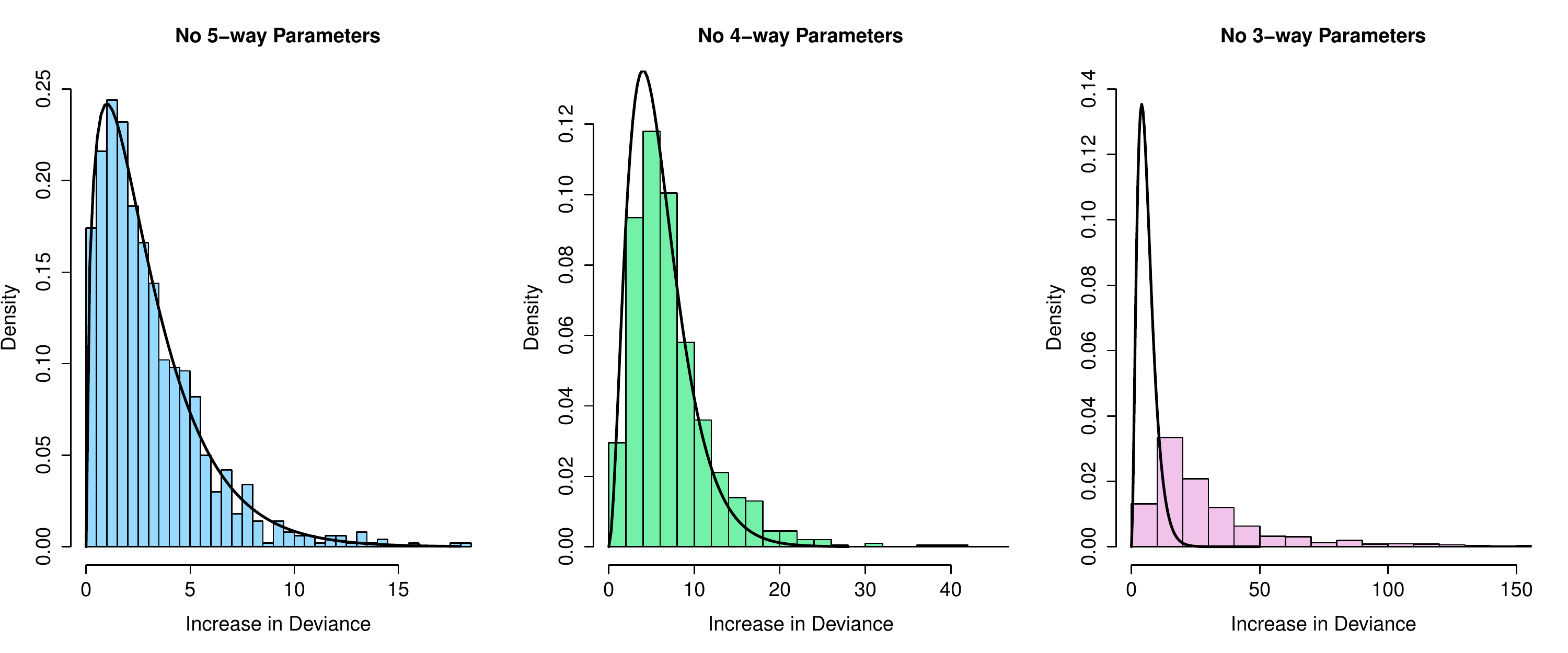}
\caption{Histograms showing the increase in deviance caused by setting to zero (a) the 5- and 6-way interaction parameters; (b) the 4-, 5- and 6-way interaction parameters; (c) the 3-, 4-, 5- and 6-way interaction parameters.  
Plots are based on $1,000$ datasets, each of size $10,000$, generated from
 the DAG in Figure \ref{fig:chain}(b).  The plotted densities are $\chi^2$ with 3, 6 and 10 degrees of freedom respectively.}
\label{fig:hists}
\end{center}
\end{figure}

The histogram in Figure \ref{fig:hists}(a) demonstrates that the deviance increase from setting the 5- and 6-way interaction parameters to zero (a total of three parameters) was not distinguishable from that which would be observed 
under the null hypothesis that these parameters are zero.  The deviance increase from setting the 4-, 5- and 6-way interactions to zero appeared to have only a slightly heavier tail than the associated $\chi^2$-distribution, as suggested by the outliers in Figure \ref{fig:hists}(b).  Removing the 3-way interactions in addition to this caused a dramatic increase in the deviance, as may be observed from the heavy tail of the histogram in Figure \ref{fig:hists}(c).
This illustrates that the ingenuous parametrization can be used to produce more parsimonious model descriptions than would be possible using Richardson's parameters.

Note that under the process which generated these models, each of these interaction parameters was non-zero almost surely.  As the sample size increases the power of a likelihood ratio test for a fixed distribution tends to one, so it must be the case that a simulation such as the above would, for large enough data sets, show significant deviation from the associated $\chi^2$ distributions.  However, even at a fairly large sample size of 10,000, a limited effect was observed in Figures \ref{fig:hists}(a) and (b), and the examples above with real data suggest that higher order interactions are often not particularly useful in practice for describing data.

\section{Proofs} \label{sec:proofs}

\subsection{Proof of Lemma \ref{lem:rbnnat}} \label{subsec:lem:rbnnat}

\begin{proof}[Proof of Lemma \ref{lem:rbnnat}]

Using the independence, we have \[p_{ABC}(\ii_{ABC}) = p_{AC}(\ii_{AC}) \cdot p_{B|C}(\ii_B \,|\, \ii_C).\]  Thus applying Lemma \ref{lem:mllp},
\begin{align*}
\lambda_{AD}^{ABC}(\ii_{AD}) = \frac{1}{|\X_{ABC}|} \sum_{\jj_{ABC} \in \X_{ABC}} \!\! (\log p_{AC}(\jj_{AC}) + \log p_{B|C}(\jj_B \,|\, \jj_C)) \prod_{v \in A \cup D} \!\! \left(|\X_v| \mathbb{I}_{\{x_v = y_v\}} - 1 \right).
\end{align*}
We can split this sum into terms involving $p_{AC}(\jj_{AC})$ and those involving $p_{B|C}(\jj_B \,|\, \jj_C)$.  For the first of these, 
\begin{align*}
\lefteqn{\frac{1}{|\X_{ABC}|} \sum_{\jj_{ABC} \in \X_{ABC}} \log p_{AC}(\jj_{AC}) \prod_{v \in A \cup D} \left(|\X_v| \mathbb{I}_{\{x_v = y_v\}} - 1 \right)}\\
&= \frac{1}{|\X_{AC}|\cdot|\X_B|} \sum_{\jj_{B} \in \X_{B}} \sum_{\jj_{AC} \in \X_{AC}} \log p_{AC}(\jj_{AC}) \prod_{v \in A \cup D} \left(|\X_v| \mathbb{I}_{\{x_v = y_v\}} - 1 \right)\\
&= \frac{1}{|\X_{AC}|} \sum_{\jj_{AC} \in \X_{AC}} \log p_{AC}(\jj_{AC}) \prod_{v \in A \cup D} \left(|\X_v| \mathbb{I}_{\{x_v = y_v\}} - 1 \right)\\
&= \lambda_{AD}^{AC}(\ii_{AC}),
\end{align*}
because the summand has no dependence on $\jj_B$.  For the latter,
\begin{align*}
\lefteqn{\frac{1}{|\X_{ABC}|} \sum_{\jj_{ABC} \in \X_{ABC}} \log p_{B|C}(\jj_B \,|\, \jj_C) \prod_{v \in A \cup D} \left(|\X_v| \mathbb{I}_{\{x_v = y_v\}} - 1 \right)}\\
&= \frac{1}{|\X_{ABC}|} \sum_{\jj_{BC} \in \X_{BC}} \log p_{B|C}(\jj_B \,|\, \jj_C) \sum_{\jj_A \in \X_A} \prod_{v \in A \cup D} \left(|\X_v| \mathbb{I}_{\{x_v = y_v\}} - 1 \right).
\end{align*}
Now for any $w \in A$, the inner part of this term is
\begin{align*}
\lefteqn{\sum_{\jj_A \in \X_A} \prod_{v \in A \cup D} \left(|\X_v| \mathbb{I}_{\{x_v = y_v\}} - 1 \right)}\\
&= \sum_{\jj_{A \setminus \{w\}}} \sum_{y_w} \prod_{v \in A \cup D} \left(|\X_v| \mathbb{I}_{\{x_v = y_v\}} - 1 \right)\\
&= \sum_{\jj_{A \setminus \{w\}}} \prod_{v \in (A \cup D) \setminus \{w\}} \left(|\X_v| \mathbb{I}_{\{x_v = y_v\}} - 1 \right) \sum_{y_w \in \X_w} \left(|\X_w| \mathbb{I}_{\{x_w = y_w\}} - 1 \right)\\
&= 0,
\end{align*}
because the innermost summand is $|\X_w| - 1$ for precisely one value of $y_w$, and $-1$ for the other $|\X_w| - 1$ values.  This shows that the whole term is zero, and gives the result.
\end{proof}

\subsection{Proof of Lemma \ref{lem:param}} \label{subsec:lem:param}

We first need the following result.

\begin{lemma} \label{lem:kappa}
For $L \subseteq M \subseteq V$ with $N \equiv M \setminus L$, define
\begin{align*}
\kappa_{L|N}(\ii_L \,|\, \ii_N) &\equiv \sum_{L \subseteq A \subseteq M} \lambda_A^M(\ii_A).
\end{align*}
Then
\begin{align*}
\kappa_{L|N}(\ii_L \,|\, \ii_N) &= \frac{1}{|\X_{L}|} \sum_{\substack{\jj_M \in \X_M \\ \jj_{N} = \ii_{N} }}  \log p(\jj_M) \prod_{v \in L} \left( |\X_{v}| \mathbb{I}_{\{x_v = y_v\}} - 1 \right).
\end{align*}
\end{lemma}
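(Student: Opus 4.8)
The plan is to start from the definition of $\kappa_{L|N}$, substitute the explicit formula for each marginal log-linear parameter supplied by Lemma \ref{lem:mllp}, and then reduce the resulting double sum by evaluating the combinatorial sum over the index set $\{A : L \subseteq A \subseteq M\}$. This is a purely finite manipulation, so I expect no analytic difficulty; the content is entirely in identifying and exploiting one factorization identity.

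First I would write, using Lemma \ref{lem:mllp} and interchanging the (finite) order of summation so that the sum over $A$ acts only on the products,
\begin{align*}
\kappa_{L|N}(\ii_L \,|\, \ii_N) = \frac{1}{|\X_M|} \sum_{\jj_M \in \X_M} \log p_M(\jj_M) \sum_{L \subseteq A \subseteq M} \prod_{v \in A} \left(|\X_v| \mathbb{I}_{\{x_v = y_v\}} - 1\right).
\end{align*}
Writing $f_v \equiv |\X_v| \mathbb{I}_{\{x_v = y_v\}} - 1$ for brevity, I would then factor the inner sum by splitting each $A$ as $A = L \cup A'$ with $A' \subseteq N = M \setminus L$, and using the standard expansion of a product over a set into a sum over its subsets:
\begin{align*}
\sum_{L \subseteq A \subseteq M} \prod_{v \in A} f_v = \left(\prod_{v \in L} f_v\right) \sum_{A' \subseteq N} \prod_{v \in A'} f_v = \left(\prod_{v \in L} f_v\right) \prod_{v \in N} (1 + f_v).
\end{align*}

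The crux of the argument — and the one genuinely illuminating step — is the observation that $1 + f_v = |\X_v| \mathbb{I}_{\{x_v = y_v\}}$, so that $\prod_{v \in N}(1 + f_v)$ vanishes unless $y_v = x_v$ for every $v \in N$, in which case it equals $\prod_{v \in N} |\X_v| = |\X_N|$. This collapses the product over $N$ into $|\X_N|$ times the indicator $\mathbb{I}_{\{\jj_N = \ii_N\}}$, which is exactly what restricts the outer sum to the fibre of $\jj_M$ with $\jj_N = \ii_N$ and removes the conditioning coordinates. Substituting back and using $|\X_M| = |\X_L| \cdot |\X_N|$, so that $|\X_N| / |\X_M| = 1 / |\X_L|$, then yields precisely the claimed expression. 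I anticipate the only point requiring care is this telescoping identity $1 + f_v = |\X_v| \mathbb{I}_{\{x_v = y_v\}}$, together with keeping track of the normalizing constants; everything else is routine bookkeeping.
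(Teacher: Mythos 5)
Your proof is correct and follows essentially the same route as the paper: substitute the explicit formula from Lemma \ref{lem:mllp}, interchange the sums, factor the sum over $\{A : L \subseteq A \subseteq M\}$ into the product over $L$ times a sum over subsets of $N$, and show that the latter equals $|\X_N|$ times the indicator of $\jj_N = \ii_N$. Your evaluation of that subset sum via the single identity $\sum_{A' \subseteq N} \prod_{v \in A'} f_v = \prod_{v \in N}(1+f_v) = |\X_N|\,\mathbb{I}_{\{\jj_N = \ii_N\}}$ is a slightly cleaner packaging of what the paper does in two cases (a telescoping cancellation when some $y_w \neq x_w$, and the binomial theorem when $\jj_N = \ii_N$), but the argument is the same.
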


\begin{proof}
Applying Lemma \ref{lem:mllp}, we have
\begin{align*}
\lefteqn{\kappa_{L|N}(\ii_L \,|\, \ii_N)}\\
&= \sum_{L \subseteq A \subseteq M} \frac{1}{|\X_{M}|} \sum_{\jj_M \in \X_M}  \log p_M(\jj_M) \prod_{v \in A} \left( |\X_{v}| \mathbb{I}_{\{x_v = y_v\}} - 1 \right)\\
  &= \frac{1}{|\X_{M}|} \sum_{\jj_M \in \X_M}  \log p_M(\jj_M) \sum_{L \subseteq A \subseteq M} \prod_{v \in A} \left( |\X_{v}| \mathbb{I}_{\{x_v = y_v\}} - 1 \right)\\
  &= \frac{1}{|\X_{M}|} \sum_{\jj_M \in \X_M}  \log p_M(\jj_M) \sum_{L \subseteq A \subseteq M} \prod_{v \in L} \left( |\X_{v}| \mathbb{I}_{\{x_v = y_v\}} - 1 \right) \prod_{v \in A \setminus L} \left( |\X_{v}| \mathbb{I}_{\{x_v = y_v\}} - 1 \right)\\
  &= \frac{1}{|\X_{M}|} \sum_{\jj_M \in \X_M}  \log p_M(\jj_M) \prod_{v \in L} \left( |\X_{v}| \mathbb{I}_{\{x_v = y_v\}} - 1 \right) \sum_{B \subseteq N} \prod_{v \in B} \left( |\X_{v}| \mathbb{I}_{\{x_v = y_v\}} - 1 \right).
\end{align*}
Now, consider the value of the inner sum, for a fixed $\jj_M$.  In the case that there is some $w \in N$ with $x_w \neq y_w$, then
\begin{align*}
\sum_{B \subseteq N} \prod_{v \in B} \left( |\X_{v}| \mathbb{I}_{\{x_v = y_v\}} - 1 \right)
  &= \sum_{B \subseteq N \setminus \{w\}} \left[ \prod_{v \in B} \left( |\X_{v}| \mathbb{I}_{\{x_v = y_v\}} - 1 \right) + \!\!\! \prod_{v \in B \cup \{w\}} \!\!\! \left( |\X_{v}| \mathbb{I}_{\{x_v = y_v\}} - 1 \right)\right]\\
  &= \sum_{B \subseteq N \setminus \{w\}} \left[ \prod_{v \in B} \left( |\X_{v}| \mathbb{I}_{\{x_v = y_v\}} - 1 \right) - \prod_{v \in B} \left( |\X_{v}| \mathbb{I}_{\{x_v = y_v\}} - 1 \right)\right]\\
  &= 0.
\end{align*}
Alternatively, if $\ii_{N} = \jj_{N}$, then
\begin{align*}
\sum_{B \subseteq N} \prod_{v \in B} \left( |\X_{v}| \mathbb{I}_{\{x_v = y_v\}} - 1 \right)
  &= \sum_{B \subseteq N} \prod_{v \in B} \left( |\X_{v}| - 1 \right)\\
  &= |\X_{N}|
\end{align*}
by the binomial theorem.  Thus
\begin{align*}
\kappa_{L|N}(\ii_L \,|\, \ii_N) &= \frac{1}{|\X_{L}|} \sum_{\substack{\jj_M \in \X_M \\ \jj_{N} = \ii_{N} }}  \log p(\jj_M) \prod_{v \in L} \left( |\X_{v}| \mathbb{I}_{\{x_v = y_v\}} - 1 \right),
\end{align*}
since $\mathfrak{X}_M = \mathfrak{X}_L \times \mathfrak{X}_N$.
\end{proof}

\begin{proof}[Proof of Lemma \ref{lem:param}]
Let $N \equiv M \setminus L$, and pick some $\ii_L \in \tilde\X_L$ and $\ii_N \in \X_N$; for $A \subseteq L$, let $\boldsymbol 1_A$ be a vector of length $|L|$ with a 1 in position $j$ if the $j$th element of $L$ is in $A$, and 0 otherwise.
Define the local $|L|$-way log-linear interaction parameter between $\ii_L+\boldsymbol 1_L$ and $\ii_L$ conditional on $\ii_N$ as 
\begin{align*}
\sum_{A \subseteq L}  (-1)^{|L \setminus A|} \log p_{L|N}(\ii_L + \boldsymbol 1_A \,|\, \ii_{N});
\end{align*}
note that since $\ii_L \in \tilde\X_L$, $\ii_L+\boldsymbol 1_A \in \X_L$.
We will first show that we can construct all these local $|L|$-way log-linear interaction parameters using the parameters given in the statement of the lemma.
%Let $N \equiv M \setminus L$, and pick some $\ii_L \in \tilde\X_L$ and $\ii_N \in \X_N$; for $A \subseteq \{1, \ldots, |M|\}$, let $\boldsymbol 1_A$ be a vector of length $|L|$ with a 1 in position $j$ if $j \in A$, and 0 otherwise.  
As in Lemma \ref{lem:kappa}, let $\kappa_{L|N}(\ii_L \,|\, \ii_N) \equiv \sum_{L \subseteq A \subseteq M} \lambda_A^M(\ii_A)$, and note that
\begin{align*}
\lefteqn{\sum_{A \subseteq L} (-1)^{|L \setminus A|} \kappa_{L|N}(\ii_L + \boldsymbol 1_A \,|\, \ii_N)}\\
&= \frac{(-1)^{|L|}}{|\X_{L}|} \sum_{\jj_L \in \X_L} \log p_M(\jj_L, \ii_N) \sum_{A \subseteq L} (-1)^{|A|} \prod_{v \in L} \left( |\X_{v}| \mathbb{I}_{\{x_v + \mathbb{I}_{\{v \in A\}} = y_v\}} - 1 \right)
\end{align*}
follows directly from Lemma \ref{lem:kappa}.  %$\kappa_{L|N}(\ii_L \,|\, \ii_N) \equiv \sum_{L \subseteq A \subseteq M} \lambda_A^M(\ii_A)$ is defined in Lemma \ref{lem:kappa} above.  
%We collect terms containing $\log p_M(\jj_M)$ for some fixed $\jj_M$.  
Now consider the inner sum; if for some $w \in L$, $y_w \notin \{x_w, x_w+1\}$, then
\begin{align*}
&\sum_{A \subseteq L} (-1)^{|A|} \prod_{v \in L} \left( |\X_{v}| \mathbb{I}_{\{x_v + \mathbb{I}_{\{v \in A\}} = y_v\}} - 1 \right)\\ &\qquad = 
\sum_{A \subseteq L \setminus \{w\}} (-1)^{|A|} \left[ \prod_{v \in L} \left( |\X_{v}| \mathbb{I}_{\{x_v + \mathbb{I}_{\{v \in A\}} = y_v\}} - 1 \right) - \prod_{v \in L} \left( |\X_{v}| \mathbb{I}_{\{x_v + \mathbb{I}_{\{v \in A \cup\{w\}\}} = y_v\}} - 1 \right) \right]\\
&\qquad = 0,
\end{align*}
because the value of the outer indicator function is 0 in both terms when $v=w$, while the inner indicator functions are the same for all other $v$.  Alternatively,  if $y_w \in \{x_w, x_w+1\}$  for all $w \in L$, then define
\begin{align*}
B(A) \equiv \{v \in L \,|\, x_v + \mathbb{I}_{\{v \in A\}} = y_v \}.
\end{align*}
The map $A \mapsto B(A)$ is a one-to-one map from $\mathscr{P}(L)$, the power set of $L$, to itself, i.e.~an automorphism. Note that $D \equiv B(A) \triangle A = \{v\in L \,|\, x_v = y_v\}$ is independent of $A$.  Since
\[
|A| + 2|B(A) \setminus A| = |B(A)| + |A \triangle B(A)| = |B(A)|  + |D|
\]
 we can rewrite the sum over subsets as
\begin{align*}
&\sum_{A \subseteq L} (-1)^{|A|} \prod_{v \in L} \left( |\X_{v}| \mathbb{I}_{\{x_v + \mathbb{I}_{\{v \in A\}} = y_v\}} - 1 \right)\\
& \qquad = \sum_{A \subseteq L} (-1)^{|B(A)|+|D|} \prod_{v \in L} \left( |\X_{v}| \mathbb{I}_{\{v \in B(A)\}} - 1 \right)\\
& \qquad = (-1)^{|D|} \sum_{B \subseteq L} (-1)^{|B|} \prod_{v \in L} \left( |\X_{v}| \mathbb{I}_{\{v \in B\}} - 1 \right)\\
& \qquad = (-1)^{|D|} (-1)^{|L|} \sum_{B \subseteq L} \prod_{v \in B} \left( |\X_{v}| - 1 \right)\\
\intertext{which again using the binomial theorem is}
& \qquad = (-1)^{|D|} (-1)^{|L|} \prod_{v \in L} |\X_{v}| = (-1)^{|D|} (-1)^{|L|} |\X_L|.
\end{align*}
%where $\|\ii_L - \jj_L\|$ is just the number of entries in which $\ii_L$ and $\jj_L$ differ.
Then, substituting this back into the original expression and noting that the two $(-1)^{|L|}$ factors cancel out,
\begin{align*}
\sum_{A \subseteq L} (-1)^{|L \setminus A|} \kappa_{L|N}(\ii_L + \boldsymbol 1_A \,|\, \ii_N) &= \sum_{D \subseteq L}  (-1)^{|D|} \log p_M(\ii_L + \boldsymbol 1_{L\setminus D}, \, \ii_N)\\
  &= \sum_{D \subseteq L}  (-1)^{|D|} \left[ \log p_{L|N}(\ii_L + \boldsymbol 1_{L\setminus D} \,|\, \ii_{N}) + \log p_{N}(\ii_{N}) \right]\\
  &= \sum_{D \subseteq L}  (-1)^{|D|} \log p_{L|N}(\ii_L + \boldsymbol 1_{L\setminus D} \,|\, \ii_{N}),
\end{align*}
where the terms in $\log p_{N}(\ii_{N})$ cancel because of the lack of dependence upon $D$.  This is the (conditional) local $|L|$-way log-linear interaction.  The collection of all the (conditional) local $|L|$-way log-linear interactions together with the (conditional) $(|L|-1)$-dimensional marginal distributions smoothly parametrizes the $|L|$-way table \citep{csiszar:75, rudas:98}.
\end{proof}

\subsection{Proof of Theorem \ref{thm:subsp}} \label{subsec:thm:subsp}

We require the following lemma.

%\begin{lemma} \label{lem:mb}
%Let $(H, T)$ be a head-tail pair in an ADMG $\G$, and suppose that the probability distribution $P$ obeys the global Markov property with respect to $\G$.  Then
%\begin{align*}
%H \indep \an_{\G}(H) \setminus (H \cup T)  \,|\, T \, [P].
%\end{align*}
%\end{lemma}
%
%\begin{proof}
%See \citet{richardson:09}, Section 3.1.
%\end{proof}

\begin{lemma} \label{lem:mb2}
Let $\bar{\G}$ be a head-preserving completion of ${\G}$, and let $H \in \mathcal{H}(\G)$ have tails $T$ and $\bar{T}$ in $\G$ and $\bar{\G}$ respectively.  Then under the global Markov property for $\G$,
\begin{align*}
H \indep (\bar{T} \setminus T)  \,|\, T \, [P].
\end{align*}
\end{lemma}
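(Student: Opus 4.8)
The plan is to derive the stated independence from the global Markov property by establishing the corresponding m-separation in $\G$. Write $A = \an_\G(H)$ and $D = \dis_{A}(H)$, so that $T = \pa_\G(D) \cup (D \setminus H)$ by definition of the tail. Since a tail is disjoint from its head and $\bar{T} \subseteq \an_{\bar{\G}}(H)$ by Proposition~\ref{prop:head-and-tail}(ii), we have $\bar{T} \setminus T \subseteq \an_{\bar{\G}}(H) \setminus (H \cup T)$. It therefore suffices to show that $H$ is m-separated from $\an_{\bar{\G}}(H) \setminus (H \cup T)$ given $T$ in $\G$, whereupon the global Markov property delivers $H \indep (\bar{T}\setminus T) \mid T$. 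So I fix $h \in H$ and $s \in \an_{\bar{\G}}(H) \setminus (H \cup T)$ and suppose, for contradiction, that some path $\pi$ m-connects $h$ and $s$ given $T$.

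First I would analyse how $\pi$ behaves while it stays inside $A$. By Proposition~\ref{prop:m-connecting-path} every vertex of $\pi$ lies in $\an_\G(\{h,s\}\cup T) = A \cup \an_\G(s)$. Because $H$ is barren and $\G$ is acyclic, no edge $h \rightarrow v$ with $v \in A$ can occur (it would yield a directed path between two vertices of $H$, or a directed cycle), and an edge $v \rightarrow h$ forces $v \in \pa_\G(D) \subseteq T$, where it is a non-collider and so blocks $\pi$; the same reasoning excludes any internal vertex of $\pi$ lying in $H$ and any directed step inside $A$ into $\pa_\G(D) \subseteq T$. Hence, while $\pi$ remains in $A$ it passes only through colliders in $D \setminus H \subseteq T$ joined by bidirected edges, each such collider being an ancestor of $T$ and so unblocked. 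In particular, if $\pi$ never leaves $A$, its final edge forces $s \in D \setminus H$ or $s \in \pa_\G(D)$, i.e.\ $s \in T$, contradicting $s \notin T$.

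The remaining and crucial case is that $\pi$ leaves $A$. Let $v_i$ be the last vertex of $\pi$ in $A$ and $v_{i+1}$ the next. By the previous paragraph $v_i \in D$, the exit edge must be $v_i \leftrightarrow v_{i+1}$ (a directed edge would place $v_{i+1}$ in $A$), and the remainder $v_{i+1},\ldots,s$ lies outside $A$; since a collider there could not be an ancestor of $T \subseteq A$, it is a directed path $v_{i+1} \rightarrow \cdots \rightarrow s$. This is where head-preservation is used. Set $H^\dagger = \barren_\G(H \cup \{v_{i+1}\})$. When $v_{i+1} \notin \dec_\G(H)$ one checks that $H^\dagger = H \cup \{v_{i+1}\}$ is barren in $\G$ and is connected by bidirected paths in $\G_{\an_\G(H^\dagger)}$ through the segment $h \leftrightarrow \cdots \leftrightarrow v_{i+1}$, so that $H^\dagger \in \mathcal{H}(\G)$. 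But $s \in \an_{\bar{\G}}(H)$ gives a directed path $v_{i+1} \rightarrow \cdots \rightarrow s \rightarrow \cdots \rightarrow h''$ in $\bar{\G}$ for some $h'' \in H$, exhibiting $h''$ as a nontrivial descendant of $v_{i+1}$; thus $H^\dagger$ is \emph{not} barren in $\bar{\G}$, contradicting $\mathcal{H}(\G) \subseteq \mathcal{H}(\bar{\G})$. In the complementary case $v_{i+1} \in \dec_\G(H)$, composing a directed $\G$-path from $H$ to $v_{i+1}$ with the above $\bar{\G}$-path either destroys the barren-ness of $H$ itself in $\bar{\G}$ or creates a directed cycle, again a contradiction.

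The main obstacle is exactly this last case. A purely graphical m-separation argument fails in general, because a head-preserving completion may add directed edges turning vertices outside $\an_\G(H)$ into ancestors of $H$; the whole point of the lemma is that the head-preserving hypothesis rules this out. The key creative step is to recognise the escaping configuration and repackage it as a head $H^\dagger$ of $\G$ whose barren-ness collapses in $\bar{\G}$. Everything else is the routine collider/ancestor bookkeeping sketched above, together with the two-way split according to whether $v_{i+1}$ is a descendant of $H$ in $\G$.
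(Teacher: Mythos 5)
Your proof is correct and follows essentially the same route as the paper's: both identify the vertex (the paper's $y$, your $v_{i+1}$) at which an alleged m-connecting path first escapes $\an_{\G}(H)$ along a bidirected edge out of the district, observe that $H \cup \{y\}$ is then a head of $\G$, and invoke head-preservation to reach a contradiction. The only substantive difference is how the contradiction is cashed out: the paper first confines the entire path to $H \cup \bar{T} = \an_{\bar{\G}}(H)$ (using completeness of $\bar{\G}$ to make this set ancestral) and then notes $y \notin \an_{\bar{\G}}(H)$, whereas you show $y \in \an_{\bar{\G}}(H)$ directly by composing the directed tail of the path into $s$ with a $\bar{\G}$-directed path from $s \in \bar{T}$ into $H$ — which is why you need the extra case split on whether $v_{i+1} \in \dec_{\G}(H)$, a case the paper's confinement step disposes of up front.
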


\begin{proof}
Let $\pi$ be a path in $\G$ from some $h \in H$ to $t \in \bar{T} \setminus T$, and assume without loss of generality that $\pi$ does not intersect $H$ or $\bar{T} \setminus T$ other than at its endpoints.  By Proposition \ref{prop:m-connecting-path}, every vertex on $\pi$ is in $\an_{\G}(\{h,t\}\cup T) \subseteq \an_{\G}(H \cup \bar{T})$.
Since $\bar{\G}$ is complete, if $v \in \an_{\bar{\G}}(H\cup \bar{T})$, then $v \in H\cup \bar{T}$, thus $H\cup \bar{T}$ is
ancestral in $\bar{\G}$. By Proposition \ref{prop:ancestors-in-completion}, $H\cup \bar{T}$ is also ancestral
in $\G$, thus every vertex on $\pi$ is in $H \cup \bar{T}$.

By Proposition \ref{prop:head-and-tail}, $\bar{T} \subseteq \an_{\bar{\G}}(H)$, so 
 $H\cup \bar{T} = \an_{\bar{\G}}(H)$. However, since $H$ forms a head in $\bar{\G}$, $H$  is barren in $\bar{\G}$.
 Thus in $\bar{\G}$, no proper descendant of a vertex in $H$ is on $\pi$, and by Proposition   \ref{prop:ancestors-in-completion} this also holds in $\G$.
 
 Now let $y$ be the first vertex after $h$ on $\pi$ that is not in $T$. By hypothesis, $y$ exists since $t \notin T$.
 By construction, any vertices between $h$ and $y$ on $\pi$ are in $T$, hence are colliders on $\pi$ and ancestors of $H$ in $\G$ (by Proposition \ref{prop:head-and-tail}). Thus $y \in \dis_{\G}(H) \cup \pa_{\G}(\dis_{\G}(H))$.  If $y \in \an_{\G}(H)$ then $y \in T$, which is a contradiction, hence $y \in \dis_{\G}(H)$ and $y \notin\an_{\G}(H)$. As shown earlier,
 $y$ is not a descendant of a vertex in $H$, so $H\cup \{y\}$ forms a head in $\G$. Since $\bar{\G}$ is a head-preserving completion, it follows that $H\cup \{y\}$ also forms a head in $\bar{\G}$, and thus $y \notin \an_{\bar{\G}}(H) = H \cup \bar{T}$, but this is a contradiction.
 \end{proof}

\begin{proof}[Proof of Theorem \ref{thm:subsp}]
Let $(H,\bar{T})$ be a head-tail pair in $\bar{\G}$.  There are three possibilities for how this pair relates to $\G$: if $(H,\bar{T})$ is also a head-tail pair in $\G$, then there is no work to be done; otherwise either (i) $H$ is not a head in $\G$, or (ii) $H$ is a head in $\G$ but $\bar{T}$ is not its tail.

If (i) holds, then we claim that under $\G$, $\lambda_{A}^{H \bar{T}} = 0$ for all $H \subseteq A \subseteq H \cup \bar{T}$.  To see this, first note that $H$ is a barren set in $\bar{\G}$, and since $H$ is maximally connected, this means that all elements are joined by bidirected edges in $\bar{\G}$. Since $\G$ contains a subset of the edges in $\bar{\G}$,
  $H$ is also barren in $\G$;  since $H$ is not a head in $\G$ this means that $H = K \cup L$ for disjoint non-empty sets $K$ and $L$ with no edges directly connecting them.  But this implies that $K$ and $L$ are m-separated conditional on $\bar{T}$, and thus $X_K \indep X_L \,|\, X_{\bar{T}}$ under the Markov property for $\G$.  Then, by Lemma \ref{lem:brindep}, these parameters are all identically zero under $\G$.

(ii) implies that $H$ is head in both $\G$ and $\bar{\G}$, but $\bar{T} \equiv \tail_{\bar{\G}} (H) \supset \tail_{\G} (H) \equiv T$.
Then $\lambda_A^{H\bar{T}} = 0$ for all $H \subseteq A \subseteq H \cup \bar{T}$ such that $A \cap (\bar{T} \setminus T) \neq \emptyset$; this follows from Lemma \ref{lem:mb2}
%, which states that given its tail, $H$ is independent of its other ancestors in $\G$, 
and application of Lemma \ref{lem:brindep}.
%Then we claim that $\lambda_A^{HT} = 0$ for all $H \subseteq A \subseteq H \cup T$ such that $A \cap (T \setminus T') \neq \emptyset$; this follows from the fact that $T'$ is the Markov blanket for $H$ in $\an_{\G}(H)$, and Lemma \ref{lem:brindep}.

We have shown that all parameters corresponding to effects not found in $\PP^{\ing}(\G)$ are identically zero under $\G$.  The vanishing of these parameters defines the correct sub-model, but note that some of the margins in $\PP^{\ing}(\bar{\G})$ which we have not yet considered are not the same as those in $\PP^{\ing}(\G)$.  These remaining cases are again from (ii), but where $H \subseteq A \subseteq H \cup T$; in this case $\lambda_{A}^{H\bar{T}} = \lambda_{A}^{HT}$ under $\G$, again due to Lemma \ref{lem:mb2}, this time combined with Lemma \ref{lem:rbnnat}.  

Thus we have shown that under $\G$, all the ingenuous parameters for $\bar\G$ are either zero or equal to ingenuous parameters for $\G$.  Combined with Theorem \ref{thm:param}, this shows that those constraints define the model.
\end{proof}

\subsection{Proof of Theorem \ref{thm:ingVI}} \label{subsec:thm:ingVI}

We first prove the following graphical result.

\begin{lemma} \label{lem:heads}
Let $\G$ be an ADMG containing at least one head of size 3 or more.  Then $\G$ also contains two heads of the form $\{v_1, v_2\}$ and $\{v_2, v_3\}$, where $\{v_1, v_2, v_3\}$ is barren.
\end{lemma}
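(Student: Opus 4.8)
The plan is to reduce the statement to a purely structural search for three vertices and then to locate them inside the district of the given head. First I would record exactly when a two-element set is a head: if $\{u,w\}$ is barren (neither vertex is an ancestor of the other) then $\{u,w\}$ is a head precisely when $u$ and $w$ lie in a common district of $\G_{\an(\{u,w\})}$. In particular a direct edge $u \leftrightarrow w$ between two incomparable vertices always produces a head, and so does a bidirected path $u \leftrightarrow s \leftrightarrow w$ whose middle vertex $s$ is an ancestor of $u$ or of $w$, since then $s \in \an(\{u,w\})$. Hence it suffices to produce three vertices $v_1,v_2,v_3$ forming a barren set for which $v_2$ is bidirected-connected to each of $v_1,v_3$ inside the respective pair-ancestral subgraphs; the cleanest instance is a direct bidirected path $v_1 \leftrightarrow v_2 \leftrightarrow v_3$ carried by a barren triple, which gives the two heads immediately.

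Next I would set up the geometry from the hypothesised head $H$ with $|H|\ge 3$. Recall $H$ is barren, hence an antichain for the ancestor order, and is connected by bidirected paths inside $\G_{\an(H)}$. Writing $D=\dis_{\an(H)}(H)$, the bidirected edges make $D$ connected, $H\subseteq D$, and every vertex of $D\setminus H$ lies in $\an(H)\setminus H$, so it is a \emph{proper} ancestor of some vertex of $H$; thus the elements of $H$ are maximal in $D$. The task becomes: inside a bidirected-connected set $D$ whose maximal elements contain an antichain of size at least three, exhibit a barren triple supporting two of the heads described above.

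For the construction I would choose the shared vertex by minimality. Take a minimal bidirected subtree of $D$ spanning three chosen elements of $H$; minimality forces every leaf into $H$, and such a tree (having at least three vertices) contains a vertex $s$ with two tree-neighbours $p,q$ joined to it by direct edges $p\leftrightarrow s\leftrightarrow q$, which one can arrange to be leaves, hence maximal. If $p,s,q$ are pairwise incomparable they form a barren triple and $\{p,s\},\{s,q\}$ are heads by the reduction (Mode 1). Otherwise the only comparabilities available are $s<p$ and/or $s<q$, the outer vertices being maximal; then $p$ and $q$, being distinct maximal elements, are incomparable, so $\{p,q\}$ is a head with connector $s$ (Mode 2). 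In the branch case $s$ has several maximal leaf-neighbours from $H$, and the resulting heads $\{p,q\}$ form overlapping pairs (as with a single connector sitting below all of $H$), delivering two heads sharing a vertex together with a barren triple drawn from the antichain $H$.

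The hard part is precisely this barren (antichain) bookkeeping: guaranteeing that the central vertex has two head-partners that are mutually incomparable, so the resulting triple is genuinely barren and not merely a union of two overlapping heads. This is where acyclicity and the maximality of the $H$-vertices must be invoked to exclude directed paths among the three chosen vertices, since any such path would contradict either the minimality of the spanning subtree or the barren-ness of $H$. The most delicate sub-case is when a single low connector lies below several head-vertices, so that \emph{no} direct barren bidirected $2$-path exists; there one is forced to route both heads through that connector and to pick the two outer head-vertices from the size-$\ge 3$ antichain $H$, which is exactly where the hypothesis $|H|\ge 3$ is consumed. Verifying that these choices can always be made consistently, across the path and branch configurations of the minimal subtree, is the main obstacle I would expect to spend effort on.
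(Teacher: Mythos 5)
Your reduction in the first paragraph and the setup of $D$ are correct, but the construction in the third paragraph has a genuine gap. You assert that a minimal bidirected subtree spanning three vertices of $H$ contains a vertex $s$ whose two tree-neighbours $p,q$ ``one can arrange to be leaves''; this fails whenever the minimal tree is a path on four or more vertices, since then no vertex has two leaf-neighbours. Consequently $p$ or $q$ may be an internal connector and, in particular, a proper ancestor of $s$, in which case neither of your modes applies: Mode 1 needs $\{p,s,q\}$ barren, and Mode 2 needs the comparabilities to run the other way, with $s$ an ancestor of maximal outer vertices. Moreover Mode 2 delivers only a single head, and you only explain how to obtain a second, overlapping head when the same $s$ has three or more leaf-neighbours.

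The deeper problem is that the two heads the lemma promises may only be certifiable by bidirected paths with more than one intermediate vertex, which no analysis of a single tree vertex and its immediate neighbours can detect. Take vertices $w_1,a_1,a_2,w_3,b_1,b_2,w_2$ with bidirected path $w_1\leftrightarrow a_1\leftrightarrow a_2\leftrightarrow w_3\leftrightarrow b_1\leftrightarrow b_2\leftrightarrow w_2$ and directed edges $a_1\rightarrow a_2\rightarrow w_3$ and $b_2\rightarrow b_1\rightarrow w_3$. Then $H=\{w_1,w_3,w_2\}$ is a head of size three, and an exhaustive check shows the only pair of overlapping two-element heads with barren union is $\{w_1,w_3\}$ together with $\{w_3,w_2\}$, each connected only by a bidirected path with two intermediate vertices. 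Your procedure cannot produce these: at $s=w_3$ both neighbours $a_2,b_1$ are proper ancestors of $s$, at $s=a_2$ and $s=b_1$ one neighbour is a proper ancestor of $s$ and the other a proper descendant, and the only heads your modes do yield, $\{w_1,a_2\}$ (at $s=a_1$) and $\{b_1,w_2\}$ (at $s=b_2$), are disjoint. The paper avoids this by an entirely different mechanism, an infinite-descent argument: if a pair $\{w_i,w_j\}\subseteq H$ fails to be a head, the connecting bidirected path contains a vertex $v\notin\an_{\G}(\{w_i,w_j\})$, and the barren part of that path is a \emph{new} head of size at least three lying strictly lower in the ancestor order; finiteness of the graph forces the recursion to terminate, at which point both pairs are heads. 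Your one-shot local construction has no analogue of this descent, and I do not see how to complete it without one.
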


\begin{proof}
Suppose not; let $\G$ be an ADMG which violates this condition, and let $H$ be a head in $\G$ of size $k \geq 3$.  Pick 3 vertices $\{w_1, w_2, w_3\}$ in $H$.  By the definition of a head, we can pick a bidirected path $\pi$, through $\an_{\G} (H)$, from $w_1$ to $w_2$; assume that $\pi$ contains no other element of $H$, otherwise shorten the path and redefine $w_1$ or $w_2$.  Then create a similar path $\rho$ from $w_2$ to $w_3$; again assume that $\rho$ contains no other element of $H$, else shorten the path and redefine $w_3$.  If $w_1$ lies on $\rho$ then we can swap $w_1$ and $w_2$ to get the desired result.

According to our assumption that the result is false, at least one of $\{w_1, w_2\}$ or $\{w_2, w_3\}$ is not a head; assume the former without loss of generality.  This implies that $\pi$ must pass through at least one vertex $v$ which is not an ancestor of $\{w_1, w_2\}$.  If there is more than one such vertex, then choose one which has no distinct descendants on the path $\pi$.  By the construction of $\pi$ we have $v \in \an_{\G} (H) \setminus H$.

Then let $W$ be the set of vertices on $\pi$, and $H^* \equiv \barren_{\G} (W)$.  Since $W$ is $\leftrightarrow$-connected, $H^*$ must be a head, and $\{w_1, w_2, v\} \subseteq H^*$.  Thus we have created a head distinct from $H$, of size at least 3, which is contained in the set of ancestors of $H$.

The assumption we have made implies that we must be able to repeat this process indefinitely, with each head being contained in the ancestors of the previous head.  To see that we never obtain the same head twice, note that there is a non-empty directed path from $v \in H^*$ to $H$; but $H$ is contained within the ancestors of any previous heads in the sequence, so if $H^*$ had appeared before, this would imply that $H^*$ was not barren.

Then since $H$ has a finite set of ancestors, the apparently infinite recursion of distinct heads is a contradiction.
\end{proof}

\begin{definition}
Let $A$ be an ancestral set in an ADMG $\G$, and let $v \in \barren_\G(A)$.  The \emph{Markov blanket} for $v$ in $A$ is the set
\begin{align*}
\mbl(v, A) \equiv \pa_A(\dis_A(v)) \cup (\dis_A(v) \setminus \{v\}).
\end{align*}
In particular, under the ordered local Markov property for $\G$,
\begin{align}\label{eq:ordered-local-markov}
v \indep A \setminus (\mbl(v, A) \cup \{v\}) \,|\, \mbl(v, A).
\end{align}
Note that (\ref{eq:ordered-local-markov}) holds for every $v$ and ancestral set $A$ (with  $v \in \barren_\G(A)$) if and only if the global Markov property for $\G$ holds \citep{richardson:03}.
\end{definition}

\begin{proof}[Proof of Theorem \ref{thm:ingVI}]
($\Leftarrow$).  Suppose that $\G$ contains no heads of size $\geq 3$, and let $1, \ldots, n$ be a topological ordering on the vertices of $\G$.  We will construct a complete, hierarchical and variation independent parametrization of the saturated model, and then show that under the global Markov property for $\G$ it is equivalent to the ingenuous parametrization.

Let $\mathbb{M}_i \subseteq \mathbb{M}$ be the margins which involve only the vertices in $[i] = \{1, \ldots i\}$.  Assume for induction, that $\mathbb{M}_{i-1}$ includes the set $[i-1]$, and these margins and their associated effects are hierarchical, complete and satisfy the ordered decomposability criterion up to this point.  The base case for $i=1$ is trivial.

Now, let the heads involving $i$ contained within $[i]$ be $H_0 = \{i\}, H_1 = \{j_1, i\}, \ldots, H_k = \{j_k, i\}$, where $j_1 < \ldots < j_k < i$ (possibly with $k=0$).  Call the associated tails $T_0, \ldots, T_k$.  We have
\[
\barren_{\G} \left( \dis_{\G} (i) \right) = \{j_k, i\},
\]
 since $\barren_{\G} \left( \dis_{\G} (i) \right)$ is a head, and cannot have size $\geq 3$.  This also implies that $(H_k \cup T_k) \setminus \{i\} = \mbl(i, [i])$, where $\mbl(v, A)$ is the Markov blanket of $v$ in the ancestral set $A$.

Now, since the ordering is topological, $A_k \equiv [i]$ is an ancestral set, and the ordered local Markov property shows that
\begin{align*}
i \indep A_k \setminus (\mbl(i, A_k) \cup \{i\}) \;|\; \mbl(i, A_k),
\end{align*}
so
\begin{align*}
i \indep A_k \setminus (H_k \cup T_k) \;|\; (H_k \cup T_k) \setminus \{i\}.
\end{align*}
Then for all $\{i\} \subseteq C \subseteq A_k$ such that $C \cap \dec_\G(j_k) \neq \emptyset$,
\begin{align*}
\lambda_C^{A_k} &= \lambda_C^{H_k \cup T_k} && \mbox{if } H_k \subseteq C \subseteq H_k \cup T_k\\
\lambda_C^{A_k} &= 0 && \mbox{otherwise},% \{i\} \subset C \nsubseteq H_k \cup T_k,
\end{align*}
where the first equality follows from the independence and Lemma \ref{lem:rbnnat}, and the second from the above independence and Lemma \ref{lem:brindep}.

%Note that these conditions include every set $C$ which contains both $i$ and any descendant of $j_k$, since no descendant of $j_k$ is in $H_k \cup T_k$.  Thus we have created parameters for every subset of $A_k$ which contains some descendant of $j_k$, and shown that the non-zero parameters are equivalent to the ingenuous parameters.

Now set $A_{k-1} = A_k \setminus \dec_{\G} (j_k)$.  Then $A_{k-1}$ is ancestral and contains $i$, so applying the ordered local Markov property again gives for any $\{i\} \subseteq C \subseteq A_{k-1}$ such that $C \cap \dec_\G(j_{k-1}) \neq \emptyset$,
\begin{align*}
\lambda_C^{A_{k-1}} &= \lambda_C^{H_{k-1} \cup T_{k-1}} && \mbox{if } H_{k-1} \subseteq C \subseteq H_{k-1} \cup T_{k-1}\\
\lambda_C^{A_{k-1}} &= 0 && \mbox{otherwise}.
\end{align*}
Continuing this approach gives exactly one parameter for each subset $C$ of $[i]$ containing $i$ and some descendant of any of $j_1, \ldots, j_k$.  Lastly let $A_0 = A_1 \setminus \dec_{\G} (j_1)$.  Then for $\{i\} \subseteq C \subseteq A_0$, 
\begin{align*}
\lambda_C^{A_0} &= \lambda_C^{H_0 \cup T_0} && \mbox{if } \{i\} \subseteq C \subseteq \{i\} \cup T_0\\
\lambda_C^{A_0} &= 0 && \mbox{otherwise.}
\end{align*}

%%% NOTE FROM ROB ON POSSIBLY SIMPLIFYING / EXPLICATING PROOF
%We could, of course, say something like 'associate C with the the largest l such
%that C not a descendant of j_1, ..., j_{l-1}', but I don't know if it would be
%much clearer overall...

Now, add the margins $A_0 \subset \cdots \subset A_k = [i]$; since these all contain $\{i\}$, they are not a subset of any existing margin.  Further, each set $C$ we associate with $A_l$ contains a vertex which is not in $A_{l-1}$.  Thus the addition of these margins and their associated effects keeps our parametrization complete and hierarchical.  Setting $\mathbb{M}_{i} = \mathbb{M}_{i-1} \cup \{A_0, \ldots, A_k\}$, then there are at most two maximal subsets out of the margins up to $A_l$ (being $[i-1]$ and $A_l$); thus $\mathbb{M}_i$ is clearly also ordered decomposable, and so the parameters are variation independent.  

Furthermore we have shown that under the global Markov property for $\G$, these parameters are equal to the ingenuous parameters or are identically zero.  Thus the ingenuous parameters must also be variation independent.

($\Rightarrow$).  Our construction will assume the random variables are binary; the general case is a trivial but tedious extension.  Suppose that $\G$ has a head of size $\geq 3$, and assume for a contradiction that its ingenuous parametrization is variation independent.  Then by Lemma \ref{lem:heads}, there exist two heads $H_1 = \{v_1, v_2\}$ and $H_2 = \{v_2, v_3\}$ such that $\{v_1, v_2, v_3\}$ is barren.  Let $H_3 \equiv \{v_3, v_1\}$ noting that this set may or may not be a head.

Also let $T_i = \tail_{\G} (H_i)$, where if $H_3$ is not a head, this set is taken to be the tail of $H_3$ \emph{if there were} a bidirected arrow between $v_1$ and $v_3$.  Further let $A = \an_{\G} (H)$.

Now choose $\lambda_{C_i}^{B_i} = 0$, where $B_i = \{v_i\} \cup \tail_{\G} (v_i)$ and $\{v_i\} \subseteq C_i \subseteq B_i$; this sets every $v_i$ to be uniform on $\{0,1\}$ for each instantiation of its tail.

Similarly, by choosing $\lambda_{C_1}^{H_1 \cup T_1}(0)$ to be large and positive for each $H_1 \subseteq C_1 \subseteq H_1 \cup T_1$, we can force $v_1$ and $v_2$ to be arbitrarily highly correlated conditional on $T_1$, and therefore conditional on $A$.  We can do the same for $v_2$ and $v_3$, so for any $0 < \epsilon <\tfrac{1}{2}$:
\begin{multicols}{2}
\begin{center}
\begin{tabular}{c|c|cc|}
\multicolumn{2}{c}{}&\multicolumn{2}{c}{$v_1$}\\
\cline{2-4}
&&0&1\\
\cline{2-4}
\multirow{2}{*}{$v_2$}&0&$\frac{1}{2}-\epsilon$&$\epsilon$\\
&1&$\epsilon$&$\frac{1}{2}-\epsilon$\\
\cline{2-4}
\end{tabular}

\begin{tabular}{c|c|cc|}
\multicolumn{2}{c}{}&\multicolumn{2}{c}{$v_2$}\\
\cline{2-4}
&&0&1\\
\cline{2-4}
\multirow{2}{*}{$v_3$}&0&$\frac{1}{2}-\epsilon$&$\epsilon$\\
&1&$\epsilon$&$\frac{1}{2}-\epsilon$\\
\cline{2-4}
\end{tabular},
\end{center}
\end{multicols}
where these tables are understood to show the two-way marginal distributions conditional on each instantiation $\ii_A$ of $A$.

But now either $\lambda_{C_3}^{H_3 \cup T_3} = 0$ by design (because $H_3$ is not a head, and $v_1$ and $v_3$ are independent conditional on their `tail'), or we can choose this to be the case by the assumption of variation independence.  This implies that $v_1$ and $v_3$ are independent conditional on $A$.  Thus
\begin{align*}
\frac{1}{4} &= P(v_1 = 1, v_3 = 0 \,|\, A = \ii_A)\\
 &= P(v_1 = 1, v_2 = 0, v_3 = 0 \,|\, A = \ii_A) + P(v_1 = 1, v_2 = 1, v_3 = 0 \,|\, A = \ii_A)\\
&< P(v_1 = 1, v_2 = 0 \,|\, A = \ii_A) + P(v_2 = 1, v_3 = 0 \,|\, A = \ii_A)\\
&= 2 \epsilon,
\end{align*}
which is a contradiction if $\epsilon < \tfrac{1}{8}$.  Thus the parameters are variation dependent.
\end{proof}

\subsection*{Acknowledgements}

This research was supported by the U.S.\ National Science Foundation grant CNS-0855230 and U.S.\ National Institutes of Health grant R01 AI032475.  
The Netherlands Kinship Panel Study is funded by grant 480-10-009 from the Major Investments Fund of the Netherlands Organisation for Scientific Research (NWO), and by the Netherlands Interdisciplinary Demographic Institute (NIDI), Utrecht University, the University of Amsterdam and Tilburg University. We thank McDonald, Hiu and Tierney for giving us permission to use their flu vaccine data.

Our thanks go to Tam\'as Rudas for helpful discussions, and to Antonio Forcina for discussions and the use of his computer programmes.  Finally we thank two anonymous referees and an associate editor for their thorough reading of an earlier draft, and very useful suggestions.

%------------ Bibliography ----------------
\bibliographystyle{plainnat}
\bibliography{mybib}

\end{document}